\renewcommand\footnotetextcopyrightpermission[1]{}
\definecolor{amaranth}{rgb}{0.9, 0.17, 0.31}
\definecolor{ored}{rgb}{1.0, 0.27, 0.0}
\definecolor{ao}{rgb}{0.0, 0.5, 0.0}
\definecolor{ys}{rgb}{0.5, 0.0, 0.5}
\def\submission{1}
\newcommand{\inappendix}[1]{Section~#1 of the supplementary material}
\newcommand{\inappendix}[1]{Appendix~#1}
\def\omitcomments{1}
\newcommand{\Naama}[1]{}
\newcommand{\Guy}[1]{}
\newcommand{\Yihan}[1]{}
\newcommand{\Youla}[1]{}
\newcommand{\Y}[1]{}
\newcommand{\Hao}[1]{}
\newcommand{\Eric}[1]{}
\newcommand{\Note}[1]{}
\newcommand{\here}[1]{{}}
\newcommand{\hedit}[1]{#1}
\newcommand{\y}[1]{#1}
\newcommand{\er}[1]{#1}
\newcommand{\gb}[1]{#1}
\newcommand{\yh}[1]{{#1}}
\newcommand{\y}[1]{{\color{blue} #1}\normalcolor}
\newcommand{\er}[1]{{\color{amaranth} #1}\normalcolor}
\newcommand{\gb}[1]{{\color{olive} #1}\normalcolor}
\newcommand{\yh}[1]{{\color{ys} #1}\normalcolor}
\newcommand{\Guy}[1]{\noindent
  {\color{blue}{\textbf{Guy: }#1}}}
\newcommand{\Hao}[1]{\noindent
  {\color{ao}{[[\textbf{Hao: }#1]]}}}
\newcommand{\Naama}[1]{\noindent
  {\color{cyan}{\textbf{Naama: }#1}}}
\newcommand{\Yihan}[1]{\noindent
  {\color{ys}{\textbf{Yihan: }#1}}}
\newcommand{\Youla}[1]{\noindent
  {\color{ored}{\textbf{Youla: }#1}}}
\newcommand{\Eric}[1]{\noindent
  {\color{magenta}{\textbf{Eric: }#1}}}
\newcommand{\Note}[1]{\noindent
	{\color{orange}{\textbf{Note: }#1}}}
\newcommand{\Y}[1]{\noindent
  {\color{ored}{#1}}}
  \newcommand{\hedit}[1]{{\color{ao}{#1}}}
\newcommand{\here}[1]{{\bf [[[ #1 ]]]}}
\newcommand{\Approach}{Approach}
\newcommand{\approach}{approach}
\newcommand{\anapproach}{an approach}
\newcommand{\ds}[1]{\texttt{#1}}
\newcommand{\bst}{\ds{NBBST}}
\newcommand{\bbst}{\ds{BST-64}}
\newcommand{\chromatic}{\ds{CT}}
\newcommand{\bchromatic}{\ds{CT-64}}
\newcommand{\vbst}{\ds{VcasBST}}
\newcommand{\vbbst}{\ds{VcasBST-64}}
\newcommand{\vchromatic}{\ds{VcasCT}}
\newcommand{\vbchromatic}{\ds{VcasCT-64}}
\newcommand{\pbst}{\ds{PNB-BST}}
\newcommand{\bpbst}{\ds{PNB-BST}}
\newcommand{\kiwi}{\ds{KiWi}}
\newcommand{\lfca}{\ds{LFCA}}
\newcommand{\kst}{\ds{KST}}
\newcommand{\snaptree}{\ds{SnapTree}}
\newcommand{\epochtree}{\ds{EpochBST}}
\newcommand{\nullv}{null}
\newcommand{\hl}[1]{\colorbox{cyan!30}{#1}}
\newcommand{\CAS}{CAS}
\newcommand{\vCAS}{versioned CAS}
\newcommand{\VCAS}{Versioned CAS}
\newcommand{\solo}{solo}
\newcommand{\Solo}{Solo}
\newcommand{\sololy}{solo}
\newcommand{\tsvalid}{valid}
\newcommand{\tsinvalid}{invalid}
\newcommand{\true}{\texttt{true}}
\newcommand{\false}{\texttt{false}}
\newcommand{\val}{\texttt{val}}
\newcommand{\ts}{\texttt{ts}}
\newcommand{\TBD}{\texttt{TBD}}
\newcommand{\vnext}{\texttt{nextv}}
\newcommand{\vnode}{\texttt{VNode}}
\newcommand{\initTS}{\texttt{initTS}}
\newcommand{\VHead}{\texttt{VHead}}
\newcommand{\op}[1]{\texttt{#1}} % Use this for operation names
\newcommand{\findop}{\op{find}}
\newcommand{\Findop}{\op{Find}}
\newcommand{\insertop}{\op{insert}}
\newcommand{\rangeop}{\op{range}}
\newcommand{\Insertop}{\op{Insert}}
\newcommand{\deleteop}{\op{delete}}
\newcommand{\Deleteop}{\op{Delete}}
\newcommand{\findif}{\op{findif}}
\newcommand{\succr}{\op{succ}}
\newcommand{\multisearch}{\op{multisearch}}
\newcommand{\vrd}{\op{vRead}}
\newcommand{\vcas}{\op{vCAS}}
\newcommand{\takess}{\op{takeSnapshot}}
\newcommand{\readss}{{\op{readSnapshot}}}
\newcommand{\ovrsnap}{\op{OptreadSnapshot}}
\newcommand{\vrsnap}{\op{readSnapshot}}
\newcommand{\OptVCAS}{{\sc OptVerCAS}}
\newcommand{\VerCAS}{{\sc VerCAS}}
\newcommand{\myparagraph}[1]{\textbf{\emph{#1}}}
\newcommand{\rd}{\op{read}}
\newcommand{\cas}{\op{CAS}}
\newcommand{\Snapshot}{Camera}
\newcommand{\snapshot}{camera} % timestamp might be an overloaded term
\newcommand{\complex}{multi-point}
\newcommand{\Complex}{Multi-point}
\newcommand{\var}[1]{\texttt{#1}}
\newcommand{\Search}{\op{Search}}
\newcommand{\RSum}{\op{RangeSum}}
\newcommand{\RSTraverse}{\op{RSTraverse}}
\newcommand{\enqueue}{\op{enqueue}}
\newcommand{\Enqueue}{\op{Enqueue}}
\newcommand{\dequeue}{\op{dequeue}}
\newcommand{\Dequeue}{\op{Dequeue}}
\newcommand{\Clean}{\mbox{\sc Clean}}
\newcommand{\Mark}{\mbox{\sc Mark}}
\newcommand{\flag}{\mbox{\sc Flag}}
\newcommand{\NBBST}{\mbox{\sc NBBST}}
\newcommand{\MSQ}{\mbox{\sc MS-Queue}}
\newcommand{\VMSQ}{\mbox{\sc Ver-Queue}}
\newcommand{\VBST}{\mbox{\sc Ver-BST}}
\newcommand{\scan}{\op{scan}}
\newcommand{\peekHeadandTail}{\mbox{\tt peekEndPoints}}
\newcommand{\PeekHeadandTail}{\mbox{\tt PeekEndPoints}}
\newtheorem{theorem}{Theorem}
\newtheorem{lemma}[theorem]{Lemma}
\newtheorem{claim}[theorem]{Claim}
\newtheorem{observation}[theorem]{Observation}
\newtheorem{proposition}[theorem]{Proposition}
\theoremstyle{definition}
\newtheorem{definition}[theorem]{Definition}
\newtheorem{construction}[theorem]{Construction}
\newcounter{results}
\newcommand{\hide}[1]{}
\newcommand{\remove}[1]{}
\def\StartLineAt#1{\lstset{firstnumber=#1}}
\begin{document}

% \title[Constant-Time Lazy Snapshots Supporting
% Complex Queries on Concurrent Data Structures]
% {Constant-Time Lazy Snapshots Supporting\\
%   Complex Queries on Concurrent Data Structures}

\title[Constant-Time Snapshots with Applications]
{Constant-Time Snapshots with Applications to
  Concurrent Data Structures}

%%
%% The "author" command and its associated commands are used to define
%% the authors and their affiliations.
%% Of note is the shared affiliation of the first two authors, and the
%% "authornote" and "authornotemark" commands
%% used to denote shared contribution to the research.
\author{Yuanhao Wei}
\email{yuanhao1@cs.cmu.edu}
\affiliation{%
  \institution{Carnegie Mellon University}
  \city{Pittsburgh}
  \state{PA}
  \country{USA}
}

\author{Naama Ben-David}
\email{bendavidn@vmware.com}
\affiliation{%
  \institution{VMware Research}
  \city{Palo Alto}
  \state{PA}
  \country{USA}
}

\author{Guy E. Blelloch}
\email{guyb@cs.cmu.edu}
\affiliation{%
  \institution{Carnegie Mellon University}
  \city{Pittsburgh}
  \state{PA}
  \country{USA}
}

\author{Panagiota Fatourou}
\email{faturu@csd.uoc.gr}
\affiliation{%
  \institution{FORTH ICS and University of Crete}
  \country{Greece}}

\author{Eric Ruppert}
\email{ruppert@cse.yorku.ca}
\affiliation{%
  \institution{York University}
  \city{Toronto}
  \state{ON}
  \country{Canada}
}

\author{Yihan Sun}
\email{yihans@cs.ucr.edu}
\affiliation{%
  \institution{University of California, Riverside}
  \city{Riverside}
  \state{CA}
  \country{USA}
}

%%
%% By default, the full list of authors will be used in the page
%% headers. Often, this list is too long, and will overlap
%% other information printed in the page headers. This command allows
%% the author to define a more concise list
%% of authors' names for this purpose.
%\renewcommand{\shortauthors}{Trovato and Tobin, et al.}
\renewcommand{\shortauthors}{Y. Wei, N. Ben-David, G.E. Blelloch, P. Fatourou, E. Ruppert, and Y. Sun}

% !TEX root =  main.tex

\begin{abstract}
	We present %an algorithm
	\gb{\anapproach{}} for efficiently taking snapshots of the state of a collection of CAS objects.
	\er{Taking a snapshot allows later operations to read the value that each CAS object had at the time
	the snapshot was taken.  Taking a snapshot requires a constant number of steps and returns a handle to the snapshot.
	Reading a snapshotted value of an individual CAS object using this handle is wait-free, taking time
	proportional to the number of successful CASes on the object since the snapshot was taken.}
	\er{Our fast, flexible snapshots yield simple, efficient implementations of} atomic \y{multi-point}  queries on concurrent data structures built from CAS objects.
	For example, in a search tree where child pointers are updated using CAS,
once a snapshot is taken,
one can atomically search for ranges of keys, find the first key that matches some criteria, or check if a collection of keys are all present, simply
%\y{using the snapshot handle to}
%	\er{traversing \y{(part of)} a snapshot of the tree}. 
by running a standard sequential algorithm on a snapshot of the tree.
%Updates can proceed concurrently with no delay.

To evaluate the  performance of our approach, we %implement the mechanism %in C++ and Java, and
apply it to two %binary-
search trees, one balanced and one not.
	\y{Experiments} show that the overhead \er{of supporting snapshots}
	%of our transformation
is low across a variety of workloads.
%    We compare range queries on the trees using our snapshots
	%Using our snapshot mechanism, we add range queries to the two tree data structures and compare
%	with several state-of-the-art \y{concurrent} data structures supporting atomic range queries. %, including KIWI, LFCA, PNB-BST, and SnapTree.
%, as well as a recent approach that harnesses epoch based collection. \Hao{I'm worried about calling the epoch based approach ``special purpose'' and offending someone}
Moreover, in almost all cases, range queries on the trees built from our snapshots perform as well as or better than 
state-of-the-art concurrent data structures that support atomic range queries.
%	We also implement find-first and multi-searches, which these other implementations do not \y{efficiently} support, and report performance relative to non-atomic versions. \here{Y: I changed this, as you can always get a snapshot of the entire data structure, for instance by calling rangequery($-\infty,+\infty$) and then use the result to answer any query you like. So, it is not that previous works do not support these queries, it is rather
%	that they do not support them efficiently.}
%\Yihan{We could mention, perhaps in the intro, that for many cases the queries do not really need the whole state of the memory, but just a few points (as related to the new term ``multi-point'' we use), so that saving the time is very useful.}
\end{abstract}

%\fancyhead{}
\maketitle
\pagestyle{plain}

%o\newcommand{\y}[1]{{\color{blue} #1}\normalcolor}
%newcommand{\y}[1]{{\color{blue} #1}\normalcolor}
% !TEX root =  main.tex

\section{Introduction}
\label{sec:intro}

The widespread use of multiprocessor machines for large-scale
computations has underscored the importance of efficient concurrent
data structures.  Unsurprisingly, there has been significant work in
recent years on designing practical lock-free and wait-free data
structures to meet this demand and guarantee system-wide
progress.
%~\cite{AB18,BBBGHKS17,BBSW19,BCCO10,EFRB10,ellen2014amortized,FNP17,FPR19}.
%\Eric{This is a bizarre set of citations to support the preceding sentence.  It is only a few papers about snapshots/range queries (but the sentence is about data structures in general).  I suggest removing citations here.}
Many
applications that use concurrent data structures require querying
large portions or multiple parts of the data structure. For example,
one may want to filter all elements by a certain property, perform
range queries, or simultaneously query multiple locations. However,
 such ``\complex{}'' queries have been notoriously hard to
implement efficiently.
Although it is easy to support \complex\ queries by locking large or multiple parts of the data structure,
this approach lacks parallelism.  Some concurrent data structures resort to \complex\ queries that provide no
guarantee of atomicity \cite{javaweak,NGPT15b}.
%\Eric{I added \cite{NGPT15b}, which is a general discussion of weaker consistency guarantees for iterators. The citation of \cite{javaweak} is vague.  That's a huge set of documentation.  Can we say more specifically where it talks about non-linearizable queries? Maybe add PPOPP 2019 paper about non-linearizable snapshots}
%\Hao{I updated the Java citation to link to the page about weakly consistent iterators.}
Other efforts have implemented specific queries
(e.g., range queries, iterators)~\cite{FPR19,AB18,BA12,Cha17,PT13,ALRS17,FNP17}.
% , or constructed
% iterators that, while more general, \Eric{than what?} may interfere with or are slowed
% down by other operations~\cite{PT13,ALRS17,FNP17}.
%However, no general practical solution the allows for linearizable queries of the global state has been proposed.
% \here{Youla: We need to come back to this point.}

%The problem boils down to achieving an efficient \emph{snapshot} of
%the data structure.
\er{A general way to support \yh{efficient \complex\ queries} is to
provide the ability to take a \emph{snapshot} of the data structure.
Conceptually, a snapshot saves a read-only version of the state of the data structure
at a single point in
time~\cite{AADGMS93,Fic05,EllenFR07,FatourouFR06,FatourouK07,Jayanti05,BrodskyF04,Anderson94,perelman2010maintaining,Attiya11,Kumar14,BBSW19,FernandesC11}.
% \Eric{I moved this citation from end of sentence 3 of this parag, because they are about snapshots, not implementing queries from snapshots.} \Hao{looks good to me}
\Complex\ queries can be performed by
taking a snapshot and reading the necessary parts of that version to answer the query, while updates run concurrently.}
%Snapshots are also extremely useful in
%many applications since they capture the state of the data structure
%at a single point in time, allowing a sequential query to proceed on
%that fixed state while updates run
%concurrently~\cite{AADGMS93,Fic05,EllenFR07,FatourouFR06,FatourouK07,Jayanti05,BrodskyF04,Anderson94,perelman2010maintaining,Attiya11,Kumar14,BBSW19,FernandesC11}.
Snapshots are also used %also have applications
in database systems for
multiversioning and recovery~
~\cite{Reed78,BG83,papadimitriou1984concurrency,Postgres12,SQL13,neumann2015fast,Wu17},
and in persistent sequential data
structures~\cite{sarnak1986planar,driscoll1989making,DST91}.
%Snapshots, however, come at a cost.
\er{However, known approaches for
taking snapshots either limit the programming model (e.g. purely
functional~\cite{dickerson2020adapting, BBSW19}), use locks with no \yh{progress guarantees}~\cite{BG83,neumann2015fast,Kumar14}),
%\Eric{Replace "known time bounds" by "progress guarantees"?}
%\Yihan{changed to ``progress guarantees'' based on Eric's suggestion.}
% on time
%(e.g. \cite{BG83,neumann2015fast,Kumar14}),
or are lock- or wait-free but
have large running times
%fully
%sequentialize updates~\cite{FernandesC11}, or are wait-free but
%snapshots take time proportional to the size of the
%memory
~\cite{AADGMS93,FatourouK07,FernandesC11,Jayanti05,BrodskyF04}.}
%\here{Y: Is this true for the work on partial snapshots? I am concerned that some
%of our claims may make some reviewers unhappy given that we
%use the word snasphot for something else than the standard definition at the PODC community.}
%Taking time proportional to the size of memory is particularly problematic if only a small portion of the snapshot is examined.

%In this paper, w
We present an efficient algorithm to take snapshots of the state of a collection
of compare\&swap (CAS) objects\footnote{
\y{A CAS object $V$ stores a value and supports two atomic {\em operations}.
$V$.\rd() returns the value of $V$.
$V$.CAS($\mathit{old},\mathit{new}$)  compares the
value of $V$ to $\mathit{old}$ and if they are equal, it
changes the value of $V$ to ${\mathit{new}}$ and returns \true; otherwise,
it returns \false\ without changing $V$'s value.}}.
Our interface is based on creating a \emph{\snapshot} object  that
has a collection of associated \emph{\vCAS\ objects}, which
%When (versioned) CAS objects (vCAS objects) are created, they are associated with $S$.
%The vCAS objects
\er{support read and CAS instructions like normal CAS objects.
The \snapshot\ object supports a single operation \takess{} that
takes a snapshot of
the values stored in all the associated \vCAS\ objects, returning
a handle to the snapshot.}
Given a versioned CAS object $O$ and a snapshot handle \ts{} \er{obtained from the associated \snapshot\ object},
$O$.\readss{}(\ts) returns the value $O$ had at the time the handle was acquired by a \takess.
%the snapshot was taken, i.e., the handle \ts{} was acquired.
\er{New \vCAS\ objects   can be associated with an existing \snapshot\ object,
so our construction is applicable to dynamically-sized data structures.}
%\Yihan{Alternative wording: When a new \vCAS\ object is allocated, it associated with an existing \snapshot\ object,
%and its state could be captured by snapshots taken from this \snapshot{} object.}
%\here{Y: I changed the notation to make it consistent with what
%we use for camera objects, vCAS objects, and handles
%in Definition 1.}

\er{Our interface is more flexible than the one traditionally used for a \emph{snapshot object}
\cite{AADGMS93}, which stores an array and provides \emph{update} operations that write to individual
components and \emph{scan} operations that return the state of the entire array.
Instead of creating a copy of the state of the entire shared memory in the local memory of a process,
our \takess\ simply makes it possible for a process
to later read  \emph{only} the memory locations it needs from shared memory, knowing that the collection of all such reads will be atomic.}
%This enables choosing the \CAS\ objects to scan in a dynamic way as you execute the multi-query operation.
%To the best of our knowledge, this is the first work which provides a generic approach for taking this kind of
%flexible snapshots.
Although partial snapshot objects~\cite{AGR08,imbs2012help} allow scans of part of the array, they require the set of locations to be specified in advance, whereas our approach allows the locations to be chosen dynamically as the query is executed.
%\here{We may want to rethink and discuss this part again.}
% \Guy{Should we add something like: Our interface is similar to multiversioning interfaces in
% transactional processing~\cite{FernandesC11,Reed78,BG83,papadimitriou1984concurrency,Postgres12,SQL13,neumann2015fast,Wu17}, but here we do not tie it to transactions.}

Our algorithm has the following important properties.
%\here{We decided to use algorithm when we talk about the algorithm and approach or mechanism when we talk about tranforming a DS to support range-queries.
%We have to make changes as we are working on the paper based on this rule.}

\begin{enumerate}[topsep=1.5pt, partopsep=0pt]
\item
\er{Taking} \y{a snapshot of the current state and returning a handle to it} takes constant time
(i.e., a constant number of instructions).

\item A CAS or read \er{of the current state of a \vCAS\ object} %current state
	takes constant time.  %This implies that
		\y{Therefore,} \er{adding snapshots to a CAS-based data structure preserves the data structure's asymptotic time bounds.}
  % Therefore, lock-freedom and
  % wait-freedom are preserved.   Furthermore operations are never delayed by other
  %  processes reading a snapshot.

\item Reading the value of a \vCAS\ object from a snapshot takes time proportional
  to the number of successful CAS operations on the object since the snapshot.
		\y{Thus, a}ll reads are \y{{\em wait-free} (i.e., every  read is completed within
		a finite number of \er{instructions}.)}

	\item The algorithm is implemented using single-word read and \CAS, which are supported by modern architectures.
	%\footnote{
	%		\y{We consider an asynchronous shared-memory system where
%		processes communicate by accessing shared \emph{base objects}.}}.
It does, however, require an unbounded counter.
\end{enumerate}
\hedit{We know of no previous general mechanism for snapshotting the state of memory that satisfies even the first two properties.}
% \here{Y: I am concerned that the above claim may be too strong.}
%\here{Eric:  I removed the statement about partial snaps being more efficient, because
%I think the earlier parag comparing our interface to snapshot objects essentially says this already}
%If only a portion of
%the snapshotted state needs to be examined, the constant time bounds can
%be much faster than previous results .

%\Eric{Note: I updated ``timestamp'' to ``handle'' when talking about the spec of snapshots (for consistency with later sections) but since the next sentence mentions that we use timestamps as the handle in our implementation, I think it is okay to then use ``timestamp'' when talking about our particular implementation of \takess, etc.}

%\Yihan{We are mixing the use of a ``CAS'' to the data structure
%  (replaced by our vCAS), and real ``CAS'' instructions, which is a
%  bit confusing. Should we use a different font for the CAS
%  instruction (e.g., we use \cas{} in the algorithm description)?}
%\Guy{I think weI now address this  in the ``Our interface'' paragraph.}

Similarly to previous work~\cite{Reed78,BG83,sarnak1986planar,driscoll1989making,Kumar14,neumann2015fast,Wu17},
we\linebreak use a version
list for each CAS object.
% \y{each %mutable location---in our case
% CAS object}.
% \Hao{The current sentence does sound more clunky, but the previous work we cite keeps versoin lists for mutable locations rather than CAS objects.}
The list has one node per update
(successful CAS) on the object.
Each node contains the value stored by the
update and a \emph{timestamp} indicating when the update occurred.  The list is
ordered by timestamps, most recent first.   The difficulty in
implementing version lists without locks \er{is} the need to add a node
to the version list, read a global timestamp, and save that
timestamp in the node, all atomically.    An important contribution of
our work is the mechanism used  to make these three steps appear atomic.

\myparagraph{Snapshots and \Complex{} Queries.}  Our interface provides a
simple way of converting \y{a concurrent data structure} built out of CAS
objects into one that supports snapshots: simply replace all
CAS objects with \vCAS\ objects that are all associated with a single \snapshot\ object.  If all shared mutable state is stored in the CAS
objects, then taking a snapshot will effectively \er{provide access to} \y{an} atomic copy of
\y{the entire} state of the data structure at the snapshot's linearization point\footnote{\y{We use the standard
definition of \emph{linearizability}~\cite{herlihy1990linearizability}, which roughly states that
every operation must appear to have taken effect atomically at its linearization point, between
its invocation and response.   
%This is formalized along with our proofs in
%\inappendix{\ref{sec:vcas-long-proof}}. 
}}.
After taking a snapshot, a read-only query is free to visit
any part of the data structure state
at its leisure, even as updates proceed concurrently.
\gb{Often, the query can just be a standard sequential query executed on the snapshot.}
% The query is performed as if done atomically at the linearization point of the snapshot.

In Section~\ref{sec:query}, we describe how this can be used for arbitrary
queries on Michael-Scott queues~\cite{MS96}, Harris's
linked-lists~\cite{Harris01}, and two different binary search trees~\cite{EFRB10,BER14}. \Yihan{If we move any of them to the appendix, don't forget to change the sentence here.}
% \hedit{We also provide time bounds for these queries. }
On the binary search trees, for example, one can
support atomic queries for finding the smallest key that matches a
condition, reporting all keys in a range, determining the \er{height} of the
tree, or \yh{multi-searching for a set of keys in the tree.}
\Eric{I deleted "average" before height.  Average height of a tree doesn't make sense.  Average depth of nodes in a tree, or I guess average height of nodes in a tree.}
\hedit{The time complexity of each query is the sequential cost of the query plus the number of \vcas{} operations it is concurrent with.}
%multi-searching to identify if any or all of a set of keys are in the tree.
%We also discuss an example of a queue data structure
%for which a snapshot of the state is not sufficient for linearizable
%queries.
In \er{our supplementary material}
%\inappendix{\ref{app:query}}
we define more precisely when
\complex\ queries are possible using snapshots.

\myparagraph{Optimizations.}
Our algorithm introduces only constant overhead for existing operations, and allows the implementation of wait-free queries.
%, the overhead may be significant in practice.
\er{However,} our construction does introduce a level of indirection: to access the value of a  \vCAS\ object, one must first access a pointer to the head of the version list, which leads to the actual value. This may introduce an extra cache miss per access. We therefore consider
%various approaches to optimize the implementation.
\er{optimizations to avoid this in Section \ref{sec:opt}.}
%The optimizations fall under two categories:
% \Eric{The remainder of this parag is Yihan's shorter version, with some elements of the original, longer version.  Hao said he might want to incorporate some elements of the longer version (commented out in .tex file)}
% The first optimization is to leave some objects \emph{unversioned} if versioning is not required,
% either because the objects are immutable or are not accessed by queries.
% The second is to avoid, in
% certain common cases, a level of indirection introduced by our construction.
% For both optimizations, we describe the conditions that make the optimization safe to apply.
The first optimization removes the versioning for \CAS{} objects that are never accessed by queries.
%there is no need to keep versions for them either.
%
The second optimization applies to concurrent data structures that satisfy our \emph{recorded-once} property and avoids a level of indirection.
Roughly speaking, recorded-once means that each data structure node is the new value of a successful \cas\ at most once.
% When this holds, instead of using a version list, the
% timestamp and pointer to the next older version can be stored directly in the
% nodes themselves.
This allows us to store information for maintaining the version lists (in particular the timestamp and the pointer to the next older version) directly in the nodes themselves, thus removing a level of indirection.
This optimization can be applied to many lock-free data structures.
%, but in some cases simple modifications are required to satisfy the recorded-once property.
% More details are presented in Section \ref{sec:opt}.

\myparagraph{Implementation and Experiments.}
%\Guy{Hao needs to review.}
%\Hao{filled in the details}
%We have implemented our technique and
\er{To study}
\y{%For the purpose of understanding
the overhead of our approach, we  applied it}
to two existing concurrent \y{binary search} trees, one
balanced and one not~\cite{EFRB10,BER14}.
\er{Adding support for snapshots}
 was very easy and required adding fewer than 150 lines of code in C++.
The experiments demonstrate that
the overhead is small. For example, it is about 9\% for
a mix of updates and queries on the current version of the tree.
We also
compare to state-of-the-art data structures that support atomic range queries,
including \kiwi~\cite{BBBGHKS17}, \lfca~\cite{WSJ18}, \pbst{}~\cite{FPR19}, and \snaptree{}~\cite{BCCO10}.
%, and an approach based on epoch based collection~\cite{}.
In almost all cases, \er{our data structure} performs as well as or better than all of these
special-purpose structures even though our approach is general purpose.    Finally,
we implement a variety of other atomic \complex{} queries
%including  successor queries, find-first, and multi-searches,
and show that the overhead compared to non-atomic
implementations, which are correct only when there are no concurrent
updates, is small.  Our implementation uses
%As part of the implementation we have implemented
epoch-based garbage collection~\cite{fraser2004practical}.

% \myparagraph{Memory Reclamation.}
% %Finally we consider the problem of garbage collection.
% A legitimate
% worry when keeping snapshots is the extra memory needed to keep
% multiple versions.  We show that epoch based memory reclamation~\cite{fraser2004practical}
% can be used with our versioning construction, despite the fact that retired nodes
% may remain connected to the data structure via links of the version lists.
% We show that the time and space bounds of epoch based memory reclamation hold for our data structures as well.
\y{
	\myparagraph{Contributions.} In summary, the paper's contributions are:} % the following:
\begin{itemize}[topsep=1.5pt, partopsep=0pt]
	\item \er{A} simple, constant-time % and theoretically efficient
          \approach{} to take a snapshot of \er{a collection of CAS objects}.
	\item \er{A technique} to use snapshots to implement \er{linearizable} \complex{}
          queries on many lock-free  data structures.
		%\Yihan{Since we don't talk much about solo queries any more, this point seems a little vague now. }
	\item \er{Optimizations} that make the technique more practical.
	\item \y{Experiments showing our technique has low overhead, often outperforming
		other state-of-the-art approaches,} \hedit{despite being more general.}%We experimentally compare the technique to other
%          approaches when applied to two different binary search trees.
\end{itemize}
%\Yihan{We could mention that we plan to make implementation publicly available in intro. Currently this promise is somewhere hidden in the paper.}

\Yihan{It seems that we still don't have a summary of time bounds we achieve for ordered sets with range queries. Do we still want them?}
\Hao{I added the following sentence to the intro: "The time complexity each query would be the sequential cost of the query plus the number of \vcas{} operations it is concurrent with." Do we need to go in more detail?}
\here{Youla: Check that the notation is consistent with later sections, cas, read, etc.}

%\input{intro}
%\input{prelim}
% !TEX root =  main.tex

\section{Related Work}

There has been a long history of having
transactions see a snapshot of the state while other
transactions make updates.   This is often referred to as
multiversioning~\cite{Reed78,BG83,papadimitriou1984concurrency,perelman2010maintaining,Postgres12,SQL13,Kumar14,neumann2015fast,Wu17,sun2019supporting,BBSW19,riegel2006lazy,cachopo2006versioned}.
Indeed, the idea of version lists for snapshots dates back to Reed's
thesis on transactions~\cite{Reed78}.   This work is all applied to
transactions and none of it provides the theoretical guarantees described
in this paper.

\er{Implementing a snapshot object} is a classic problem in shared-memory computing
with a long history.  Fich surveyed some of this work~\cite{Fic05}.
A \emph{partial snapshot} object allows operations that take a snapshot of
selected entries of the array instead of the whole array \cite{AGR08, imbs2012help}.
An \emph{$f$-array} \cite{Jay02} is another generalization
of snapshot objects that allows
a query operation that returns the value of a function $f$ applied to a snapshot of
the array.
%We model data structure queries in a similar way.
\er{As mentioned above, snapshot objects have a less flexible interface than our approach to snapshotting.}

%\Eric{Mention \cite{SK16} here?  I looked at the paper again and decided
%it is probably not so closely related.}

\er{We describe in Section \ref{sec:query} how to use our snapshots to support \complex\ queries on a wide variety of data structures.
Previous work has focused on supporting such queries}
%Recent work has shown how to support \complex{} queries
on \emph{specific} data structures.
Bronson {\it et al.}\cite{BCCO10} gave a blocking implementation of AVL trees
that supports a \scan\ operation that returns a snapshot of the whole data structure.
Prokopec {\it et al.}~\cite{PBBO12} gave a \scan\ operation for a hash trie by making the trie persistent:  updates copy the entire branch of nodes that they traverse.
\op{Scan} operations have also been implemented for non-blocking queues \cite{NGPT15a,NGPT15b,prokopec2015snapqueue}
and deques \cite{FNP17}.
Kallimanis and Kanellou \cite{KK15} gave a dynamic graph data structure that allows atomic dynamic traversals of a path.
% In his PhD thesis, Dickerson \cite{} shows how to transform functional data structures into lock-free, concurrent data structures supporting snapshots using a lazy copy-on-write technique.

\emph{Range queries}, which return all keys within a given range, have been studied for various implementations of ordered sets.
Brown and Avni \cite{BA12} gave an obstruction-free range query algorithm for $k$-ary search trees.  Avni, Shavit and Suissa \cite{ASS13} described how to support range
queries on skip lists.
Basin {\it et al.}~\cite{BBBGHKS17} described a concurrent implementation of
a key-value map that supports range queries.  Like our approach, it uses multi-versioning
controlled by a global counter.
%\Eric{I cut the descrip of this data structure to save space}
%It combines three levels of data structures:  a linked list of \emph{chunks},
%a hybrid array/linked list within each chunk, and an index structure that permits
%quicker access to chunks.
%Each chunk also stores an array whose size is linear in the number of processes.
%which is used to synchronize operations.

Fatourou, Papavasileiou and Ruppert~\cite{FPR19} described a
persistent implementation of a binary search tree that permits
wait-free range queries, also based on version lists.
% Whenever a child pointer in the tree is modified,
% the new child contains a $prev$ pointer to the previous child.  The implementation uses
% a global counter, and a copy of this counter's value is stored in nodes when they are
% created.  Following $prev$ pointers allows one to reconstruct old versions of the tree.
Our work borrows some of these ideas, but avoids the cumbersome handshaking and
helping mechanism  they use to synchronize between
scan and update operations.
%Unlike \cite{FPR19}, our work leaves the linearization points of update operations unchanged.
This more streamlined approach makes our approach
easier to generalize to other data structures.
Winblad, Sagonas and Jonsson \cite{WSJ18} also gave a concurrent binary search tree that supports range queries.

Some researchers have also taken steps towards the design of general techniques
for supporting \complex{} queries
that can be applied to classes of data structures, \er{although none are as general as our approach}.
\Yihan{I dropped ``rather than ... data structures'' since we are short of space again.}
%rather than tailored solutions for individual data structures

Petrank and Timnat \cite{PT13} described how to add a non-blocking snapshot operation
to non-blocking data structures such as linked lists and skip lists that implement
a set abstract data type.  Updates and scan operations must coordinate carefully
using auxiliary \emph{snap collector} objects.
Agarwal {\it et al.}~\cite{ALRS17} discussed what properties a data structure must
have in order for this technique to be applied.
Chatterjee \cite{Cha17} adapted Petrank and Timnat's algorithm to produce partial snapshots.
%\Yihan{do we need to discuss why they are not as general as ours?}
%\here{Youla: I think it becomes clear in the intro, where we mention that they are special purpose. No?}

Arbel-Raviv and Brown \cite{AB18} described how to implement range queries for
concurrent set data structures that use epoch-based memory reclamation.
%They use range query to mean an operation that returns all the nodes in a given range.
%\Eric{I deleted previous line because range query already defined a few paragraphs ago.}
They assume
that one can design a traversal algorithm that is guaranteed to visit every item in the given
range that is present in the data structure for the entire lifetime of the traversal.
It is also assumed that updates are linearized at a write or CAS instruction, and
that the location of this instruction is known in advance.

%\Eric{Should we say something about relationship to transactional memory?}

% !TEX root =  main.tex

\section{\VCAS{} Objects}
\label{sec:vcas}

%\Guy{I feel most of the following paragraph is redundant with the
%  intro and the more formal definiition that follows.    We are
%  effectively describing this three times.  Could save 12 or so lines.}
%Recall that
%a \vCAS{} object behaves similarly to a regular \CAS{} object, but also permits reading old versions.
%Each \vCAS\ object is associated with one \snapshot\ object, and supports three operations; \vrd{}, \vcas{} and \readss{}.
%\vrd{} and \vcas{} functions as normal read and CAS primitives, reading and updating  the current value of the object.
%A \snapshot{} object provides a single operation, \takess{}, for taking snapshots of the current state of all its associated \vCAS{} objects. A \takess{} returns a \emph{handle} to a snapshot, which can then be given as input to \readss{} on an associated \vCAS{} object. The \readss{} returns the value the \vCAS{} object had at  the linearization point of the \takess{} .

\remove{
\Naama{Slightly shorter alternative version of the first paragraph of this section:
We begin by defining two new objects, a \emph{\vCAS{}} object and a \emph{\snapshot{}} object.
The \vCAS{} object behaves similarly to a regular \CAS{} object, but ``saves'' previous versions.
Each \vCAS\ object is associated with one \snapshot\ object, and supports three operations; \vrd{}, \vcas{} and \readss{}.
\vrd{} and \vcas{} function as normal read and CAS primitives, reading and updating  the current value of the object.
A \snapshot{} object provides a single operation, \takess{}, for taking snapshots of the current state of all its associated \vCAS{} objects. \takess{} returns a \emph{handle} to a snapshot, which can then be given as input to \readss{} on an associated \vCAS{} object. The \readss{} returns the value of the \vCAS{} object at the time the snapshot handle was produced. }

We begin by defining
two new objects, a \emph{\vCAS{}} object and a \emph{\snapshot{}} object.
%\Eric{We need to settle on the name of a \snapshot\ object and redefine $\backslash$snapshot.  I like camera object.}
The \vCAS{} object behaves similarly to a regular \CAS{} object, but ``saves'' previous versions to support taking snapshots.
Each \vCAS\ object is associated with one \snapshot\ object \Naama{(but multiple \vCAS{} objects can share a single \snapshot{} object.)}.
A \snapshot{} object is a shared object that provides an interface for taking snapshots of the current state of all associated \vCAS{} objects.
%\Naama{I'm not sure we should define the \snapshot{} object as a global object for all \vCAS{} objects. I think the problem is that this somehow requires \vCAS{} objects to be defined together in some scope, so we can talk about `all' of them. I think it would be best to define each \vCAS{} object to be associated with a \snapshot{} object that, when accessed, gives a handle to the value of the \vCAS{} object at that time. Then later say that we associate all \vCAS{} objects in our implementation with the same \snapshot{} object.}
%\Eric{I implemented Naama's suggestion.}
%The \snapshot{} object provides a measure about the time of taking snapshots, which determines the linearization order of the operations on \vCAS{} objects.
%\Hao{I think the previous sentence gets into implementation details.}
%We focus on the setting where there is a single global \snapshot{} object that all \vCAS{} objects are associated with.
The \vCAS{} object supports three operations, \vrd{}, \vcas{} and \readss{}.
As in a regular \CAS{} object, \vrd{} returns the current value of the object and \vcas($oldV$, $newV$) changes the object's value to $newV$ provided that the current value is equal to $oldV$.
The \snapshot{} object supports an operation \takess{} which returns a handle to a snapshot.
%To view the value of a \vCAS{} object $V$ at the time of this snapshot, the user calls $V$.\readss{} with the handle returned by \takess{} as argument.
This handle serves as an identifier that can be used by later \readss\ operations to access
the value an associated \vCAS\ object had at the linearization point of the \takess.
}

%
%\Naama{I think this definition feels a bit informal, and repeats some things stated in the informal description above, almost word for word. If we need space, we should be able to shorten this part.}

We begin with a sequential specification of our objects.

\begin{definition} [\Snapshot{} and \VCAS\ Objects]
\label{def:vcas}
A \emph{\vCAS{}} object stores a \emph{value} and supports three operations, \vrd{}, \vcas{}, and \readss{}.
A \emph{\snapshot{}} object supports a single operation, \takess{}.
\gb{Each \vCAS{} object $O$ is associated with a single \snapshot{}
  object
  %, denoted $S_O$,
   when it is created.}
  %Multiple \vCAS{} objects can be associated with the same \snapshot{} object.
% \Naama{Perhaps we can introduce a notation for showing which camera object a given \vCAS{} object is associated with? Something like, \emph{a \vCAS{} object $O$ is associated with a single \snapshot{} object $C$, denoted $O_C$.} This could help shorten this definition and make it a bit more formal.}
% 	\here{Do this in the camera-ready version.}
% Let $O$ be a \vCAS{} object that is associated with a \snapshot{} object $S$.
%Consider a sequential history of operations on both $O$ and $S = O_S$.
%The behavior of $O$.\vrd{}, $O$.\vcas{}, $O$.\readss{}, and $S$.\takess{} is specified as follows.
Consider a sequential history of operations on a \snapshot{} object $S$ and the set $\Lambda_S$ of \vcas{} objects associated with it.
The behavior of
%$S$.\takess{}, $O$.\vrd{}, $O$.\vcas{}, and $O$.\readss{},
operations on $S$ and $O$
for all $O\in \Lambda_S$, is specified as follows.
  \begin{itemize}[leftmargin=*, topsep=1.5pt]\setlength{\itemsep}{0pt}
    \item An $O$.\vcas{}(\var{oldV}, \var{newV}) attempts to update the value of $O$ to \var{newV} and this update takes place if and only if the current value of $O$ is \var{oldV}.
    If the update is performed, the \vcas{} operation returns \true{} and is \emph{successful}. Otherwise, the \vcas{} returns \false{} and is {\em unsuccessful}.
    \item An $O$.\vrd{}() returns the current value of $O$.
    \item The behavior of %$O$.\readss{} and $S$.\takess{}
	    \y{\readss{} and \takess{}} are specified simultaneously.
		  A precondition of calling $O$'s \readss{}($ts$) operation is that there must have been an earlier \y{$S$.\takess{}()} operation that returned the handle~$ts$.
		  For any \y{$S$.\takess{}()} operation $T$ that returns $ts$ and any $O$.\readss{}($ts$) operation $R$, $R$ must return the value $O$ had when $T$ occurred.
  \end{itemize}
\end{definition}
%\Guy{Need to define the constructor for a vcas object, and have it
%  take as argument the associated camera object.}
Multiple \takess{} operations on a \snapshot\ object $S$ may return the same handle, but Definition \ref{def:vcas} implies that two \takess\ operations can return the same handle $ts$ only if each associated \vCAS{} object has  the same value when these two \takess{} operations occurred.
% This is because
% a subsequent invocation of \readss($t$) on any \vCAS\ object $O$ associated with the snapshot object must return the value that $O$ had at the time of each of the two \takess{} operations.
%The same handle means the same version of the \vCAS{} objects.

\subsection{A Linearizable Implementation}% of \Snapshot{} and \VCAS{} Objects}
\label{sec:vcas-alg}

We give a linearizable implementation of \vCAS{} and \snapshot\ objects,
where \vcas{}, \vrd{} and
\takess{} can all be supported in \y{constant time.}
%a constant number of steps.
%A \readss\ operation is wait-free, and the time it takes is proportional
%to the number of \er{successful \vcas\ operations performed on the \vCAS\ object between the last step of} the \takess\ and \er{the first step of the} \readss.
Our implementation is given in Algorithm~\ref{fig:vcas-alg}.

\BeforeBeginEnvironment{lstlisting}{\begin{mdframed}[style=listingstyle]}
\AfterEndEnvironment{lstlisting}{\end{mdframed}}

\renewcommand{\figurename}{Algorithm}
\begin{figure*}[!th]\small
\vspace{-.1in}
\begin{minipage}[t]{.42\textwidth}
  \begin{lstlisting}[linewidth=.99\columnwidth, numbers=left,frame=none]
class Camera {
  int timestamp;
  Camera() { timestamp = 0; }   @\label{line:ss-con}@
  int takeSnapshot() {
    int ts = timestamp;           @\label{line:read-time}@
    CAS(&timestamp, ts, ts+1);    @\label{line:inc-time}@
    return ts; } }@\codelineskip@
class VNode {
  Value val; VNode* nextv; int ts;
  VNode(Value v, VNode* n){
    val = v; ts = TBD; nextv = n; } };@\codelineskip@
class VersionedCAS {
  VNode* VHead;
  Camera* S;
  VersionedCAS(Value v, Camera* s){            @\label{line:vcas-con}@
    S = s;
    VHead = new VNode(v, NULL);      @\label{line:initialize-VHead}@
    initTS(VHead); }
  void initTS(VNode* n) {
    if(n->ts == TBD) {               @\label{line:init1}@
      int curTS = S->timestamp;       @\label{line:readTS}@
      CAS(&n->ts, TBD, curTS); } }   @\label{line:casTS}@
\end{lstlisting}
\end{minipage}\hspace{.3in}
\begin{minipage}[t]{.52\textwidth}%xleftmargin=5.0ex,
\StartLineAt{31}
\begin{lstlisting}[linewidth=.99\textwidth, numbers=left, frame=none]
  Value readSnapshot(int ts) {
    VNode* node = VHead;                     @\label{line:readss-head}@
    initTS(node);                            @\label{line:readss-initTS}@
    while(node->ts > ts) node = node->nextv; @\label{line:read-while}@
    return node->val; }                      @\label{line:readss-ret}@ @\codelineskip@
  Value vRead() {
    VNode* head = VHead;                  @\label{line:readHead}@
    initTS(head);                         @\label{line:read-initTS}@
    return head->val; } @\codelineskip@
  bool vCAS(Value oldV, Value newV) {
    VNode* head = VHead;                  @\label{line:vcas-head}@
    initTS(head);                         @\label{line:vcas-initTS}@
    if(head->val != oldV) return false;   @\label{line:vcas-false1}@
    if(newV == oldV) return true;         @\label{line:vcas-true1}@
    VNode* newN = new VNode(newV, head);  @\label{line:newnode}@
    if(CAS(&VHead, head, newN)) {         @\label{line:append}@
      initTS(newN);                       @\label{line:initTSnewN}@
      return true; }                      @\label{line:vcas-true2}@
    else {
      delete newN;                        @\label{line:scdelete}@
      initTS(VHead);                      @\label{line:vcas-initTSHead}@
      return false; } } };                   @\label{line:vcas-false2}@
  \end{lstlisting}
\end{minipage}
\vspace{-.1in}
	\caption{Linearizable implementation of a \snapshot{} object and a \vCAS{} object.
	%\Yihan{Adding camera object as part of a vcas object which is consistent with our current description. It seems that only the constructor needs to be updated. Someone may need to double check if it is correct.}
	%where \takess{}, \vrd{} and \vcas{} take constant time.
	}
\label{fig:vcas-alg}
%\vspace{-1in}
\end{figure*}
\renewcommand{\figurename}{Figure}

%\Eric{I moved line 20 in Fig.2 for consistency.}

%The \snapshot{} object is fairly straight-forward to implement.

%\Naama{To address Youla's comment at the end of the camera object description: We now present our implementation of the \snapshot{} and \vCAS{} objects. Since \snapshot{} objects do not interact with one another, and \vCAS{} objects associated with different \snapshot{} objects also do not interact, in our presented algorithm, all \vCAS{} objects are associated with the same \snapshot\ object.}
%\here{Youla: Naama, I am not sure to which comment you are referring to. If this comment is not there anynore, let's remove
%your comment as well, as I am ok with the current version. Otherwise, please specify.}

\myparagraph{The \Snapshot{} Object.}
The \snapshot{} object behaves like a global clock for all \vCAS{} objects associated with~it.
It is implemented as a counter called \var{timestamp} that stores an integer value.
%A \takess\ returns a timestamp as the handle of a snapshot.
%the current timestamp for all \vCAS{} objects.
A \takess{} simply returns the current value $ts$ of variable \var{timestamp}
%(which implements the counter)
as the handle
%recording the current time,
and attempts to increment \var{timestamp} using a \cas.
If this \cas\ fails, it means that another concurrent \takess\ has
incremented the counter, so there
is no need to try again.
%Finally, the \takess{} returns the old timestamp $t$ to be used as a handle.
%as an indicator as the current timestamp.
The handle will be used by future \readss{} operations to find the latest version of any \vCAS\ object that existed when the counter was incremented from $\mathit{ts}$ to $\mathit{ts}+1$.

\myparagraph{The \VCAS{} Object.}
%precondition of the \rd{} operation.
%tries to CompareAndSwap \var{val} from \var{exp} to \var{exp}+1.
%\begin{figure*}
%\begin{lstlisting}[linewidth=.99\columnwidth, numbers=left]
%class Snapshot {
%  int timestamp;
%  // constructor
%  Snapshot() { timestamp = 0; }
%  int takeSnapshot() {
%    int ts = timestamp;   @\label{line:read-time}@
%    CAS(&timestamp, ts, ts+1);
%    return ts; }
%};
%\end{lstlisting}
%\caption{Linearizable implementation of a \snapshot{} object where \takess{} takes constant time.}
%\label{fig:snapshot-alg}
%\end{figure*}
Each \vCAS{} object is implemented as a \er{singly-}linked list (a \emph{version list}) that preserves all earlier values committed by \vcas{} operations, where each version is labeled by a timestamp read from the \snapshot's counter during the \vcas.
%\Eric{I reverted Yihan's edit to previous sentence because there isn't necessarily a unique value in the counter during the \vcas.  We could say its the value at the lin pt of the \vcas, but this is still an operational description of how the algorithms work, and we haven't defined lin pts yet, so I think this wording is clearer.}
The list is ordered with more recent versions
closer to the head of the list.
A regular \vrd{} operation  just returns the version at the head of the list.
%We say this \vCAS{} object is \emph{paired with} the \snapshot{} object, which supports two operations, \rd{} and \compinc{}.
%Every time a snapshot is taken,
%Suppose a \vCAS{} object $O$ is paired with a \snapshot{} object $S$.
A successful \vcas{} adds a node to the head of the list.
\emph{After} the node has been added to the list, the value of the snapshot object's counter is  recorded
as the node's timestamp.
%A \takess{} operation gets the value from the \snapshot{} object which can be later used in a \readss{} operation to find the corresponding version.
A \readss{}($ts$) traverses the version list and returns the value
%written by the last successful \vcas{} with a
in the first node with timestamp at most $ts$.
%\here{Use $ts$ instead of $t$ in the entire section. }

The \vCAS\ object stores a pointer \VHead\ to the last node \y{added to %the front of
the} object's version list.
Each node in this list is of type \vnode\ and stores
\begin{itemize}[leftmargin=*,noitemsep,topsep=0pt]%\setlength{\itemsep}{0pt}
\item a value \val{}, which is immutable once initialized,
\item a timestamp \ts{}, which is the timestamp of the successful \vcas\ that stored \val{} into the object, and
\item a pointer \vnext{} to the next \vnode\ of the list, which contains the next (older) version of the object.
\end{itemize}
The version list essentially stores the history of the object.

\myparagraph{Timestamps.}
We use a special timestamp \TBD{} (to-be-decided) as the default timestamp for any newly-created \vnode.
We note that \TBD{} is not a valid timestamp and must be substituted by a concrete value later, once the \vnode\ has been added to the version list.
\y{When a \vnode\ $x$  is  added to the version list, we call the \initTS{} subroutine
(Line \ref{line:init1}--\ref{line:casTS})
to assign it a  valid timestamp} \er{read from the \snapshot\ object's \var{timestamp} field}.
Once $x$'s timestamp changes from \TBD{} to something valid, it will never change again, because the \cas{} on Line~\ref{line:casTS}  succeeds only if the current value is \TBD{}.
This \initTS{} function can be performed either by the process that added $x$ to the list, or by another process that is trying to help.

%\textbf{Implementing \readss$(ts)$ and \vrd$()$.} The \readss{} function returns the latest version of the \vCAS{} object with timestamp no later than $ts$. It first reads \VHead{} and helps set the timestamp of the it by calling \initTS{}.
%%This ensures that the timestamp of the node being read is valid.
%\readss{} then traverses the list from the head following the \texttt{nextv} pointers until it finds a version with timestamp no later than $ts$, and returns the value of this node.
%The \vrd{} function is a special case of \readss{} which only looks at, helps set the timestamp, and returns the head of the version list.
\myparagraph{Implementing \readss(\var{ts}) and \vrd$()$.} The \\ \readss{} function returns the latest version of the \vCAS{} object with timestamp at most \var{ts}.
It first reads \VHead{} and helps set the timestamp of the \vnode\ that \VHead{} points to by calling \initTS{}.
The \readss{} then traverses the version list by following \texttt{nextv} pointers until it finds a version with timestamp smaller than or equal to \var{ts}, and returns the value in this \vnode.
The \vrd{} function %is a special case of \readss{} which
looks only at \VHead, helps set the timestamp of the \vnode\ that \VHead{} points to, and returns the value in that \vnode.
%It first reads \VHead{} and helps set the timestamp of the first \vnode\ of the version list by calling \initTS{}.
%%This ensures that the timestamp of the node being read is valid.
%The \readss{} then traverses the version list by following \texttt{nextv} pointers until it finds a version with timestamp smaller than or equal to \var{ts}, and returns the value in this \vnode.
%The \vrd{} function is a special case of \readss{} which only looks at \VHead, helps set the timestamp of the first \vnode\ in the version list, and returns the value in that \vnode.

\myparagraph{Implementing \vcas(\var{oldV, newV}).}
%This operation tries to update the value of the \vCAS\ object to \var{newV} if the current value is \var{oldV}.
%It also associates the current timestamp to the updated version with value \var{newV}.
%Previous line not strictly true:  a helper could do this
This operation first reads \VHead{} into a local variable \var{head}.
Then it calls \initTS{} on \var{head} to ensure its timestamp is valid.
If the value in the \vnode\ that \var{head} points to is not \var{oldV}, the \vcas{} operation fails and returns \false{} (Line \ref{line:vcas-false1}).
Otherwise, if \var{oldV} equals \var{newV}, the \vcas{} returns \true{} because nothing needs to be updated.
%(This is not just an optimization that avoids creating another \vnode\ unnecessarily; it is also required for correctness.)
If \var{oldV} and \var{newV} are different, and the \vnode\ that \var{head} points to contains the value \var{oldV}, the algorithm attempts to add a new \vnode\ with value \var{newV} to the version list.
It first allocates a new \vnode\ \var{newN} (Line \ref{line:newnode}) to store \var{newV} and lets it point to \var{head} as its next version. It then attempts to
add \var{newN} to the beginning of the list by swinging the pointer \VHead{} from \var{head} to \var{newN} using a \cas{} (Line \ref{line:append}).
If \yh{successful,} %this \cas\ succeeds, %is successful,
%then \VHead{} has been swung from a node with value \var{oldV} to a new node with value \var{newV}.
it then calls \initTS{} on the new \vnode\ to ensure its timestamp is valid, and returns \true{} to indicate success. %that the \vcas{} succeeded.
Before this call to \initTS\ terminates, a valid timestamp will have been recorded
in the new \vnode, either by this \initTS\ or by another operation helping the \vcas.
%The timestamp associated to this node (and this \vCAS) will be set by either this \initTS{}
%function or another \initTS{} invoked by other operations through helping. %is the timestamp
%associated with the \vcas.
%\here{Y: I commented out the parenthesis in the above paragraph, as I thought it was not of much interest.
%We may want to bring it back in the journal version when the full proof of correctness will be provided.}

If the \cas{} on Line \ref{line:append} fails, then \VHead{} must have changed during the \vcas{} operation.
% \Hao{It would be nice to provide some intuition about why it is correct to return false when the \CAS{} is unsuccessful.}
In this case, the new \vnode\ is not appended to the version list. The algorithm  deallocates the new \vnode\ (Line~\ref{line:scdelete}) and returns \false{}.
An unsuccessful \vcas{} is also responsible for helping the first \vnode\ in the version list acquire a valid timestamp.
%, since other processes may have appended a new \vnode\ to the list without  assigning it a timestamp.
%After its successful \CAS\ on line 34, the \vcas\ calls initTS to
%store a timestamp into the node it appended in the list.
%This timestamp serves as
%the timestamp assigned to the \vcas.
%\Youla{I suggest to use either exp and new everywhere or oldV and newV everywhere.}
%\Yihan{Maybe we should use oldV and newV since ``new'' is a C++ keyword, which would be confusing in code.}

{\myparagraph{Helping.} As mentioned, a \vrd{}, \readss{} and an unsuccessful \vcas{} all
\er{help (by calling \initTS) to ensure that the timestamp of the \vnode\ at the head of the version list
is valid before they return. This}
%ensures that the timestamp of this \vnode\ is valid and
\y{is necessary to overcome the main difficulty in
implementing version lists without locks, i.e., making the following three steps appear atomic:
adding a node to the version list, reading a global timestamp, and recording a valid
timestamp in the node.} % (see also the subsequent discussion about the linearization points).  }
\er{(See also the discussion of correctness, below.)}

\myparagraph{Initialization.}
%\Eric{This section used to say that all \snapshot\ and \vCAS\ objects are created before
%anything else happens.  This was too restrictive.  Please check this new version.}
%We provide constructors to create new \snapshot\ and \vCAS\ objects.
We assume that the constructor \y{(Line~\ref{line:ss-con})} for the \snapshot\ \y{object}
completes before the constructor \y{(Line~\ref{line:vcas-con})} for any
associated \vCAS\ object is invoked.
(In practice, one will often have just one global \snapshot\ object for all \vCAS\ objects used in a data structure.)
%\here{Y: I have removed the first sentence of this paragraph to save space and I added references to the code lines when referring to constructors for the first time.}

%\Hao{Add something about not allowing calls to $O$.\readss{} with a handle to a snapshot that was taken before $O$ was initailized.}
We require, as a precondition of any \readss($ts$) operation on a \vCAS\ object $O$, that
$O$ was created before the \takess\ operation that returned the handle $ts$ was invoked.
In other words, one should not try to read the version of $O$ in a snapshot that was
taken before $O$ existed. When we use \vCAS\ objects to implement a pointer-based data structure (like a tree or linked list),
this constraint will be satisfied naturally:  if we take a snapshot
of the data structure, and then try to traverse a sequence of pointers in it using
\readss\ instructions, we will never find a pointer to $O$ if $O$ did not exist when
the snapshot was taken.
\Guy{Commenting out since somewhat obvious: When we use \vCAS\ objects to implement a pointer-based data structure (like a tree or linked list),
this constraint will be satisfied naturally:  if we take a snapshot
of the data structure, and then try to traverse a sequence of pointers in it using
\readss\ instructions, we will never find a pointer to $O$ if $O$ did not exist when
the snapshot was taken.}

%\Eric{Right now, the association between \vCAS\ objects and \snapshot\ objects exists only
%in the programmer's mind.  Should a \vCAS\ object contain an immutable pointer
%to the \snapshot\ object
%that governs it to make the relationship more explicit?  (This pointer would be passed
%to the constructor for the \vCAS, ensuring that the \snapshot\ must have been created
%before the \vCAS.
%Then we could use this pointer on line \ref{line:readTS} and avoid the need for a global
%\snapshot\ object $S$ in the code.  We could mention that if there is just one global
%\snapshot\ for all \vCAS\ objects, then there is no need to store this field in each
%\vCAS\ object.}
%\here{Y: I have no problem with Eric's suggestion. }
%\Yihan{We may also need to say a snapshot can be taken over a set of vCAS objects if and only if they share the same Camera object.}
%\Naama{I agree with Eric's suggestion.}

%\subsection{Correctness of the Implementation}
%\label{sec:vcas-brief-proof}

%A detailed proof of the following Theorem, which says our implementation is correct and provides time bounds, appears in Appendix \ref{sec:vcas-long-proof}.
%Here, we sketch that proof.
\myparagraph{Correctness.}
\er{Theorem~\ref{thm:vcas} states the algorithm's properties.}

\begin{theorem}
\label{thm:vcas}
\y{Algorithm~\ref{fig:vcas-alg}}
%The implementation in Fig.~\ref{fig:vcas-alg}
is a linearizable implementation of \vCAS{} and \snapshot\ objects.
	The number of instructions performed by \rd{}, \vcas{}, and \takess{} is constant, and
	the number of instructions performed by $O$.\readss($ts$) is proportional to the number of successful $O$.\vcas{} operations that have been assigned timestamps larger than $ts$ (this number is measured at the time the \y{\readss } reads \var{VHead}).
  % \Hao{The number of steps taken by \readss($ts$) is proportional to the number of modifying \vcas{} operations since the last \takess{} operation returning $ts$}
\end{theorem}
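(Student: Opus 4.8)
The complexity claims are essentially read off the code, so the substance is linearizability, which I would prove by assigning explicit linearization points and checking they respect both real-time order and the sequential specification of Definition~\ref{def:vcas}. First I would fix the structural invariants that make timestamps meaningful. The counter is monotonically non-decreasing, and each value $k{+}1$ of \texttt{timestamp} is produced by a unique successful \cas{} on Line~\ref{line:inc-time}; call this event the \emph{tick past $k$}. For \vnode{}s I would show: (i) the \ts{} field changes at most once, from \TBD{} to a concrete value, via the \cas{} on Line~\ref{line:casTS}, and never again (Line~\ref{line:init1}); (ii) a version list is only modified by prepending at \VHead{} (Line~\ref{line:append}), and \val{} and \vnext{} are immutable; (iii) every node except the current head already has a valid \ts{}, since the \vcas{} that prepended its successor first called \initTS{} on it (Line~\ref{line:vcas-initTS}) — which is exactly why \vrd{}, \readss{}, and unsuccessful \vcas{} each \initTS{} the head before using it. The key lemma is \emph{timestamp monotonicity}: along each list, timestamps are non-increasing from head to tail. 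This holds because a node $N$ is timestamped by an \initTS{} whose counter read on Line~\ref{line:readTS} necessarily occurs \emph{after} $N$ is linked (only $N$'s creator holds a pointer to it before linking, and it calls \initTS{} on $N$ only on Line~\ref{line:initTSnewN}), whereas $N$'s predecessor was timestamped before $N$ was linked; monotonicity of the counter finishes it.

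Next I would assign linearization points. A \takess{} returning $ts$ linearizes at the tick past $ts$; this tick lies in the operation's interval, since the read on Line~\ref{line:read-time} precedes it and the operation cannot return until its own \cas{} on Line~\ref{line:inc-time} either performs or observes that tick. A successful \vcas{} creating node $N$ with final timestamp $t$ linearizes at $\max(c_N,\tau_t)$, where $c_N$ is its \cas{} on Line~\ref{line:append} and $\tau_t$ is the tick past $t{-}1$, with ties broken by \VHead{}-\cas{} order. The point of this choice is that $N$'s linearization point falls in the interval during which the counter equals $t$, so it precedes a \takess{} returning $ts$ precisely when $t \le ts$. A \vrd{} (and the value-reading part of an unsuccessful \vcas{}) linearizes at the later of its \VHead{} read and the linearization point of the \vcas{} that created the node it read (or of object creation). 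A \readss($ts$) may be placed anywhere in its interval, since the handle already determines its return value.

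\textbf{Main obstacle.} The delicate step is justifying these non-standard points for successful \vcas{} and for \vrd{}. Because helping and concurrent \takess{}es can set a head node's timestamp \emph{larger} than the counter that held when the node was linked, the naive choice — linearizing a successful \vcas{} at Line~\ref{line:append} and a \vrd{} at its read — is wrong: a snapshot taken after the link but before the timestamp is fixed would then omit a value it should see. The crux is therefore to show that $\max(c_N,\tau_t)$ and the advanced \vrd{} point (a) still lie in their operations' intervals, which uses that each such operation \initTS{}es the relevant head and cannot return until its timestamp is set, forcing the operation to persist past $\tau_t$; and (b) respect list order and the order relative to \takess{} and to the next update, which again rests on monotonicity and on the fact that prepending a successor first fixes the predecessor's timestamp. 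Granting this, legality is routine: by monotonicity the traversal on Line~\ref{line:read-while} returns the value of the node with largest timestamp $\le ts$, which is the value of the last \vcas{} linearized before the \takess{}'s tick, as Definition~\ref{def:vcas} requires; a successful \vcas{} reads its list predecessor as head, whose value is \var{oldV} (Lines~\ref{line:vcas-false1}--\ref{line:append}) and which is the last update before it; an unsuccessful \vcas{} observes a head value $\neq$ \var{oldV} at its point; and a \vrd{} returns the value current at its (possibly advanced) point.

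Finally, the complexity bounds follow by inspection. \takess{}, \vrd{}, and \vcas{} perform a constant number of reads and \cas{}es plus at most a couple of \initTS{} calls, each itself one read and one \cas{}; hence all three run in constant time. And \readss($ts$) executes exactly one iteration of Line~\ref{line:read-while} per node it skips, i.e., per node with timestamp exceeding $ts$ lying ahead of the returned node when Line~\ref{line:readss-head} reads \VHead{} — which, by the structural invariants, is the number of successful \vcas{} operations on $O$ assigned timestamps larger than $ts$ at that instant, as claimed.
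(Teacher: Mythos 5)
Your structural invariants, the tick-based linearization of \takess{}, and the window argument for modifying \vcas{} operations are sound and essentially equivalent to the paper's construction: the paper linearizes a modifying \vcas{} at the counter read on Line~\ref{line:readTS} of the \initTS{} that makes its node's timestamp valid, and that read lies in exactly the same window (counter equal to $t$) as your point $\max(c_N,\tau_t)$; your monotonicity lemma plays the role of the paper's Lemma~\ref{lem:vsclp} and Observation~\ref{obs:ts-valid}. But there is a genuine gap: you never correctly linearize a \vcas{} that returns \false{} on Line~\ref{line:vcas-false2}, i.e., one that \emph{passes} the value check on Line~\ref{line:vcas-false1} but loses the \cas{} race on Line~\ref{line:append}. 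Your rule for ``the value-reading part of an unsuccessful \vcas{}'' --- the later of its \VHead{} read and the linearization point of the \vcas{} that created the node it read --- fails for this case: at that point the object's value \emph{equals} \var{oldV} (that is precisely what the operation verified on Line~\ref{line:vcas-false1}), so the sequential specification of Definition~\ref{def:vcas} would require the operation to succeed and return \true{}, contradicting its \false{} response. Your legality paragraph covers only the Line~\ref{line:vcas-false1} failures (``observes a head value $\neq$ \var{oldV} at its point''), so the Line~\ref{line:vcas-false2} case is silently dropped --- and it is the one the paper calls ``the most subtle case.''

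The paper resolves it by linearizing such an operation $V$ immediately \emph{after} the linearization point of the first \vcas{} $V'$ that changed \var{VHead} after $V$'s read on Line~\ref{line:vcas-head}: since $V'$ read the same head node as $V$, its \var{oldV} equals $V$'s, and after $V'$ the value is $V'$'s \var{newV} $\neq$ \var{oldV}, so $V$'s \false{} is consistent there. Making this well-defined requires work your plan does not anticipate: one must show that $V'$ is itself assigned a linearization point --- its node's timestamp becomes valid either because the node has already been displaced from the head (Lemma~\ref{lem:headInvalid}) or because $V$'s own \initTS{} on \var{VHead} at Line~\ref{line:vcas-initTSHead} validates it --- and that this point falls inside $V$'s interval. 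This is the entire purpose of the otherwise mysterious \initTS{} call in the failure branch, which your proposal never uses. The remainder of your outline (the complexity bounds, and \readss{} correctness via list monotonicity together with the fact that nodes linked after the head read receive timestamps exceeding $ts$ because the counter has already passed $ts$) does match the paper's argument.
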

%\er{A complete proof of this theorem is provided in the supplementary material.}
%It follows from the linearization points mentioned above, and a
%complete proof is given in \inappendix{\ref{sec:vcas-long-proof}}.

% !TEX root =  main.tex

 \er{A complete proof of Theorem~\ref{thm:vcas} appears in \inappendix{\ref{sec:vcas-long-proof}}.
 Here, we just describe the linearization points used in that proof.}
 We say that a timestamp of a \vnode{} is {\em \tsvalid} at some point
 if the \ts{} field is not \TBD\ at that point, and {\em \tsinvalid} otherwise.
%We assign linearization points to Algorithm~\ref{fig:vcas-alg} as follows:

\remove{
\yh{
>>>>>>> Stashed changes
 %We first introduce some useful terminology.
 We will use \var{VHead} and \var{timestamp} for $O$.\var{VHead} and $S$.\var{timestamp} with clear context.
 We say that a timestamp of a \vnode{} is {\em \tsvalid} at some configuration $C$
 if the \ts{} field is not \TBD\ at $C$, and {\em \tsinvalid} otherwise.
<<<<<<< Updated upstream
 %We use the \emph{version list} to refer to the list from the \vnode\ pointed to by \var{VHead} and following the \vnext{} pointers.
 The \emph{head} of the version list is the \vnode\ pointed to by \var{VHead}. Below we list the linearization point for each operation.

% The only way to modify a version list is the \cas\ at Line \ref{line:append},
% %, which swings the \var{VHead} pointer to a new \vnode\ whose \var{nextv} pointer
% %points to the previous head of the version list.
% which adds a new \vnode\ to the version list.
% Before this happens, \initTS\ is called to ensure a valid timestamp in
% the old head of the version list.
% The correctness of \readss\ depends on ensuring
% that the timestamp associated with a value is current
% (i.e., in \var{S.timestamp}) at the linearization point of the \vcas\ that stored the value.
% Hence, we linearize a successful \vcas\ at the time
% when the successfully-stalled timestamp was read from \var{S}.
% Note that a \vnode\ $n$ can appear at the head of the version list
% before the \vcas\ that created $n$ is linearized.
% This is why any other operation that finds the head of a \vnode\ with an invalid timestamp
% calls \initTS\ to help install a valid timestamp in it before proceeding.
% Below we list the linearization point for each operation.
%% This helping  mechanism is crucial in the argument (given in detail in the appendix)
%% that all of the following linearization points are well-defined and within the intervals
%% of their respective operations.
=======
We use \emph{version list} to refer to the list from the \vnode\ pointed to by \var{VHead} and following the \vnext{} pointers.
 The \emph{head} of the version list is the \vnode\ pointed to by \var{VHead}.

 The only way to modify a version list is the \cas\ at Line \ref{line:append},
 %, which swings the \var{VHead} pointer to a new \vnode\ whose \var{nextv} pointer
 %points to the previous head of the version list.
 which adds a new \vnode\ to the version list.
 Before this happens, \initTS\ is called to ensure a valid timestamp in
 the old head of the version list.
 The correctness of \readss\ depends on ensuring
 that the timestamp associated with a value is current
 (i.e., in \var{S.timestamp}) at the linearization point of the \vcas\ that stored the value.
 Hence, we linearize a successful \vcas\ at the time
 when the successfully-stalled timestamp was read from \var{S}.
 Note that a \vnode\ $n$ can appear at the head of the version list
 before the \vcas\ that created $n$ is linearized.
 This is why any other operation that finds the head of a \vnode\ with an invalid timestamp
 calls \initTS\ to help install a valid timestamp in it before proceeding.
 Below we list the linearization point for each operation.
 }
% This helping  mechanism is crucial in the argument (given in detail in the appendix)
% that all of the following linearization points are well-defined and within the intervals
% of their respective operations.
>>>>>>> Stashed changes

 % Defining linearization points of \vcas based on return points are not ideal because incomplete vcas operations should be linearized as well.
}
 \begin{itemize}[leftmargin=*, topsep=1.5pt, partopsep=0pt]
   \item For a \vcas\ %(\var{oldV}, \var{newV})
   operation $V$% is linearized depending on how it executes.

   \begin{itemize}
   \item
   If $V$ performs a successful \cas\ on Line~\ref{line:append} adding a node $n$ to the version list, and $n$'s timestamp eventually becomes valid, then $V$ is linearized on Line~\ref{line:readTS} of the \var{initTS} method that makes $n$'s timestamp valid.
   %Note that this case includes all \vcas\ operations that return \true\ at Line \ref{line:vcas-true2}.
   \item
   Let $h$ be the value of \var{VHead} at Line \ref{line:vcas-head} of a \vcas{} operation.
   If $V$ returns on Line \ref{line:vcas-false1} or \ref{line:vcas-true1}, it is linearized either at Line \ref{line:vcas-head} if $h$'s timestamp is valid at that time, or the first step afterwards that makes $h$'s
    timestamp valid.
   \item
   If $V$ returns \false\ on Line \ref{line:vcas-false2},
   %This is the most subtle case.
   %The return on Line \ref{line:vcas-false2} is only reached when
   then $V$ failed its \cas{} on Line \ref{line:append}.  Thus, some other \vcas\ operation
   changed \var{VHead} after $V$ read it at Line \ref{line:vcas-head}.
   We linearize the \vcas\ immediately after the linearization point of the \vcas\ operation $V'$ that made the \emph{first} such change.
   If several \vcas\ operations that return on Line \ref{line:vcas-false2} are linearized immediately after $V'$, they can be ordered arbitrarily.
   \end{itemize}

   \item For a \vrd{} operation that terminates, let $h$ be the \vnode\ read from \var{VHead} at Line \ref{line:readHead}.
   The \vrd\  is linearized at Line \ref{line:readHead} if $h$'s timestamp is valid at that time, or at the first step afterwards that makes $h$'s timestamp valid.

   \item A \readss{} operation that terminates is linearized at its last step.

   \item For a \takess{} operation $T$ that terminates, let $\mathit{ts}$ be the value read from \var{timestamp} on line \ref{line:read-time}, $T$ is linearized when \var{timestamp} changes from $\mathit{ts}$ to $\mathit{ts}+1$.
 \end{itemize}

\y{
	Intuitively, the correctness of an \yh{$O.$\readss{}} operation
	%(when executed on an object $O$) 
depends on ensuring
that the timestamp associated with a value is current
(i.e., in the \var{timestamp} field of the camera object $S$ associated with $O$) at the linearization point of the \vcas\ that stored the value in $O$.
%So, we linearize a \vcas\ that adds a \vnode\ to the version list at the time
%that the timestamp eventually written into that \vnode\ was read from \var{S.timestamp}.
%This means that there may be a \vnode\ at the head of the version list
%before the \vcas\ that created that \vnode\ is linearized.
\yh{ Hence, we linearize a successful \vcas\ at the time
 when the successfully installed timestamp was read from \var{S}.
 Note that a \vnode\ $n$ can appear at the head of the version list
 before the \vcas\ that created $n$ is linearized.
}
This is why any other operation that finds a \vnode\ with an invalid timestamp
at the head of the version list calls \initTS\ to help install a valid
timestamp in it before proceeding.
This helping mechanism is crucial to prove
that the linearization points described above are well-defined and within the intervals
of their respective operations.
}
\Eric{I like this parag.  Hao said he might also add something about the case where old=new.}
\Yihan{Shortened it a little bit. }

\section{Supporting Linearizable Wait-free Queries}
\label{sec:query}

\begin{table*}
  \centering \small
  \vspace{-.1in}
  \begin{tabular}{llll}
  % after \\: \hline or \cline{col1-col2} \cline{col3-col4} ...
  \bf Original data structure & \bf Operation & \bf Our Time Bounds & \bf Parameters\\
  %\bf &  & \bf using our approach & \\
  \hline
  Michael Scott Queue~\cite{MS96} & \op{i-th}$(i)$: & $O(i + c)$ & $c$: number of dequeues concurrent with\\
  & \op{enqueue}/\op{dequeue}: & same as original & \hspace{0.3cm}the query \\
  \hline
  % Data Structure & Query & Time Complexity & Parameters\\
  % \hline
  Harris Linked List~\cite{Harris01} & \op{range}$(s,e)$: & $O(m + p + c)$ & $m$: number of keys in the linked list\\
  %$c=1$ (\succr{}) or $c=128$ (\succr\texttt{128})\\
  & \multisearch{}$(L)$:& $O(m + p + c)$ & $c$: number of inserts and deletes concurrent with  \\
  % & \op{findfirst}$(predicate)$:& $O(m + p + c)$ & $p$: number of processes \\
  & \op{ith}$(i)$: & $O(i + p + c)$ & \hspace{0.3cm}the query \\
  & \op{insert}/\op{delete}/\op{lookup}: & same as original & \\
  \hline
  % Data Structure & Query & Time Complexity & Parameters\\
  % \hline
  NBBST~\cite{EFRB10} and CT~\cite{BER14} & \op{successor}$(k)$ & $O(h + c)$ & $m$: number of keys in the BST \\
  & \multisearch{}$(L)$: & $O(|L|\times h + c)$ & $h$: height of tree. In the case of CT, $h \in O(\log{}(m) + p)$ \\
  & \op{range}$(s,e)$: & $O(h + K(s, e) + c)$ & $K(s,e)$: number of keys in BST between $[s,e]$\\
  & \op{height}$()$:& $O(m + c)$ & $c$: number of inserts, deletes, rotations concurrent with \\
  & \op{insert}/\op{delete}/\op{lookup}: & same as original & \hspace{0.3cm}the query\\
  \hline
\end{tabular}
  \caption{Time bounds for various operations on concurrent queues, lists, and BSTs using our snapshot \approach{}. The number of processes is denoted by $p$. Parameters such as the number of keys in the data structure are measured at the linearization point of the query operation. \Hao{I defined $c$ to be the number of concurrent inserts and deletes instead of the number of concurrent \vcas{} operations because using the latter, understanding how good our bounds are would require knowing how many \vcas{} operations are performed by each insert/delete.} \Yihan{changed to ``Our time bounds'' to save a line.}}\label{tab:time}\vspace{-.2in}
  %\caption{Time bounds for various query operations on concurrent queues, lists, and BSTs using our snapshot \approach{}. Parameters such as the number of keys in the data structure are measured at the linearization point of the query operation. \Hao{I defined $c$ to be the number of concurrent inserts and deletes instead of the number of concurrent \vcas{} operations because using the latter, understanding how good our bounds are would require knowing how many \vcas{} operations are performed by each insert/delete.}}\label{tab:time}
\end{table*}

We use \vCAS\ objects to extend a large class of concurrent data structures
that are implemented using reads and \cas\ primitives to support linearizable wait-free queries.
Our \approach{} is general enough to
allow transforming many \complex{} read-only operations on a sequential data structure
into linearizable queries on the corresponding concurrent data structure.
To achieve this,
we define the concept of a \emph{\solo{}} query, i.e., a query that only reads the
shared state, and once invoked, is correct if running to completion without any other process taking steps during its execution.
Intuitively, a \solo{} query is one that runs on a ``snapshot'' of the
data structure, and is typically just a standard sequential query.

The \approach\ works as follows.
Each \cas\ or \rd\ on a \CAS\ object is replaced by a \vcas\ or \vrd\
(respectively) on the corresponding \vCAS\ object, all of which are associated with one
\snapshot{} object.
To perform a \solo{} query operation $q$, a process $p$ first executes \takess\ on the \snapshot{} object, to obtain a handle~$ts$.
Then, for any \CAS\ object that $q$ would have accessed in the data structure, $p$ performs \readss($ts$) on the corresponding \vCAS\ object.
Intuitively, \takess{} takes a snapshot of shared state,
and \solo{} queries then run on this snapshot while other threads may be updating concurrently.
\Yihan{Changed ``Op'' and ``q'' to ``q'' and ``p'' to be consistent with later description.}

Not all queries for existing concurrent data structures are solo queries.
Herlihy and Wing~\cite{herlihy1990linearizability} describe a queue implementation in which the linearization order of the enqueue operations depends on future dequeue operations. For that algorithm, no \solo\ query is possible.
However, for most data structures it is straightforward to implement solo queries.
%A thorough treatment of the conditions under which solo queries are sufficient, as well as
%all the formalism for our \approach, and necessary proofs, are provided in the supplementary material.
%Here we give examples of several concurrent data structures that
%support solo queries.  More details on some of them are in the
%supplementary material.
\yh{Here we give examples of several concurrent data structures that
support solo queries. A thorough treatment of the conditions under which solo queries are sufficient,
all the formalism for our \approach, necessary proofs, and more examples, are provided in the supplementary material.
}

\myparagraph{FIFO Queue.}
We first consider Michael and Scott's concurrent queue (MSQ)~\cite{MS96},
which supports atomic enqueue and dequeue, as well as finding the oldest and newest elements.
Our scheme additionally provides an easy atomic implementation of more powerful operations
 such as returning the $i$-th element, or all elements, etc.
The mutable locations in a MSQ consist of a \op{head} pointer,
a \op{tail} pointer, and the \op{next} pointer in each of a linked list of elements,
pointing from oldest to newest.  The \op{head} points indirectly to the oldest remaining element,
and the \op{tail} points to the newest element, or temporarily to one element behind the newest.
The newest element always contains a null next pointer.  After applying our \approach,
all these pointers become \vcas\ objects, and
a \takess{} operation, $op$, will atomically capture
the state of all of them.  Any query can then easily reconstruct the part of the queue state it requires.
For example, the $i$-th  query can start at the \op{head} and follow the list
(calling \readss{} on each node, using the handle returned by $op$)
until it reaches the $i$-th element in the queue.
We note that each \op{next} pointer in the linked list is only successfully updated once,
so each \readss{} of a \op{next} pointer takes constant time.
Therefore, for example, finding the $i$-th element (from the head) in
a queue takes time $O(i + c)$ where $c$ denotes
the number of successful dequeues between the read of the timestamp by $op$ and the read of the \op{head}.

\myparagraph{Sorted Linked List.}  Harris's data
structure~\cite{Harris01} maintains an ordered set as a sorted linked
list (HLL), and supports insertions, deletions, and searches.  Our
\approach\ adds atomic versions of
\complex{} query operations, such as range queries, finding the first
element that satisfies a predicate, or multi-searches (i.e., finding
if all or any of a set of keys is in the list).  To properly implement
concurrent insertions and deletions, HLL marks a node before
splicing it out of the list.  The mark is kept as one bit on
the pointer to the next list node.  Deletes are linearized at the
point the mark is set.  The mutable state is comprised of the
\op{next} pointers of each link, which contains the mark bit.   If
these are versioned, a \takess{} will capture the full state.  A query
can then just follow the snapshotted linked list from the head, using
\readss{} on every node; all marked nodes should be skipped.

\hedit{Time bounds for range query, multi-search and finding the $i$-th element are given in Table \ref{tab:time}.
Each insert or delete performs up to two successful \vcas{} operations and each successful \vcas{} potentially causes the query to traverse an extra version node.
So in the worst case, queries incur an additive cost of $c$.
% This is why queries incur the additive cost $c$.
% This is why there is an additive cost of $c$.
Each query also incurs an additive cost $p$ because it could encounter up to $p$ marked nodes.
% These nodes are not counted by the size parameter $m$.
}
% The keys associated with these nodes are not part of the
 % who's keys are not part of the data structure.
 % which are not part of the data structure.
% The additive cost $c$ is needed because each insert or delete performs up to two \vcas{}

\myparagraph{Binary Search Trees.}  We now consider concurrent binary
search trees (BST). % for maintaining an ordered set.
Many such data structures have been
designed~\cite{EFRB10,BER14,BCCO10,BA12,AB18,BBBGHKS17,WSJ18,pam,blelloch2016just}.
All the BST structures we looked into work with solo queries
allowing for powerful atomic queries of the same type as in HLL (e.g.,
%<<<<<<< HEAD
%range queries, multi-searches, etc., but are potentially much faster since
%they can often visit a small part of the tree.  Furthermore, queries
%can be made on the structure of the tree, such as returning the height
%of the tree.  Here, we consider two such tree structures: the
%non-blocking binary search trees (\bst{}) of Ellen
%et al.~\cite{EFRB10}, and the chromatic tree (\chromatic) of Brown
%et al.~\cite{BER14}.  The second is balanced, the first not.
%We use these
%two trees in our experiments in
%=======
range queries and multi-searches), but potentially much faster since
they can often visit a small part of the tree. Queries on the structure of the tree (e.g., finding its height) can also be made.
% Furthermore queries can be made on the structure of the tree, such as returning the height of the tree.
Here we consider two such trees (which are also used in our experiments in Section~\ref{sec:exp}): the unbalanced
non-blocking binary search trees (\bst{}) of Ellen
et al.~\cite{EFRB10}, and the balanced non-blocking chromatic tree (\chromatic) of Brown
et al.~\cite{BER14}. 
%These
%are the two data structures we use in our experiments in
%Section~\ref{sec:exp}.

The \bst{} data structure is a unbalanced BST with the data stored at the leaves
and the internal nodes storing keys for guiding  searches.
Every insertion involves inserting an internal node and a leaf,
and similarly a delete will remove an internal node and a leaf.
The data structure uses lock-free locks, ``locking'' one or two nodes
on each insertion or deletion.
The locks are implemented by pointing to a descriptor of the ongoing operation, so other threads
can help complete the operation if they encounter a lock.  This makes the data structure lock-free.
The linearization point is at the pointer swing that splices an internal node (along with a child) in or out.
Therefore at any point in time the child pointers of the internal nodes fully define the contents of the tree.
If these child pointers are kept as \vCAS{} objects, then a snapshot will capture the required state.
The queries can ignore the locks, and therefore the lock pointers, although mutable, do not need to be versioned
(discussed further in Section~\ref{sec:opt}).

The chromatic tree (\chromatic{}) is a balanced BST that also stores its data at the leaves.
It is based on a relaxed version of red-black trees, with colors at each node
facilitating rebalancing.
Concurrent updates are managed similarly to the \bst{}.  In particular, updates are linearized at
a single CAS that adds or removes a key.  So, obtaining a snapshot of the tree's child pointers is sufficient to
run \complex\ queries.

\hedit{Any query $q$ on \bst{} or \chromatic{} will take time proportional to the number of nodes it visits
plus the write contention of~$q$ (i.e., the number of \vcas\ operations concurrent with $q$
on memory locations accessed by $q$).
This  assumes that $q$ performs \readss{} on each \vCAS{} object at most once.
This can easily be  ensured by maintaining a local view of the tree and  calling \readss{} only for \vCAS{} objects that are not yet in the local view.
For the bounds in Table \ref{tab:time}, it suffices to show that
the number of \vcas\ operations concurrent with $q$ is at most the number of inserts, deletes and rotations concurrent with $q$.
This is because each \vcas{} is either due to a rotation (only applies to \chromatic{}) or is the linearization point of an  insert or delete.
}

\hedit{Importantly, our snapshot approach maintains the time bounds of all the operations supported by the original data structure. (In the case of \bst{} and \chromatic{}, the original operations would be insert, delete, and lookup).} 
%\input{examples-short}

% !TEX root =  main.tex

\section{Optimizations}
\label{sec:opt}

We now present several ways to optimize our snapshotting \approach\ (and therefore \complex{} queries on such snapshots). While practical, these optimizations are not fully general; for each optimization, we describe when it can be applied. We present these optimizations in terms of a concurrent data structure $D$ to which we add snapshots and use them to run queries from the set $Q$. For ease of notation, we denote by $D'$ the version of $D$ that also supports the queries in $Q$.

\myparagraph{\textbf{Reducing  the Number of \VCAS{} Objects.}}
The first optimization \yh{applies to} cases where the creation of version lists can be avoided.
The optimization is accomplished by leaving some of the \CAS\ objects of $D$ {\em unversioned},
i.e., by not replacing them  with \vcas\ objects.
We can apply the optimization to immutable fields and \CAS\ objects that are never
accessed by any query $Op\in Q$. %, or store distinct values
%(i.e., each value written into such an object is distinct).
% \Hao{Why does this optimization work for objects storing distinct values?}
%are never updated after being initialized.
For example, \hedit{in \NBBST\, the only mutable fields accessed by query operations are the left and right pointers, so all other fields can be left unversioned.}
% For example, the update fields of Nodes in \NBBST\ (i.e. flags and marks) are of this type,
%are safe \CAS\ objects.
% so we do not have to replace them with \vcas\ objects.

\myparagraph{\textbf{Avoiding Indirection.}}
The second optimization applies to
\vcas\ objects \yh{that} store pointers to other nodes.
We assume that in $D$ (and therefore also in $D'$),
every operation accesses nodes of the data structure
through one (or more) immutable entry points (e.g., the pointer to the root in a BST).
%For example, the entry point
%to a binary search tree is the address of the pointer pointing to the root node.

We use a {\em history} to denote a sequence of instructions that can be executed by an algorithm
		starting from its initial state.	
A node is {\em recorded-once} in a history,
if a pointer to it is the \var{newV} parameter of a \emph{successful} \vcas\ (%that could be applied
on any \vCAS\ object)
 at most once.
%%\begin{definition}
%	A node is {\em recorded-once} in a history\footnote{
%		A {\em history} is a sequence of instructions that can be executed by an algorithm
%		starting from its initial state.},
%if a pointer to it is the \var{newV} parameter of a \emph{successful} \vcas\ (%that could be applied
%on any \vCAS\ object)
% at most once.
%\end{definition}
$D'$ is {\em recorded-once}
%(the implementtions obtained from $D$ as described in Section~\ref{}),
if for every history of $D'$,
(1) every node used in the history is recorded-once \hedit{and (2) all \vcas\ operations with the same \var{newV} parameter also have the same \var{oldV} parameter}.
% These two conditions are required to make sure

The optimization requires that $D'$ is a recorded-once implementation
and works as follows.
For each \vCAS\ object $O$ that stores a pointer to a node in $D'$,
instead of creating a new \vnode\ to store the version pointer
and the timestamp, the optimization stores this information directly in the node pointed to by $O$, %that gets pointed to
thus avoiding the level of indirection introduced by \vnode s.
This requires expanding each node object with two extra fields.
In \inappendix{\ref{sec:opt-appendix}},
we show pseudocode for
%Figure~\ref{fig:vcas-direct-alg} shows
the new version of a node
and provide pseudocode for \vrsnap, \vrd, and \vcas\
after applying the optimization.
%after the optimization has been applied.
%We call the optimized implementation of a \vCAS\ object, \OptVCAS\
%and the optimized implementations of \vrsnap, \vrd, and \vcas, \ovrsnap, \ovrd, and \ovcas, respectively.
%(and refer to the algorithm provided in Section~\ref{sec:vcas}, as \VerCAS).
%\Youla{The last sentence should be deleted and the name \VerCAS\ should be introduced
%in Section~\ref{sec:vcas}.}
We call the resulting implementation $D_{opt}$.
%(Note that we do not apply the optimization to \vCAS\ objects implementing the entry points.)
%\Hao{Is it because the entry points are immutable?}

%(we include the node we stop in the version list only if its timestamp is equal to $O$'s timestamp
%and it is not a dummy NULL node).
%Since $O$'s timestamp is greater than or equal to the timestamp of the last Node in $O$'s version list,
%Assumption~\ref{asm:opt} implies that \ovrsnap\ always terminates by returning a Node in $O$'s version list
%(i.e., it never follows the \var{nextv} field of the last node in $O$'s version list).
%\Youla{Note that the last node may have timestamp which is smaller than $O$'s timestamp.
%Take this in consideration when later prove correctness.}

It appears that this optimization can be applied to concurrent data
structures for which at any point in time, every object has at most
one pointer to it.  Examples include tree data structure where
pointers go from parent to child, or singly-linked lists.  However,
this can involve slight modifications to the original concurrent
algorithm.  In particular, if a node is being pointed to by one object
and is being moved to be pointed to by another object then it would be
recorded more than once.  To avoid this,
%and since there are no other pointers to the object when moved,
the object can be copied and a pointer to
the new copy written into the new location.
This modification should be done with care to preserve correctness.
We apply this
transformation in our \bst{} implementation (Section~\ref{sec:implementation}).
% The optimization can also applied
% to some data structures with multiple pointers to an object, and mixed
% with indirection, but we have not yet studied all circumstances in
% which it can be applied.
An outline of correctness and pseudocode for this optimization is
given in Section~\ref{sec:opt-appendix} of the supplementary material.

\remove{
%\Yihan{Can we just drop this paragraph and say that we provide more detailed explanation in the appendix, since we are not showing the code here anyway?}
By inspection of the pseudocode for \VerCAS\ and \OptVCAS, we observe that there is a straightforward analogy
between the code executed on each line of \VerCAS\ and the code executed on the corresponding line of \OptVCAS\ (see Lines 19--55 of the algorithms).
%The only differences between the two algorithms is that the version list of $O$ created by executing \OptVCAS\
%created in \OptVCAS\
%is a list of Nodes, whereas the version list of $O$ created by executing \VerCAS\ %created by \VerCAS\
%is a list of VNodes
%whose \var{val} fields store pointers to Nodes. Therefore, \var{Head} is a pointer to a Node,
%whereas \var{VHead} is a pointer to a VNode, and the same is true for other local pointer variables
%in the different routines that point to records (of type Node or VNode) of the version list of the object each time.
%
This similarity %of the two algorithms,
makes it easy to prove that \OptVCAS\ is linearizable,
by following the same proof technique as for \VerCAS.
%Specifically, we first assign linearization points to operations
%of \OptVCAS\ in a way similar we assign linearization points for \VerCAS\ and then
%prove a sequence of claims (Theorems, Lemmas, etc.) that are similar to the claims
%necessary in Section~\ref{sec:vcas} for proving the correctness of \VerCAS.
}

% !TEX root =  main.tex

\section{Implementation}
\label{sec:implementation}
%\here{Change mechanism to approach. }

We implemented our snapshotting \approach{} in both Java and C++. Using it, we implement snapshottable versions of three existing data structures (see details below).  Our code uses the optimizations discussed in Section~\ref{sec:opt}. %to avoid indirection.
%or repeated timestamps.
To apply our \approach{} on top of an implementation, our code \hedit{initializes} a \snapshot{} object, makes an object \emph{versionable} by adding a
%\op{version} field (\hedit{timestamp and next pointer})
\hedit{timestamp and a version pointer field}
to it, and
%uses \vcas{} objects to store original shared mutable pointers to versionable objects.
replaces the original shared mutable pointers to versionable objects by \vcas{} objects.
For the optimization to avoid indirection, one should also ensure the versionable objects are only recorded once. For each implementation, we use one global \snapshot{} object.

For our implementation in Java, we implemented \hedit{the} four queries in Table \ref{tab:complexqueries} as examples of \complex{} queries.
All these queries simply look up the snapshot of the tree, visiting what is needed.
They are all atomic \hedit{(i.e., linearizable)}.
% working on a snapshot.

\myparagraph{Base Data Structures.} We applied our snapshotting \approach{} to the two BST structures
described in Section~\ref{sec:query}, \bst{} and \chromatic{}, as well as a lock-free unbalanced BST from \cite{AB18}.
%\Hao{I created a new macro for this BST. I'm not sure what's the best name for it }
% The first two data structures were implemented in Java and we used Brown's implementations from his concurrent data structures library~\cite{trevorimpl}.
% Code for the first two data structures was obtained from Brown's Java Lock-Free Data Structure Library~\cite{trevorimpl}.
For the first two data structures, we used Brown's Java implementations~\cite{trevorimpl}.
% The first two data structures were written in Java, and we used Trevor Brown's implementations obtained from here~\cite{trevorimpl}.
% For the first two data structures, we started with Java implementations by Trevor Brown obtained from here~\cite{trevorimpl}.
% \Hao{In the references, I removed Trevor Brown as an author of the library because parts of library were written by other people. However, it's true that the two implementations we use were by Treovro Brown.}
The third was implemented in C++ and provided by the authors of \cite{AB18}.
% \hedit{The original versions of these BSTs do not support multi-point queries.}

% \Hao{In general, I think we should be more clear throughout the paper about which data structures support multi-point queries and which ones don't. Maybe we can say that a data structure does not support multi-point queries unless we explictly say so?}
% \Hao{We already say the rest of paragraph in the experiments.}
% The authors of \cite{AB18} presented several \approach{es} for adding range queries to concurrent data structures. We refer to their most efficient version as \epochtree{} and apply our \approach{} on the same base data structure.

\myparagraph{Batching.}
Previous work has shown that the performance of concurrent BSTs is improved by batching keys
in nodes~\cite{FPR19, WSJ18, BBBGHKS17, BA12}.  We therefore applied the same batching technique
from \pbst{}~\cite{FPR19} and \lfca{}~\cite{WSJ18} to our Java implementations,
%Brown's library~\cite{trevorimpl}
storing up to 64 key-value pairs in each leaf \yh{(more details can be found in \cite{FPR19})}.
%A full description of this technique can be found in \cite{FPR19}.
Our experiments indicate that batching \hedit{often} improves performance both on the original versions and our snapshotted versions.
We did not apply batching in our C++ implementation since it was also not used by the C++ implementation we compare \hedit{to} \cite{AB18}.

\myparagraph{Recorded-Once.}
The recorded-once requirement is satisfied by \chromatic{} and the BST from \cite{AB18}, so the modification was simply as described above.
However, the \bst{} does not satisfy it; the delete operation swings a pointer (via \cas{}) to a node that already exists in the data structure.
To avoid this, our implementation makes a copy of the node
and swings the pointer to this new copy instead.  This requires some extra marking
and helping steps to preserve correctness and lock-freedom.
Then, the modifications were as described above.

\myparagraph{Garbage Collection.}
% To support garbage collection we added a \op{release} operation to the \snapshot{} object, which notifies the garbage collector that the process no longer needs its snapshot.
% The \takess{} releases the previous snapshot so a process can only access one snapshot at a time.
In our Java code we use an epoch-based memory reclamation \cite{fraser2004practical} to disconnect nodes from version lists when no longer needed.
%\here{Y: Provide citation for epoch-based memory reclamation.} % The idea is to set the next pointer to \nullv{} if there is no longer any process pointing to an older version as determined by the \op{release}s.
The Java garbage collector will then collect the old versions. % since there will no longer be any pointers to them.
For C++, we directly use the epoch-based garbage collector in the code from ~\cite{AB18}. % (see below).
%\Hao{What is the (see below) for?}
% \hedit{In Appendix 6 of our supplementary materials, we discuss why epoch based GC can be directly used with our approach.}

\myparagraph{Names.}
We refer to the non-snapshotted Java implementations as \bbst{}, and \bchromatic{}, and our modified snapshotted versions as \vbbst{}, and \vbchromatic{}.  For C++ we refer to our modified snapshotted version as \vbst{}.
% the non-snapshotted, C++ lock-free BST implementation in \cite{AB18} as \cppbst{}, and our modified snapshotted version as \vbst{}.
%\Guy{what about original version?} \Hao{We don't refer to it in our experiments.}
We plan to make our code publicly available via GitHub.

%In the experiments we refer to the non-snapshotted Java implementations as \bbst{}, and \bchromatic{}, and our modified snapshotted versions as \vbbst{}, and \vbchromatic{}.  For C++ we refer to our modified snapshotted version as \vbst{}.
%% the non-snapshotted, C++ lock-free BST implementation in \cite{AB18} as \cppbst{}, and our modified snapshotted version as \vbst{}.
%%\Guy{what about original version?} \Hao{We don't refer to it in our experiments.}
%We plan to make the implementation publicly available via GitHub.

\begin{table*}
  \centering \small
  \vspace{-.1in}
  \begin{tabular}{lll}
  % after \\: \hline or \cline{col1-col2} \cline{col3-col4} ...
  \bf Query & \bf Definition & \bf Parameters in Figure \ref{fig:complex}\\
  \hline
  \rangeop{}$(s,e)$: & All keys in range $[s,e]$ & \texttt{range256}: $e=s+256$\\
  \succr{}$(k,c)$ & The first $c$ key-values with key greater than $k$ & \succr\texttt{1}: $c=1$, or \succr\texttt{128}: $c=128$\\
  %$c=1$ (\succr{}) or $c=128$ (\succr\texttt{128})\\
  \findif{}$(s,e,p)$~\cite{findif}:&The first key-value pair in range $[s,e)$&\findif\texttt{128}: $p(k)=(k~\mathit{mod}~128~\mathit{is}~0)$ \\ %decides if $k$ is a multiply of 128\\
  \multisearch{}$(L)$:& For a list of keys in $L$, return their values (\nullv{} if not found)&\multisearch\texttt{4}: $|L|=4$\\
  \hline
\end{tabular}
  \caption{The \complex{} queries and their parameters we use in experiments. \vspace{-.15in}}\label{tab:complexqueries}
\end{table*}

% !TEX root =  main.tex

\section{Experimental Evaluation}
\label{sec:exp}

\newcommand{\insTrim}[1]{{\includegraphics[height=2.5mm,trim={0 3.5mm 0 3.5mm},clip]{#1}}}

\begin{figure*}
\advance\leftskip-0.5cm
\small
\begin{tabular}{ccc}
  \multicolumn{3}{l}{\insTrim{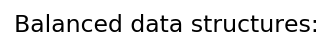}\insTrim{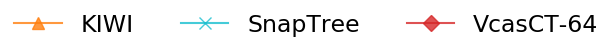}}\\
  \multicolumn{3}{l}{\insTrim{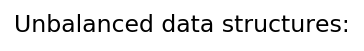}\insTrim{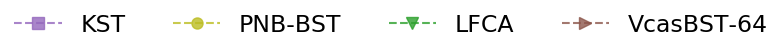}}\\
  \includegraphics[width=60mm]{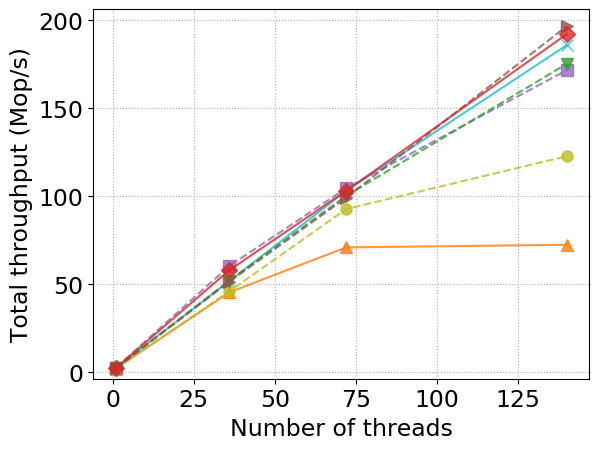}&
  \includegraphics[width=60mm]{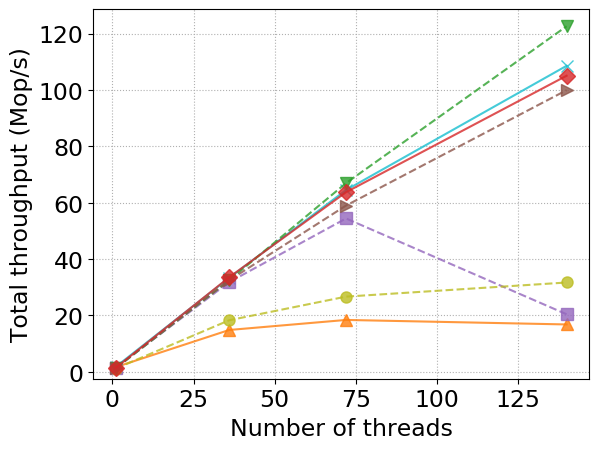}&
  \includegraphics[width=60mm]{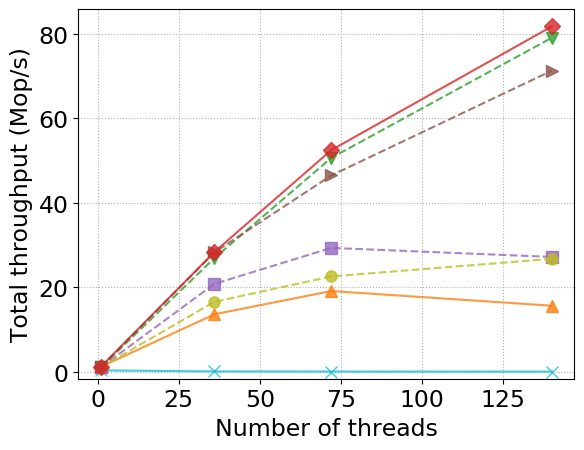}\\

    (a) Lookup heavy - 100K Keys:&
    (b) Update heavy - 100K Keys:&
    (c) Update heavy with RQ - 100K Keys:\\

    3\%ins-2\%del-95\%find-0\%rq&
    30\%ins-20\%del-50\%find-0\%rq&
    30\%ins-20\%del-49\%find-1\%rq-1024size\\

  \includegraphics[width=60mm]{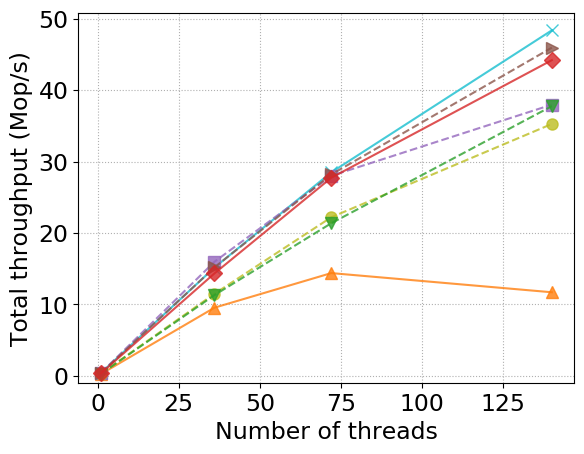}&
  \includegraphics[width=60mm]{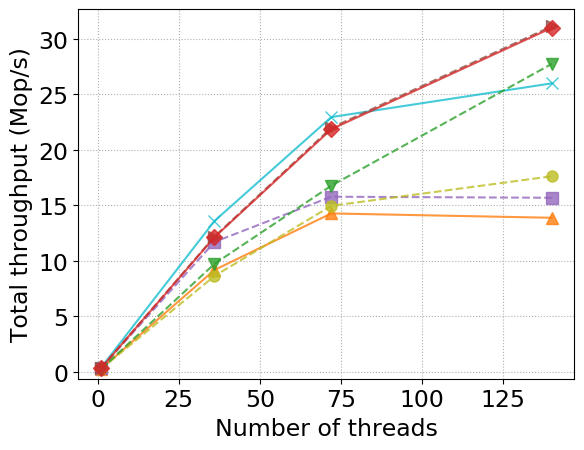}&
  \includegraphics[width=60mm]{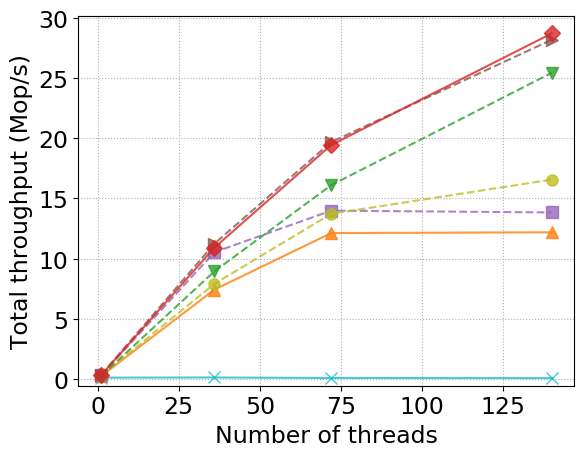}\\

    (d) Lookup heavy - 100M Keys:&
    (e) Update heavy - 100M Keys:&
    (f) Update heavy with RQ - 100M Keys:\\

    3\%ins-2\%del-95\%find-0\%rq&
    30\%ins-20\%del-50\%find-0\%rq&
    30\%ins-20\%del-49\%find-1\%rq-1024size\\

 \includegraphics[width=60mm]{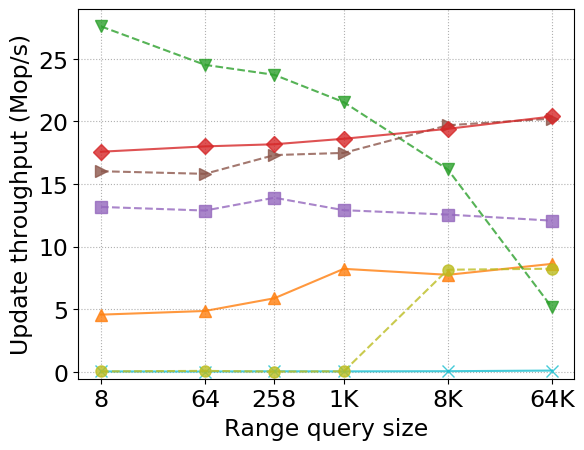}&
 \includegraphics[width=60mm]{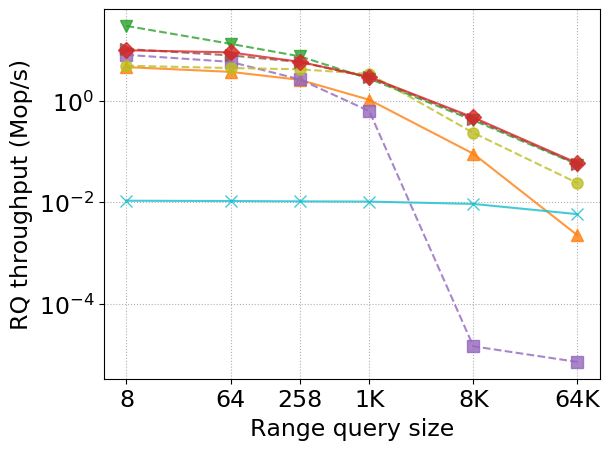}&
 \includegraphics[width=60mm]{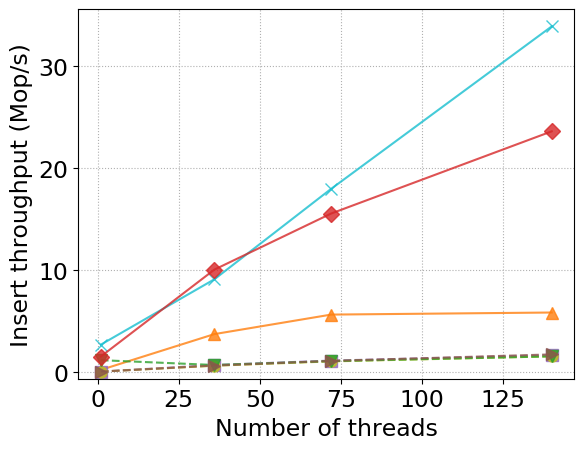}\\

    (g) Update Throughput - 100K Keys:&
    (h) RQ Throughput - 100K Keys:&
    (i) Insert Only, Sorted Sequence\\

    36 Update Threads, 36 RQ Threads&
    36 Update Threads, 36 RQ Threads&
    \\

 \includegraphics[width=60mm]{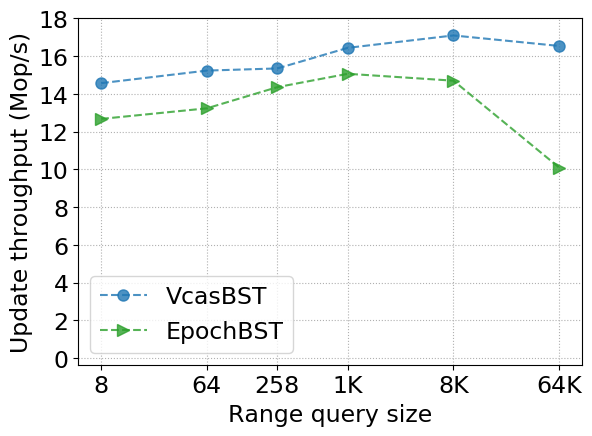}&
 \includegraphics[width=60mm]{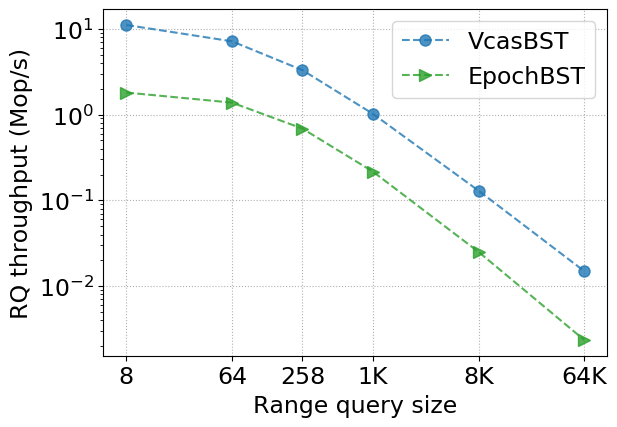}&
 \includegraphics[width=60mm]{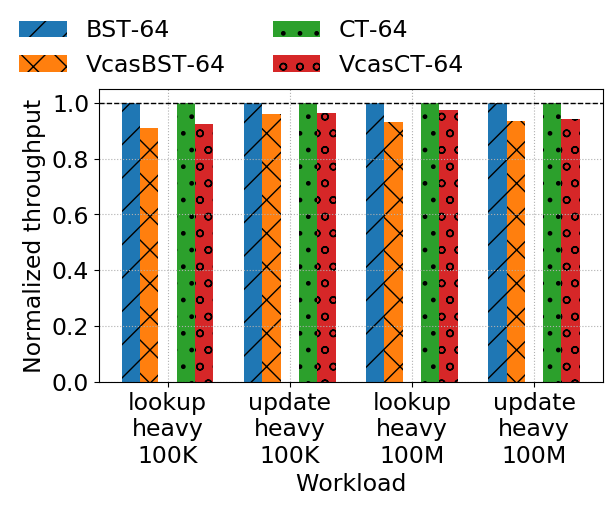}\\

    (j) [C++] Update Throughput - 100K Keys:&
    (k) [C++] RQ Throughput - 100K Keys:&
    (m) Overhead of Vcas, 140 threads,\\

    36 Update Threads, 36 RQ Threads&
    36 Update Threads, 36 RQ Threads&
    measured across various workloads\\
\end{tabular}
\caption{Results of experiments. }
%\Yihan{For a-f, are the throughput numbers the overall throughput, i.e., including all operations of ins, del, find and rq?}
%\Hao{Yeah. I labeled the y-axes 'total throughput' to make this more clear.}
\label{fig:exp}
\end{figure*}

\begin{figure}
\vspace{-.1in}
\includegraphics[width=65mm]{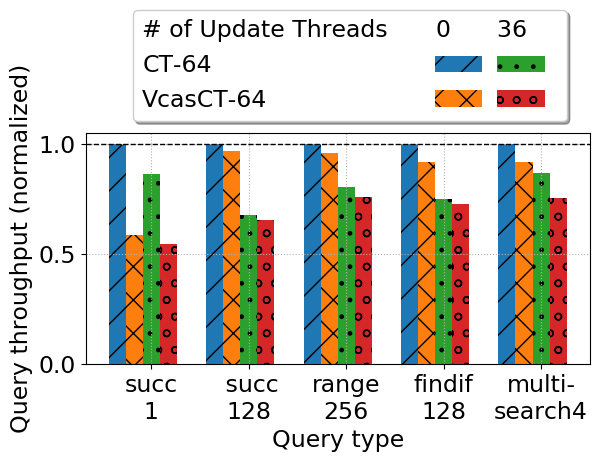}
\vspace{-.18in}
\caption{\small Throughput of atomic queries on \vbchromatic{} compared to non-atomic queries on \bchromatic{}, with or without update threads. Run using 36 query threads on a tree of size 100M.
\vspace{-.18in}
% and either 0 or 36 update threads.
}
\label{fig:complex}
\end{figure} 

%\Yihan{For the names of the implementations: I changed VcasBST (our C++ version) to VcasEBST. This is because we have an Java implementation called VcasBST-64. It looks like VcasBST is the unbatched version of VcasBST-64. However I don't think VcasEBST is a good name... I think we may need to come up with a better namespace for all the data structures.}

%\Naama{Shorter alternative to first paragraph:
In this section, we provide experimental analysis, with two main goals: firstly, to understand the overhead our \approach{} introduces to concurrent data structures that originally did not support \complex{} queries, and secondly, to compare the performance of our \approach{} to state-of-the-art alternatives. That is, we compare our performance to that of concurrent dictionary data structures that support atomic range queries.
%\Eric{I like Naama's version}
\remove{
\y{In this section, we provide our experimental analysis whose
%report on experimental results on our implementation applied to binary search trees,
%as well as comparisons to other concurrent dictionary implementations.
goal is two-fold. Firstly, we aimed at understanding the overhead
incurred when applying our scheme on top of existing concurrent data structures.
Specifically, we study how much the approach slows down existing operations
on the current version, and the overhead of \complex{} queries in our scheme
compared to that of sequential queries.  \here{Y: The second part of the sentence
is not clear. I think we need to explain better what "sequential queries"  are
and how do we actually measure this overhead.}
Secondly, we compare the performance of our scheme to that of state-of-the-art
concurrent dictionary data structures that are specifically designed
to implement updates along with atomic range queries.
%We know of no other concurrent search tree data structures that support atomic versions of the other queries we implement.
\here{Y: I removed the last sentence of this paragraph as we can always take a snapshot and
use it to answer any query (so I think the sentence was not true). }
}
\Yihan{Agree with Youla. Could we say that the other Java implementations we compared to do not support atomic the multipoint queries?}
}

\myparagraph{Other Structures that Support Range Queries.}
We compare with several state-of-the-art dictionary data structures: \snaptree{}~\cite{BCCO10},  \kiwi{}~\cite{BBBGHKS17},  \lfca{}~\cite{WSJ18},  \pbst{}~\cite{FPR19}, \kst{}~\cite{BA12}, and \epochtree{}~\cite{AB18} using code
%Code for these data structures were
provided by their respective authors.
Arbel-Raviv and Brown~\cite{AB18} presented several ways to add range queries to concurrent data structures, implemented in C++.
We use \epochtree{} to refer to their most efficient range queryable lock-free BST.
\hedit{Note that} \epochtree{} and \vbst{} add range queries to the same initial BST.
\remove{
We compare with several state-of-the-art dictionary data structures,
\kiwi{}~\cite{BBBGHKS17}, \lfca{}~\cite{WSJ18}, \pbst{}~\cite{FPR19}, \kst{}~\cite{BA12}, \snaptree{}~\cite{BCCO10}, and \epochtree{}~\cite{AB18}, as mentioned in Section \ref{sec:implementation}.
Code for these data structures were provided by their respective authors. \epochtree{} is in C++.
}
All the other data structures are written in Java.
% Other than \epochtree{}, the data structures we compare with are written in Java.
% All the data structure implement a dictionary ADT with \findop{}, \insertop{}, \deleteop{}, and \rangeop{}.
% Furthermore, all these data structures are lock-free except \snaptree{} which uses fine-grained locking.
They are all lock-free except \snaptree{} which uses fine-grained locking.
We classify \kiwi{}, \snaptree{}, and \vbchromatic{} as \emph{balanced} data structures
because they have logarithmic search time in the absence of contention,
and the others \y{as} \emph{unbalanced}.
For the $k$-ary tree (\kst{}), we use $k = 64$ which was shown to perform well across a variety of workloads \cite{BA12}.
We used batch size 64 for \vbbst{} and \vbchromatic{}, and we did the same for \lfca{} and \pbst{}.
This batch size has been shown to yield good range query performance for \lfca\ and \pbst\ in \cite{FPR19, WSJ18}.
% We used default parameter settings for KIWI with the exception of MAX\_THREADS.
% We found that setting MAX\_THREADS to $\max(32, \# of threads in experiment)$ improved their performance for lower threads counts.

% PNB-BST doesn't do memory management. \Hao{Maybe not important to mention.}

% We use batch size 64 for KST, PNB-BST, LFCA, VcasBST, and VcasChromaticTree for consistency.
% This is the default batch size for LFCA and the recommended batch size for KST.

\myparagraph{Setup.}
Our experiments ran on a 72-core Dell R930 with
4x Intel(R) Xeon(R) E7-8867 v4 (18 cores, 2.4GHz and 45MB L3 cache),
and 1Tbyte memory.  Each core is 2-way hyperthreaded giving 144
hyperthreads. We used
\texttt{numactl -i all} in all experiments, evenly spreading the
memory pages across the sockets in a round-robin fashion.
%Turbo-boost was disabled.
The machine runs Ubuntu 16.04.6 LTS.
The C++ code was compiled with g++ 9.2.1 with \texttt{-O3}.
Jemalloc was used for scalable memory allocation.
For Java, we \yh{used} OpenJDK 11.0.5 with flags \texttt{-server}, \texttt{-Xms300G} and \texttt{-Xmx300G}.
The latter two \y{flags} reduce interference from Java's GC.
%For our Java experiments, we performed 10 runs of 5 seconds each;
%the first 5 are used to warm up the JVM and we report the average of the last 5.
%In C++, we performed 5 runs of 5 seconds each and report the average.
% Our C++ experiments did not require any warm up runs.
%For all experiments,
\y{We} report the average of 5 runs, each of 5 seconds. For Java we also pre-ran 5 runs to warm up the JVM.
\yh{The variance is small in almost all tests.}
%In {\hedit{almost}} all tests, we saw very little variance.
% Performance was very stable across runs and variance was low in almost all our experiments.
% in 95\% the experiments, standard deviation was below x\%.

%\Yihan{Looks like the macro is only around ``almost'' but the text is green until Youla's change ``integer'' in the next paragraph.}

\myparagraph{Workload.}
We vary four key parameters: data structure size $n$, operation mix, range query size $\mathit{rqsize}$, and number of threads. %$\mathit{th}$.
% We run experiments on data structures prefilled up to 100K keys and 100M keys.
In most experiments, we prefill a data structure with either $n=100K$ or $n=100M$ keys.
These sizes show the performance both when fitting and not fitting into the L3 cache.
Keys for each operation, and in the initial tree, are drawn uniformly at random from a range $[1, r]$,
where the integer $r$ is
% (or $[1, r - rqsize]$ in the case of \rangeop{}).
chosen to maintain the initial size of the data structure.
% initial data structure size is maintained throughout the experiment.
For example, for $n=100K$ and a workload with 30\% inserts and 20\% deletes, we use \hedit{$r = n\times(30+20)/30 \approx 166K$.
	% We use this approach
% because data structure size often has a bigger effect on performance.
}
% We use 100K and 100M for the initial size of our data structures.
% \Hao{What does 'in this case' mean in the previous sentence?}
%This  workload %is common in practice %\here{Y: Are there citations to justify this claim?
%If yes, provide some.}
%\Hao{I don't know of any. We can also drop the claim or weaken it.}
%and it
%is useful for testing the balanced trees.
\y{We} perform a mix of operations, %. \y{Such a mix} is
represented by four values, $\mathit{ins}$, $\mathit{del}$, $\mathit{find}$, $\mathit{rq}$, which \er{are} the probabilities for each thread to execute an \insertop{}, \deleteop{}, \findop{}, or \rangeop{}, respectively.
\yh{Unbalanced trees can be balanced in expectation using uniformly random workloads,}
%Workloads with uniform random keys \y{have as a result unbalanced BSTs to become balanced in expectation},
so we also run a workload with keys inserted in sorted order.
%The parameter $\mathit{rqsize}$ specifies the size of each range query.
\Eric{Commented out definition of rqsize, since it is already defined at top of paragraph}
% Workloads with uniform random keys are ideal for unbalanced BSTs because it makes them balanced in expectation.
% T
\Yihan{commented out some words and sentences to save space.}

\myparagraph{Scalability.}
Figures \ref{fig:exp}a-\ref{fig:exp}f show scalability (in Java) under a variety of workloads using thread counts 1, 36, 72, and 140.
% We used $\{1,36,72,140\}$ for number of threads.
Note that in Figures \ref{fig:exp}c and \ref{fig:exp}f, although range queries are only performed with 1\% probability, they still occupy a significant fraction of the execution time.

%In general, our balanced and unbalanced trees perform about the same and scale well across all workloads.
Generally, our two implementations (\vbchromatic{} and \vbbst{}) and \lfca{} have the best (almost-linear) scalability across all workloads.
\lfca{} outperforms our implementation in Figure \ref{fig:exp}b, but it is consistently slower in the 100M-key experiments (Figures \ref{fig:exp}d-f).
Snaptree is competitive with our trees in the absence of range queries, but it has no scalability with range queries due to its lazy copy-on-write mechanism.
\y{Overall, \vbchromatic{} is always among the top three algorithms and in most cases has the best performance. }

%\Yihan{could also say something like: using 140 threads, in most tested cases, the fastest implementation is one of our two implementations, and our two implementations are almost always among the top three. Also none of the other implementations is consistently faster than our implementation. Overall \chromatic{} performs very well. I saw part of the discussion are now in the conclusion paragraph. Maybe adding a paragraph about overall performance or discuss more details in conclusion?}

% - KIWI doesn't scale because each chunk has to store a cache line padded announcement array and get operations scan a global announcement array on every operation
%   - KIWI scales poorly because each lookup has to scan announcement array of size MAX\_THREADS. Puts don't always scan this array, which sometimes gives then faster performance than lookups. each chunk of 1024 keys has to have an array of size MAX\_THREADS*PADDING for announcement, which uses a lot of space.

\myparagraph{Varying Range Query Size.}
%In Figures \ref{fig:exp}g and \ref{fig:exp}h, we show the effect of range query size on update and range query performance for Java implementations.
%A C++ version of the same experiment is shown in Figures \ref{fig:exp}j and \ref{fig:exp}k.
We show the effect of \y{varying} range query size on
%update and range query
performance
in Figures \ref{fig:exp}g and \ref{fig:exp}h (Java), and  Figures \ref{fig:exp}j and \ref{fig:exp}k (C++).
% We \y{experimented with various range query sizes:} $\{8, 64, 256, 1K, 8K, 64K\}$.
36 dedicated threads ran range queries and 36 ran updates.
%We \y{used} 36 dedicated range query threads and 36 dedicated update threads.
Each update thread performs 50\% inserts and 50\% deletes on a data structure initialized to 100K keys.
To better understand the cost of updates and range queries, we plot the throughput of each operation separately.

In Figure \ref{fig:exp}g, \bpbst{} has low update throughput when $\mathit{rqsize}\le 1024$.
This is because their update operations are forced to abort and restart whenever a new range query begins, and thus decreasing range query size lowers update throughput.
% and because range queries are more frequent when $rqsize$ is small.
\kst{} performs decently in most workloads except when each range query covers a significant fraction of the key range,
where the update throughput is below 100 operations per second with $\mathit{rqsize}\ge 8K$.
%For example in Figure \ref{fig:exp}h, their range query throughput drops below 100 operations per second for range queries of size $8K$ and $64K$.
This is because their range query performs a double collect of the desired range and is forced to restart if it sees an update in that range.

Data structures that increment a global timestamp with every range query become bottlenecked by this increment when range queries are frequent.
This applies to our trees as well as \bpbst{}, \kiwi{}, and \epochtree{}.
Consequently, with $\mathit{rqsize}=8$, \lfca{} has 3x faster range queries when compared to our trees (Figure \ref{fig:exp}h).
%\here{Y: To which figure
%are you referring here? I do not see any where LFCA is 3x faster.}
%\Hao{It's referring to Figure \ref{fig:exp}h which is log scale.}
However, \lfca{} avoids using a global timestamp by having update operations help ongoing range query operations.
This helping becomes more frequent and more costly when $\mathit{rqsize}$ is large, as shown in Figure \ref{fig:exp}g.
For $\mathit{rqsize}=64K$ (about a third of the key range), the update throughput of our trees is 4x faster than \lfca{}. Other than \lfca{}, all the other implementations have mostly stable update throughput with varied range size, among which \vbchromatic{} has the best overall performance.
% Also, for range queries of size 1K or larger, the global timestamp is no longer a bottleneck and our range query throughput is as fast as \lfca{}.

\y{Figures} \ref{fig:exp}j and \ref{fig:exp}k \y{compare} \y{the performance of the C++ version of} \vbst{} with \y{that of} \epochtree{}. Range queries on \vbst{} are 4.7--6.3x faster than \epochtree{}.
%To understand why, we measured the number of nodes visited by range queries on \epochtree{} and \vbst{}, and found out that \epochtree{} visits 1.5 to 5.5 times more nodes.
This is because a range query on \epochtree{} has to visit three nodes in the retired list for each concurrent \deleteop{}.
Thus, \epochtree{} visits 1.5--5.5x more nodes in range queries than \vbst{}. For updates, \vbst{} is at least \hedit{7\% faster than \epochtree{}, and up to 64\% faster on the largest range query size.}
\here{Y: It might be better to put the last four figures 2j - 2o and 3 together in one figure, i.e. display all four in a single row
(or if we have space, in two figures, one containing 2j and 2k and another containing 2o and 3).}
\Hao{We can try this if we need space.}

% In Figures \ref{fig:exp}j and \ref{fig:exp}k \unsafebst{} represents... with unsafe range queries which sequentially scans the desired range.
% Interestingly, when the global timestamp is not a bottleneck, our atomic range queries are just as fast as unsafe range queries on \unsafebst{}.
% We profiled our code and found that while we execute significantly more instructions and branches, we incur the same number of cache misses as the unsafe version.

 % - Worst case interaction between puts and range queries is when all puts happen in a range of 10K keys and all range queries access this range. This maximizes the chance they will have to traverse version chains. I'm guessing this would affect LFCA and KIWI more than ours.

\myparagraph{Sorted Workload.}
In Figure \ref{fig:exp}i, we test  the Java implementations under a sorted workload.
We insert an array of sorted keys into an initially empty tree by splitting the array into chunks of size 1024 and placing the chunks on a global work queue; when a thread runs out of work, it grabs a new chunk from the head of the work queue.
%Throughput numbers for this experiment were less stable so we ran 20 trials and took the average of the last 10 trials.
As expected, the  balanced trees, \vbchromatic{}, \kiwi{} and \snaptree{}, outperform the unbalanced ones.
On 140 threads, \snaptree{} is 1.4x faster than \vbchromatic{}, which is in turn 4.1x faster than \kiwi{}.

\myparagraph{Overhead of Our \Approach{}.}
In Figure \ref{fig:exp}m, we compare our Java implementations \vbbst{} and \vbchromatic{} with the original data structures, \bbst{} and \bchromatic{}, using 140 threads.
\Hao{2l looks too much like 21, so I instead used 2m.}
The numbers in Figure \ref{fig:exp}m are normalized to the throughput of \bbst{} and \bchromatic{} to make the overheads easier to read.
The overall overhead of our \approach\ is low, ranging between 2.7\% and 9.1\% depending on the workload.
This overhead includes the time for epoch-based memory management and the cost of using \vcas{} and \vrd{}.
For \vbbst{}, it also includes the extra \y{actions we take to ensure that deletes} satisfy the recorded-once property.

We also \y{measure} the overhead of our \complex{} queries
%such as
\rangeop{}, \succr{}, \findif{}, and \multisearch{}, with parameters shown in Figure \ref{fig:complex}.
%To do so,
We compare the throughputs for \vbchromatic{} with \emph{non-atomic} \complex{} queries on the original \bchromatic{},
%Results are shown in Figure \ref{fig:complex}.
% In Figure \ref{fig:complex}, we show the overhead of \emph{atomic} queries on \vbchromatic{} when compared to \emph{non-atomic} queries on \bchromatic{}.
which  \y{simply run} their sequential algorithms (and are not linearizable).
%For example, non-atomic \multisearch($keys$) simply calls \findop{} for each key in $keys$.
Non-atomic \multisearch{}, for example, simply calls \findop{} for each  key.
%These queries provide a good estimate of the maximum performance we can hope for.
Figure \ref{fig:complex} \y{shows the cost that our \approach\ has to pay to provide} query atomicity.

% In Figure \ref{fig:complex}, throughput numbers are normalized to the throughput of non-atomic queries in a workload with no updates.
%The numbers in the x-axis represent the size of the query. For example ``successors 128'' returns the next 128 keys starting from a particular key, and ``successors'' simply returns the next key.
% the first 128 successors of a particular key and ``successors'' simply returns one successor.
%``Findif 128'' returns the first key that is a multiple of 128 and ``multisearch 4'' atomically search for a list of 4 keys.
%The parameters for these queries are shown in Table \ref{tab:complexqueries}.
\here{Y: Table 1 should be moved here or be referenced earlier.}
\Yihan{It is also referred to in Section 6 so it's presented earlier. Moved it to somewhere in the middle.}
\Hao{I think the current location of table 1 is fine.}
\y{All queries other than \texttt{succ1}
%All other queries
exhibit low overhead:
%are large enough to not encounter this bottleneck, and
they are between 2.9\% and 12.8\% slower than their non-atomic counterparts.
Having concurrent updates slows down both the atomic and non-atomic queries by about the same amount.
For \texttt{succ1}, our scheme exhibits larger overheads (36.8-41.4\%) due to the counter bottleneck
%it experiences
when the query size is too small.}
% This holds even when there are 36 concurrent update threads.
% Overheads of our transformation:
%     - For finding one successor, we are 40\% slower than unsafe. This is again due to the counter bottleneck.
%     - For finding the next 128 successors starting from a key, we have almost no overhead.
%     - Similarly for the other complex queries.
%     - For multisearch, this means that searching for 4 keys atomically is almost as cheap as calling lookup 4 times.
%   - Complex queries:
%     - Implementing these queries simply involved implementing a sequential algorithm
%     - Adding these queries to LFCA and KIWI would be more complicated
%     - These queries could also have been implemented for PBST just as easily
%       - Our persistent BST and PBST support the same set of query operations and the performance would be similar modulo the effect we already talked about
%       - The PBST paper is specific to one data structure and our approach is more general.

%\Yihan{It looks like we didn't discuss the difference between having and not having updates?}
%\Hao{Added a short discussion of this.}

\myparagraph{\y{Summary.}}
\hedit{Overall, our snapshot \approach\ has low overhead and,}
% Overall, our \approach\ only adds \y{low overhead} %(about 3\% to 40\%)
% to the original data structure it is applied on, and
\y{despite its generality},
performs well compared to existing special-purpose data structures.
% \vbst{} has a tradeoff with \lfca{} where it is slower when range query size is small and faster when range query size is large.
In particular, \vbchromatic{} had the best overall performance \y{in most cases} among all the range queryable data structures we evaluated.
\Hao{I don't think we need ``in most cases'' because it does have the best overall performance.}
\vbbst{} is \y{also} competitive on uniform workloads.
\remove{None of the other implementations achieved to be consistently faster than (or even comparable to) \vchromatic{} across all workloads.
%\Yihan{Why just the best balanced? It looks like the best over all.} \Hao{I changed it to best overall.}
% Many of these data structures were specifically designed whereas \vbst{} and \vchromatic{} use our general technique.
% Many of the data structures we compared with were specially designed
\remove{
The main bottleneck in our approach is the global timestamp, which appears in workloads with mixed update and small range queries. It would be interesting to look for efficient ways to circumvent this.}
\Yihan{Not sure if we should finish with the main bottleneck of our approach, but it is actually worth mentioning here. }
\here{Y: As I agree with Yihan, I have removed the discussion about the global timestamp from here. As far as I remember, we have already mentioned that our algorithms
have this problem earlier.}
\here{Y: Also, the last sentence still sounds kind of weak. It would be better if we finish with a stronger sentence.
If we cannot find something, I suggest to simply also remove the current last sentence.}
\Yihan{I agree. It would be nice if we can have a short summary that emphasis the strength of our implementation.}
}

\bibliographystyle{ACM-Reference-Format}
\bibliography{strings,biblio}

% \bibliographystyle{abbrv}
% \bibliography{strings,biblio}

\clearpage

\appendix
% !TEX root =  main.tex

% Now this is the appendix

\section{Detailed Proof of Correctness of \vCAS\ and \snapshot\ Objects}
\label{sec:vcas-long-proof}

\subsection{Linearization Points}

 Given a \snapshot\ object $S$ and a \vCAS\ object $O$ associated with it, in this section, we describe how their operations are linearized, but we defer the detailed proof of Theorem \ref{thm:vcas} to Appendix \ref{sec:vcas-proof}.
 %We assume that $S$ and $O$ are initialized by their constructors (Line \ref{line:ss-con} and \ref{line:vcas-con}, respectively) before the beginning of $H$.
 %We assume this history satisfies the precondition (described in Definition \ref{def:vcas}) that whenever \readss($ts$) is invoked, there must be a completed \takess{} operation that returned $ts$.

 To state the linearization points, we first introduce some useful terminology.
 When referring to the variables $O$.\var{VHead} and $S$.\var{timestamp}, we often abbreviate them to \var{VHead} and \var{timestamp}.
 We say that a \vnode{} has a {\em \tsvalid} timestamp at some configuration $C$
 if the value of its \ts{} field is not \TBD\ at $C$.
 Otherwise, the timestamp of the node is called {\em \tsinvalid}.
 We use the term \emph{version list} to refer to the list that results from starting at the \vnode\ pointed to by \var{VHead} and following the \vnext{} pointers.
 The \emph{head} of the version list is the \vnode\ pointed to by \var{VHead}.

 The only way to modify the version list is the \cas\ at Line \ref{line:append},
 which swings the \var{VHead} pointer to a new \vnode\ whose \var{nextv} pointer
 points to the previous head of the version list.
 This has the effect of adding the new \vnode\ to the beginning of the version list.
 Before this can happen, \initTS\ is called to install a valid timestamp in
 the old head of the version list.
 This ensures that the only \vnode\ in the version list with an invalid timestamp
 is the first one.  At the time a \vnode's timestamp becomes valid,
 it is therefore still at the head of the version list.

 % A \emph{modifying} \vcas\ operation is one that performs a successful \cas\ on line \ref{line:append}.
 % A \vcas\ is \emph{successful} if it is a modifying \vcas\ or if
 % it returns \true\ at line \ref{line:vcas-true1}.
 % Otherwise, it is \emph{unsuccessful}.

 The correctness of \readss\ operations depends on ensuring
 that the timestamp associated with a value is current
 (i.e., in \var{S.timestamp}) at the linearization point of the \vcas\ that stored the value in $O$.
 So, we linearize a \vcas\ that adds a \vnode\ to the version list at the time
 that the timestamp eventually written into that \vnode\ was read from \var{S.timestamp}.
 This means that there may be a \vnode\ at the head of the version list
 before the \vcas\ that created that \vnode\ is linearized.
 This is why any other operation that finds a \vnode\ with an invalid timestamp
 at the head of the version list calls \initTS\ to help install a valid
 timestamp in it before proceeding.
 This helping  mechanism is crucial in the argument 
 %(given in detail in the appendix)
 that all of the following linearization points are well-defined and within the intervals
 of their respective operations.

 % Defining linearization points of \vcas based on return points are not ideal because incomplete vcas operations should be linearized as well.
 \begin{itemize}
   \item A \vcas\ %(\var{oldV}, \var{newV})
   operation is linearized depending on how it executes.

   \begin{itemize}
   \item
   If the \vcas\ performs a successful \cas\ on Line~\ref{line:append} that adds a node $n$ to the version list, and $n$'s timestamp eventually becomes valid, then the \vcas\ is linearized on Line~\ref{line:readTS} of the \var{initTS} method that makes $n$'s timestamp valid.
   Note that this case includes all \vcas\ operations that return \true\ at Line \ref{line:vcas-true2}.
   \item
   Let $h$ be the value of \var{VHead} at Line \ref{line:vcas-head} of a \vcas{} operation.
   If the \vcas{} operation returns on Line \ref{line:vcas-false1} or \ref{line:vcas-true1},  then it is linearized either at Line \ref{line:vcas-head} if $h$'s timestamp is valid at that time, or the first step afterwards that makes $h$'s
    timestamp valid.
   \item
   Finally, consider a \vcas(\var{oldV, newV}) operation $V$ that returns \false\ on Line \ref{line:vcas-false2}.
   This is the most subtle case.
   The return on Line \ref{line:vcas-false2} is only reached when $V$ fails its \cas{} on Line \ref{line:append} because some other \vcas\ operation
   changed \var{VHead} after $V$ read it at Line \ref{line:vcas-head}.
   We linearize the \vcas\ immediately after the linearization point of the \vcas\ operation $V'$ that made the \emph{first} such change.
   (If several \vcas\ operations that return on Line \ref{line:vcas-false2} are linearized immediately after $V'$, they can be ordered arbitrarily.)
   \end{itemize}

   \item For a \vrd{} operation that terminates, let $h$ be the \vnode\ read from \var{VHead} at Line \ref{line:readHead}.
   The \vrd\  is linearized at Line \ref{line:readHead} if $h$'s timestamp is valid at that time, or at the first step afterwards that makes $h$'s timestamp valid.

   \item A \readss{} operation that terminates is linearized at its last step.

   \item For \takess{} operations, let $t$ be the value read from \var{timestamp} on line \ref{line:read-time}.
   A \takess{} operation that terminates is linearized when the value of \var{timestamp} changes from $t$ to $t+1$.
 \end{itemize}

\subsection{Proof of Correctness}
\label{sec:vcas-proof}
In this section, we prove that Fig.~\ref{fig:vcas-alg} is a linearizable implementation of \vCAS{} and \snapshot{} objects.
First we argue that it suffices to prove linearizability for histories consisting of a single \vCAS{} object and a single \snapshot{} object.
Suppose two \vCAS{} objects are associated with different \snapshot{} objects.
Then we can prove linearizability for the two sets of objects independently because they do not access any common variables and do not affect each other in terms of sequential specifications.
Suppose two \vCAS{} objects $O_1$ and $O_2$ are associated with the same \snapshot{} object $S$.
Let $H'$ be a history of operations on these three objects.
Furthermore, let $H'_1$ be the history $H'$ restricted to only operations from $S$ and $O_1$, and similarly, let $H'_2$ be the history $H'$ restricted to only operations from $S$ and $O_2$.
We will define the linearization points of $S$ so that they are not affected by operations on $O_1$ or $O_2$.
Therefore, showing that both $H'_1$ and $H'_2$ are linearizable is sufficient for showing that $H'$ is linearizable because $S$ will be linearized the same way in both $H'_1$ and $H'_2$.

% Since \vCAS{} objects do not affect the behaviour of other \vCAS{} objects and they also do not affect the behaviour of the global \snapshot{} object, it suffices to prove linearizability for histories consisting of a single \vCAS{} object and a single \snapshot{} object.
Let $H$ be a history of a \vCAS{} object $O$ and a \snapshot{} object $S$.
We assume that $S$ and $O$ are initialized by their constructors (Line \ref{line:ss-con} and \ref{line:vcas-con}, respectively) before the beginning of $H$.
We assume this history satisfies the precondition (described in Definition \ref{def:vcas}) that whenever \readss($ts$) is invoked, there must be a completed \takess{} operation that returned $ts$.
When referring to the variables $O$.\var{VHead} and $S$.\var{timestamp}, we will often abbreviate them to \var{VHead} and \var{timestamp}.

We first introduce some useful terminology.
We say that a \vnode{} has a {\em \tsvalid} timestamp at some configuration $C$
if the value of its \ts{} field is not \TBD\ at $C$.
Otherwise, the timestamp of the node is called {\em \tsinvalid}.
We use the term \emph{version list} to refer to the list that results from starting at the \vnode\ pointed to by \var{VHead} and following the \var{nextv} pointers.
The \emph{head} of the version list is the \vnode\ pointed to by \var{VHead}.

A \emph{modifying} \vcas\ operation is one that performs a successful \CAS\ on line \ref{line:append}.
Due to the if statement on line \ref{line:vcas-true1}, if \vcas($oldV$, $newV$) is a modifying \vcas{} operation, then $oldV \neq newV$.
Note that modifying \vcas{} operations can return only on line \ref{line:vcas-true2} and any operation that returns on line \ref{line:vcas-true2} is a modifying \vcas{}.
A \vcas\ is \emph{successful} if it is a modifying \vcas\ or if
it returns \true\ at line \ref{line:vcas-true1}.
Otherwise, it is \emph{unsuccessful}.
%\Eric{I modified the definition of successful, because in the old definition (commented out
%in .tex), a \vcas\ that adds a \vnode\ and then dies before returning \true\ would be
%considered unsuccessful.}
%\Hao{the new definition looks good to me}
%Modifying \vcas{} operations are always successful, but the converse does not hold because
%there can be successful \vcas($a$, $a$) operations.

We first show that the only change to a version list is inserting a \vnode\ at the beginning of it.

\begin{lemma}
\label{lem:stay-in-list}
\Eric{New lemma to simplify proof of some others.}
Once a \vnode\ is in the version list, it remains in the version list forever.
\end{lemma}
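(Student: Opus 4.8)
The plan is to exploit two structural invariants of Algorithm~\ref{fig:vcas-alg}: (i) the \vnext\ field of a \vnode\ is immutable once the node is created, and (ii) after initialization the only write to \var{VHead} is the successful \cas\ on Line~\ref{line:append}, and this write merely prepends a node. Together these imply that the set of nodes reachable from \var{VHead} along \vnext\ pointers can only grow, which is precisely the claim. Recall that, per the definition of \emph{version list}, a node $x$ is in the version list at a configuration $C$ exactly when $x$ is reachable from the \vnode\ that \var{VHead} points to at $C$ by following \vnext\ pointers.

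First I would verify immutability of \vnext. Inspecting the code, the \vnext\ field is assigned only in the \vnode\ constructor and is never the target of an assignment anywhere else in the algorithm. Hence, for any \vnode\ $x$, the value of $x.\vnext$ is fixed from the moment $x$ is created. Consequently the version list at any configuration is determined entirely by the current value of \var{VHead} together with these fixed \vnext\ pointers, so the version list can change only when \var{VHead} changes.

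Next I would characterize how \var{VHead} changes. After the initialization on Line~\ref{line:initialize-VHead}, the only instruction that writes \var{VHead} is the \cas\ on Line~\ref{line:append}. Consider a step in which this \cas\ succeeds, changing \var{VHead} from \var{head} to \var{newN}. The node \var{newN} was allocated on Line~\ref{line:newnode} with its \vnext\ field set to \var{head}, and by immutability this remains true forever. Therefore, if $x$ was reachable from \var{head} before the step (i.e.\ $x$ was in the version list), then after the step $x$ is still reachable from \var{newN}, via the \vnext\ edge from \var{newN} to \var{head} followed by the same path; moreover \var{newN} itself is now in the list, and no node is removed. Every other step leaves \var{VHead} unchanged, and hence leaves the version list unchanged.

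I would finish with a short induction on the sequence of configurations following the one in which $x$ first appears in the version list: the base case is immediate, and the inductive step is exactly the case analysis above, since each step either leaves the list unchanged or prepends a single node without deleting any existing node. I do not anticipate a genuine obstacle; the only points requiring care are confirming that \vnext\ is truly never rewritten and that no other code path alters \var{VHead} or the \vnext\ field of an already-listed node --- in particular that the failed-\cas\ branch only deletes the freshly allocated \var{newN} on Line~\ref{line:scdelete}, a node that was never in any version list, and never touches \var{VHead}.
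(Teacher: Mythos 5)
Your proof is correct and takes essentially the same route as the paper's: both identify the successful \cas{} on Line~\ref{line:append} as the only operation that changes the version list, and both observe that since \var{newN->nextv} equals \var{head}, this step merely prepends a node and so every \vnode{} previously in the list remains reachable. The paper compresses this into two sentences, while your extra checks (immutability of \vnext{}, the sole-writer status of \var{VHead}, the failed-\cas{} branch, and the closing induction) just make explicit what the paper leaves implicit.
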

\begin{proof}
The only way to change a version list is a successful \CAS\ at line \ref{line:append},
which changes \var{VHead} from \var{head} to \var{newN}.  When this happens,
$\var{newN->nextv} = \var{head}$, so all \vnode s that were in the version list before the
\CAS\ are still in the version list after the \CAS.
\end{proof}

\Eric{added next parag to show no segmentation faults; I thought we should explain where the precondition is needed.}
It is easy to check that every time we access some field of an object via a pointer to that object, the pointer
is not \var{NULL}.  \var{VHead} always points to a \vnode\ after it is initialized
on Line \ref{line:initialize-VHead} of $O$'s constructor.  It follows that every call to
\initTS\ is on a non-null pointer.
The precondition of \readss(\var{ts}) ensures that \var{ts} is a timestamp
obtained from $S$ after $O$ was initialized and is therefore greater than or equal
to the timestamp that $O$'s constructor stored in the initial \vnode\ of the version list.
Thus, the \readss\ will stop traversing the version list when it reaches that initial \vnode,
ensuring that \var{node} is never set to \var{NULL} on line \ref{line:read-while}.

\myparagraph{\textbf{Linearization Points.}}
%We now assign linearization points to the operations of the \vCAS{} object.

Before we can define the linearization points, we need a few simple lemmas that
describe when \vnode s have valid timestamps.
We start with an easy lemma about \var{initTS}.

\begin{lemma}
\label{lem:initTS}
The following hold:
\begin{enumerate}
\item
\label{lem:head-before-initTS}
Before \initTS\ is called on a \vnode, \var{VHead} has contained a pointer to that \vnode.
%\item
%\label{lem:setTS-distinct}
%Every successful \CAS\ on Line~\ref{line:casTS} of \initTS\ makes valid the timestamp of a distinct \vnode.
\item
\label{lem:valid-after-initTS}
After a complete execution of \initTS\ on some \vnode, that \vnode's timestamp is valid.
%\var{ts} variable stores a valid timestamp.
% I avoided n->ts because n is a node, not a pointer to a node
\end{enumerate}
\end{lemma}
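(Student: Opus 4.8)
The plan is to prove the two parts separately, each by a case analysis over the code of Algorithm~\ref{fig:vcas-alg}, supported by a single monotonicity observation about the \ts\ field.

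For part~\ref{lem:head-before-initTS}, I would enumerate every call site of \initTS\ and check that, at the configuration where the argument \vnode\ was obtained, \var{VHead} held a pointer to that \vnode. There are six call sites: (i) the call in the constructor, issued immediately after \var{VHead} is set on Line~\ref{line:initialize-VHead}; (ii) Line~\ref{line:readss-initTS}, whose argument was just read from \var{VHead} on Line~\ref{line:readss-head}; (iii) Line~\ref{line:read-initTS}, whose argument was read from \var{VHead} on Line~\ref{line:readHead}; (iv) Line~\ref{line:vcas-initTS}, whose argument was read from \var{VHead} on Line~\ref{line:vcas-head}; (v) Line~\ref{line:initTSnewN}, reached only after the successful \cas\ on Line~\ref{line:append} installed the argument \var{newN} into \var{VHead}; and (vi) the call \initTS(\var{VHead}) on Line~\ref{line:vcas-initTSHead}, whose argument is the value of \var{VHead} read at that instant. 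In each case, at the configuration where the pointer was obtained, \var{VHead} contained a pointer to that \vnode, which is exactly the claim. Note the claim concerns a \emph{past} configuration, so the fact that \var{VHead} may have moved on by the time \initTS\ actually executes (cases (ii)--(iv)) is irrelevant.

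For part~\ref{lem:valid-after-initTS}, I would first record the monotonicity fact: the \ts\ field of a \vnode\ is \TBD\ at creation and can be modified only by the \cas\ on Line~\ref{line:casTS}, which succeeds only when the current value is \TBD; hence once the field becomes non-\TBD\ it never changes again. I would also observe that the snapshot object's \var{timestamp} field always holds a genuine integer (never \TBD), so the value written on Line~\ref{line:casTS} is valid. Given these, consider one complete execution of \initTS(\var{n}). If the test on Line~\ref{line:init1} finds the \ts\ field of \var{n} not equal to \TBD, then by monotonicity it is already valid and remains valid through the trivial remainder of the call. Otherwise the execution reaches the \cas\ on Line~\ref{line:casTS}: if it succeeds, it installs the valid timestamp read on Line~\ref{line:readTS}; if it fails, then since the only possible values are \TBD\ and valid timestamps, failure means the current value is non-\TBD, i.e. some concurrent operation already made it valid. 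In both cases the field is valid when \initTS\ returns and stays so by monotonicity.

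Both parts are essentially mechanical once the monotonicity observation is in place, so I do not anticipate a serious obstacle. The one point requiring care is the phrasing of part~\ref{lem:head-before-initTS}: because the argument passed in cases (ii)--(iv) is a local copy read at an earlier step, the statement must be read as a claim about a past configuration (``\var{VHead} \emph{has} contained''), not the present one. Since that is precisely how the lemma is worded, the case analysis goes through directly.
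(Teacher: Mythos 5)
Your proposal is correct and follows essentially the same route as the paper's proof: the paper disposes of part~\ref{lem:head-before-initTS} by the same observation that every \initTS\ argument was either read from \var{VHead} or successfully CASed into it (you merely enumerate the six call sites explicitly), and of part~\ref{lem:valid-after-initTS} by the same monotonicity argument that only the \cas\ on Line~\ref{line:casTS} modifies a \ts\ field and that this \cas\ can fail only when the field is already valid. Your version is more detailed than the paper's three-sentence proof, and your explicit note that the lemma's part~\ref{lem:head-before-initTS} is a claim about a past configuration is a faithful reading of its wording.
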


\begin{proof}
All calls to \initTS\ are done on a pointer that has either been read from \var{VHead} or
successfully CASed into \var{VHead}.
Once a timestamp is valid, it can never be modified again, since only a \CAS\ on line \ref{line:casTS} modifies the \var{ts} variable of any \vnode.
The \CAS\ on Line \ref{line:casTS} can fail only if the \var{ts} variable is already a
valid timestamp.
\end{proof}

\begin{lemma}
\label{lem:headInvalid}
In every configuration $C$, the only \vnode\ in the version list that can have an invalid
timestamp is the head of the version list.
%All other nodes in the version list have valid timestamps.
\end{lemma}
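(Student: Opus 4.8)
The plan is to prove the statement as an invariant maintained across all reachable configurations, by induction on the steps of an arbitrary history. The invariant is precisely the claim itself: in every configuration, every \vnode\ in the version list other than the head has a valid timestamp. In the initial configuration the version list contains only the node created by the constructor (Line~\ref{line:initialize-VHead}), and that node is the head, so the invariant holds vacuously.

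For the inductive step I would consider a single step $s$ taking a configuration $C$ (satisfying the invariant) to a configuration $C'$, and examine only the steps that could falsify it. The membership of the version list and the timestamps of its nodes are affected by only two kinds of steps. First, a successful \cas\ at Line~\ref{line:append}: by Lemma~\ref{lem:stay-in-list} this is the only way the list grows, and it is the only write to \VHead\ after initialization, hence the only step that changes which node is the head. Such a \cas\ swings \VHead\ from the old head $h$ to a freshly inserted node, which becomes the new head (so that node's possibly-invalid timestamp is unconstrained by the invariant) and demotes $h$ to a non-head node. Every node already non-head in $C$ stays non-head and retains its valid timestamp by the induction hypothesis, so the only new obligation is that the demoted node $h$ has a valid timestamp in $C'$. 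Second, a successful \cas\ at Line~\ref{line:casTS}, which can only turn an invalid timestamp into a valid one and therefore cannot break the invariant. No other step alters the list structure or any timestamp, so these are the only cases to check.

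The crux is the first case, and the key observation is that the node demoted by the \cas\ at Line~\ref{line:append} is exactly the node on which \initTS\ was just run. That \cas\ is \cas(\&\VHead, \var{head}, \var{newN}), where \var{head} was read at Line~\ref{line:vcas-head} and \initTS(\var{head}) was executed and completed at Line~\ref{line:vcas-initTS} before the \cas. For the \cas\ to succeed, \VHead\ must still equal \var{head} at step $s$, so the demoted node $h$ equals \var{head}; hence the process had already completed \initTS($h$) before step $s$. By part~\ref{lem:valid-after-initTS} of Lemma~\ref{lem:initTS}, $h$'s timestamp is valid once that \initTS\ returns, hence valid before $s$ and therefore in $C'$. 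Finally, once a timestamp is valid it never reverts, since the \cas\ at Line~\ref{line:casTS} succeeds only when the field still holds \TBD; this both completes the inductive step and justifies that the previously-demoted nodes keep their valid timestamps.

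I expect the only real subtlety — the main obstacle — to be cleanly linking the demoted node to the argument of the immediately preceding \initTS\ call: one must use the fact that a successful \cas\ at Line~\ref{line:append} certifies that \VHead\ equalled \var{head} at the instant of the swing, so the node pushed out of the head position is precisely the one whose timestamp was just being initialized at Line~\ref{line:vcas-initTS}. Everything else reduces to invoking Lemma~\ref{lem:initTS} together with the immutability of valid timestamps.
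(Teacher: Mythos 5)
Your proposal is correct and follows essentially the same route as the paper's proof: both proceed by induction over configurations, identify the successful \cas{} at Line~\ref{line:append} as the only step that can threaten the invariant, and use the completed \initTS(\var{head}) call at Line~\ref{line:vcas-initTS} (via Lemma~\ref{lem:initTS}(\ref{lem:valid-after-initTS})) together with the success of the \cas{} certifying that \var{VHead} still equals \var{head} to conclude that the demoted head has a valid timestamp, with the induction hypothesis covering the remaining nodes. Your explicit treatment of the \cas{} at Line~\ref{line:casTS} as a harmless case corresponds to the paper's remark that no timestamp ever changes from valid to invalid, so the two arguments differ only in presentation.
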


\begin{proof}
\Eric{I rewrote this proof because the former argument was missing the fact that the pointer swing adds just a single node.  I tried to relate the proof more directly to the definition of version list.}
No \vnode's \var{nextv} pointer changes after the \vnode\ is created,
so the only way the version list can change is when \var{VHead} is updated.
Moreover, no \vnode's timestamp ever changes from valid to invalid.
So, we must only show that updates to \var{VHead} preserve the claim.

The value of \var{VHead} changes only when a successful \CAS\ is executed on Line~\ref{line:append}
  of an instance of \vcas. Consider any such successful \CAS\ by some process $p$
  and assume the claim holds in the configuration before the \CAS\ to show that it holds immmediately after the \CAS.
  %When $p$ (successfully) executes the \CAS\ on Line~\ref{line:append}, it
  This \CAS\ changes \var{VHead} from \var{head} to \var{newN}.
  By the initialization of \var{newN} on Line \ref{line:newnode}, that \vnode's \var{nextv} pointer is \var{head}.  So, we must show that  \var{head} and all \vnode s reachable from \var{head} by following \var{nextv} pointers have valid timestamps when the \CAS\ occurs.
  Before executing this \CAS, $p$ executes \var{initTS}(\var{head}), so, by Lemma~\ref{lem:initTS}(\ref{lem:valid-after-initTS}), that \vnode's timestamp is valid at the time that the \CAS\ is executed.
  Since the \CAS\ is successful, \var{VHead} was equal to \var{head} immediately before
  the \CAS, so all nodes reachable from that \vnode\ had valid timestamps, by our assumption.
\end{proof}
%We will show in Lemma \ref{lem:cur-val} that whenever \var{VHead} points to a node $n$ whose
%timestamp is valid, \var{n->val} stores the current value of the \vCAS{} object.

The next lemma is used to define the linearization point of a modifying \vcas.

% Modified lemma statement to not mention line that it returned on.
\begin{lemma}
\label{lem:lin-head}
  Suppose an invocation of \initTS\ makes the timestamp of some \vnode\ $n$ valid.  Then, $n$ is the head of the version list when that \initTS\ executes Line \ref{line:readTS} and \ref{line:casTS}.
\end{lemma}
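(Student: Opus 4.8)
The plan is to combine the three preceding lemmas. First I would fix an invocation of \initTS\ on $n$ whose \CAS\ on Line~\ref{line:casTS} succeeds (this is what ``makes the timestamp of $n$ valid'' means, since only that \CAS\ ever writes a valid value into a \ts\ field). I want to show that $n$ sits at the head of the version list both when this invocation reads \var{S.timestamp} on Line~\ref{line:readTS} and when it performs the successful \CAS\ on Line~\ref{line:casTS}. The first ingredient is membership: by Lemma~\ref{lem:initTS}(\ref{lem:head-before-initTS}), before this \initTS\ was even called, \var{VHead} had contained a pointer to $n$, so $n$ was in the version list at that earlier moment; by Lemma~\ref{lem:stay-in-list}, $n$ is therefore still in the version list when this invocation executes Lines~\ref{line:readTS} and~\ref{line:casTS}.

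The second, and more delicate, ingredient is that $n$'s timestamp is \emph{invalid} at those two lines. Here I would use the monotonicity of the \ts\ field established in the proof of Lemma~\ref{lem:initTS}: a timestamp only ever changes once, from \TBD\ to a valid value, and never reverts. Since we assumed this particular invocation's \CAS\ is the one that succeeds, $n$'s timestamp must still have been \TBD\ in the configuration immediately before that \CAS; by monotonicity it was \TBD\ throughout the entire interval from the guard on Line~\ref{line:init1} (which it passed) up to and including the successful \CAS, and in particular at Line~\ref{line:readTS} and at Line~\ref{line:casTS}.

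Finally I would invoke Lemma~\ref{lem:headInvalid}: in every configuration, the only \vnode\ in the version list that can have an invalid timestamp is the head. Since $n$ is in the version list (first ingredient) and has an invalid timestamp (second ingredient) at both Line~\ref{line:readTS} and Line~\ref{line:casTS}, it must be the head of the version list at each of those moments, which is exactly the claim. The main obstacle I anticipate is the second ingredient: one must rule out the scenario in which some concurrent helper installs a valid timestamp in $n$ between this invocation's guard check and its \CAS. The clean way around this is precisely to lean on the hypothesis that it is \emph{this} invocation's \CAS\ that succeeds — a successful \CAS\ at Line~\ref{line:casTS} witnesses that the field was still \TBD\ the instant before, so no earlier helper can have intervened — combined with the already-proved fact that a timestamp never changes from valid back to invalid.
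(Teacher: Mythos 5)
Your proposal is correct and follows essentially the same route as the paper's proof: membership of $n$ in the version list via Lemma~\ref{lem:initTS}(\ref{lem:head-before-initTS}) and Lemma~\ref{lem:stay-in-list}, then Lemma~\ref{lem:headInvalid} to conclude $n$ is the head at Lines~\ref{line:readTS} and~\ref{line:casTS}. Your second ingredient---that the success of this invocation's \CAS\ at Line~\ref{line:casTS}, combined with the monotonicity of the \ts\ field, forces $n$'s timestamp to be invalid throughout the interval---is exactly the step the paper's terser proof leaves implicit, so you have merely made the same argument more explicit.
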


\begin{proof}
\Eric{I think this proof is simpler than what was written before.}
By Lemma \ref{lem:initTS}(\ref{lem:head-before-initTS}), every call to \initTS{} is on a pointer that has previously been in \var{VHead},
so $n$ has been in the version list before \initTS\ is called.
By Lemma \ref{lem:stay-in-list}, $n$ is still in the version list when Line \ref{line:readTS} and \ref{line:casTS} are executed.
By Lemma \ref{lem:headInvalid},
$n$ remains at the head of the version list until its timestamp becomes valid
when \initTS\ performs Line \ref{line:casTS}.
% OLD PROOF:
%  Let $n$ be the node that $V$ created.
%  Since $V$ is a modifying operation, it successfully changes \VHead{} to point to $n$ on Line \ref{line:append}.
%  $V$ must have been linearized after $n$ is appended to the head of the version list because $V$ is linearized during an execution of \initTS{} on $n$ and \initTS{} can only be executed on nodes in the version list.
%  To complete the proof, we just have to show that \var{VHead} cannot change until after $V$ is linearized.
%
%  Suppose there is a \cas{} instruction $c$ that changes \var{VHead} from pointing to $n$ to pointing to another node.
%  This \cas{} must have happened on Line \ref{line:append} of another \vcas{} operation $V'$.
%  Before executing $c$, $V'$ performed a complete execution of \initTS{} on $n$ (Line \ref{line:vcas-initTS}), and by Lemma \ref{lem:initTS}, $n$'s timestamp is valid after this \initTS{}.
%  Since $V$ is linearized before $n$'s timestamp becomes valid, $V$ is linearized before $c$ which completes the proof.
  %\Yihan{Should we also briefly show that no other nodes can be added to the head before that?}
  %\Hao{I think that case is already covered. Edited the proof to make it more clear.}
\end{proof}

We are now ready to define linearization points.  As we define them, we argue that the linearization point of each operation is well-defined and within the interval of the operation.  \Eric{I did it this way to satisfy Youla.  The arguments are pretty short, with the lemmas moved earlier, so I don't think it is too disruptive to the flow.}

% Defining linearization points of \vcas based on return points are not ideal because incomplete vcas operations should be linearized as well.
\begin{itemize}
  \item A \vcas\ %(\var{oldV}, \var{newV})
  operation is linearized depending on how it executes.

  \begin{itemize}
  \item
  If the \vcas\ performs a successful \CAS\ on Line~\ref{line:append} that adds a node $n$ to the version list, and $n$'s timestamp eventually becomes valid, then the \vcas\ is linearized on Line~\ref{line:readTS} of the \var{initTS} method that makes $n$'s timestamp valid.
  Lemma \ref{lem:lin-head} implies that the linearization point occurs after the \vcas\  adds $n$ to the version list at Line \ref{line:append}.
  If the \vcas\ terminates, it first calls \initTS\ on $n$ at line \ref{line:initTSnewN}, so
  Lemma \ref{lem:initTS}(\ref{lem:valid-after-initTS}) ensures the \vcas\ is linearized
  and that the linearization point comes before the end of that \initTS.
  \Eric{I changed preceding a little:  the previous version forgot to linearize \vcas\ operations that add a node to the version list, then die before returning true.  The wording is a bit tricky, because we want to linearize a \vcas\ that adds a node iff the node's timestamp becomes valid.}
  \Hao{I think this is a good change.}
  \item
  Let $h$ be the value of \var{VHead} at Line \ref{line:vcas-head} of a \vcas{} operation.
  If the \vcas{} operation returns on Line \ref{line:vcas-false1} or \ref{line:vcas-true1},  then it is linearized either at Line \ref{line:vcas-head} if $h$'s timestamp is valid at that time, or the first step afterwards that makes $h$'s
   timestamp valid.
  \Eric{I removed a clause from the defn of this linearization point that seemed redundant in view of Lemma \ref{lem:headInvalid}.  I changed it a bit again to make linearization points steps rather than configs because that makes it easier to define the value of a \vcas\ object in a config (if there could be lin points inside the config then the object could conceivably have multiple values in that config)}
  %\Hao{This way also works. We would have to give a simple argument that \var{VHead} equals $h$ at the linearization point. In the case where the timestamp of $h$ is invalid on line \ref{line:vcas-head}, this follows from Lemma \ref{lem:headInvalid}. Otherwise, \var{VHead} equals $h$ and the timestamp of $h$ is valid on line \ref{lem:headInvalid}.}
  Lemma \ref{lem:initTS}(\ref{lem:valid-after-initTS}) ensures this step exists and is within the interval of the \vcas, since \var{initTS} is called on $h$ at line \ref{line:vcas-initTS}.
  \item
  Finally, consider a \vcas(\var{oldV, newV}) operation $V$ that returns \false\ on Line \ref{line:vcas-false2}.
  This is the most subtle case.
  The return on Line \ref{line:vcas-false2} is only reached when $V$ fails its \cas{} on Line \ref{line:append} because some other \vcas\ operation
  changed \var{VHead} after $V$ read it at Line \ref{line:vcas-head}.
  We linearize the \vcas\ immediately after the \vcas\ operation $V'$ that made the \emph{first} such change.
  (If several \vcas\ operations that return on Line \ref{line:vcas-false2} are linearized immediately after $V'$, they can be ordered arbitrarily.)

  To argue that this linearization point is well-defined, we must show that the \vnode\ $n$ that $V'$ added to the version list gets a valid timestamp, so that $V'$ is assigned a linearization point as described in the first paragraph above.
  By Lemma \ref{lem:stay-in-list}, $n$ is still in the version list when $V$ reads \var{VHead} at Line \ref{line:vcas-initTSHead}.
  If $n$ is no longer at the head of the version list, then $n$'s timestamp must be valid, by Lemma~\ref{lem:headInvalid}.
  Otherwise, if $n$ is still the head of the version list, then
  $n$'s timestamp is guaranteed to be valid after $V$ calls \initTS\ on $n$ (Line \ref{line:vcas-initTSHead}), by Lemma \ref{lem:initTS}(\ref{lem:valid-after-initTS}).
  So, in either case, $V'$ is assigned a linearization point, which is before the timestamp of $n$ becomes valid.  Thus, $V'$ (and therefore $V$) is linearized before the end of $V$.
  Lemma \ref{lem:lin-head} implies that the linearization point of $V'$ (and therefore of $V$) is after $V'$ adds $n$ to the version list, which is after $V$ reads \var{VHead}.
  This proves that $V$'s linearization point is inside the interval of $V$.
 \end{itemize}
  \item For a \vrd{} operation that terminates, let $h$ be the \vnode\ read from \var{VHead} at Line \ref{line:readHead}.
  The \vrd\  is linearized at Line \ref{line:readHead} if $h$'s timestamp is valid at that time, or at the first step afterwards that makes $h$'s timestamp valid.
 %during the \vrd{} operation
   %\var{VHead} points to $h$ and
  Lemma \ref{lem:initTS}(\ref{lem:valid-after-initTS}) ensures that this step exists and is during the interval of the \vrd,
  since the \vrd\ calls \var{initTS} on $h$ at Line \ref{line:read-initTS}.
  \Eric{Again, I made analogous changes to the vcas lin point.}

  \item A \readss{} operation that terminates is linearized at its last step.

  \item For \takess{} operations, let $t$ be the value read from \var{timestamp} on line \ref{line:read-time}.
  A \takess{} operation that terminates is linearized when the value of \var{timestamp} changes from $t$ to $t+1$.  We know that this occurs between the execution of Line \ref{line:read-time} and \ref{line:inc-time}:  either the \takess\ operation made this change itself if the \CAS\ at line \ref{line:inc-time} succeeds, or some other \takess\ operation did so, causing the \CAS\ on line \ref{line:inc-time} to fail.
\end{itemize}

Note that all operations that terminate are assigned linearization points.  In addition,
some \vcas\ operations that do not terminate are assigned linearization points.

\myparagraph{\textbf{Proof that Linearization Points are Consistent with Responses}}
Recall that $H$ is the history that we are trying to linearize.
% We need to fix the whole execution because we need to see the whole execution, not
% just the prefix, to know where the linearization points are in the prefix.
% (In other words, the linearization we give is not a strong linearization.)
In the rest of this section, we prove that each operation returns the same response in $H$ as it would if the operations were performed sequentially in the order of their linearization points.

\begin{lemma}
\label{lem:head-linearized}
Assume \var{VHead} points to a node $h$ in some configuration $C$.
If $h$.\var{ts} is valid in $C$ then either $h$ is the \vnode\ created by the constructor of $O$, or the \vcas\ that created $h$ is linearized before the configuration that immediately precedes $C$.
\end{lemma}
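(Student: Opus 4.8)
The plan is to trace the history of the node $h$ and of its timestamp, and then invoke the definition of the linearization point for the creating \vcas. First I would dispose of the trivial case: if $h$ is the \vnode\ created by $O$'s constructor, the lemma holds. Otherwise, observe that \var{VHead} is written only in two places in the code---at Line~\ref{line:initialize-VHead}, where the constructor installs the initial \vnode, and at the successful \CAS\ on Line~\ref{line:append}, which installs a \vnode\ that was freshly allocated as \var{newN} on Line~\ref{line:newnode} by some \vcas\ operation $V$. Consequently, since \var{VHead} points to $h$ in $C$ and $h$ is not the initial \vnode, $h$ was stored into \var{VHead} by the successful \CAS\ on Line~\ref{line:append} of a unique \vcas\ $V$; this is the ``\vcas\ that created $h$.''

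Next I would pin down when $h$'s timestamp becomes valid and apply the linearization-point definition. By the proof of Lemma~\ref{lem:initTS}, the \ts\ field of any \vnode\ is changed only by the \CAS\ on Line~\ref{line:casTS} inside \initTS, and once it becomes valid it never reverts to \TBD. Hence there is a unique successful such \CAS, belonging to one specific \initTS\ invocation, that changes $h$.\var{ts} from \TBD\ to a concrete value; call the step executing it $s$, and let $C_0$ be the configuration it produces, i.e., the first configuration in which $h$.\var{ts} is valid. Because $h$.\var{ts} is valid in $C$ and validity is permanent, $C_0$ is $C$ or an earlier configuration. Since $h$'s timestamp does eventually become valid, the definition of the linearization point of a modifying \vcas\ applies to $V$: it is linearized at the step executing Line~\ref{line:readTS} of exactly this \initTS\ invocation---call this step $\ell$. (Lemma~\ref{lem:lin-head} guarantees $h$ is the head of the version list when $\ell$ and $s$ execute, which is what makes this linearization point well defined, but it is not otherwise needed here.)

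Finally I would compare the two steps. Within a single execution of \initTS, Line~\ref{line:readTS} is executed immediately before Line~\ref{line:casTS} by the same process, so even allowing arbitrary interleaving of other processes, the linearization step $\ell$ strictly precedes the validating step $s$. Writing $C^-$ for the configuration immediately preceding $C$, the step $s$ is the transition into $C_0$ and $C_0 \le C$, so $s$ occurs at or before the transition from $C^-$ into $C$; since $\ell$ is strictly earlier than $s$, the configuration produced by $\ell$ is at or before $C^-$. Thus $V$ is already linearized by $C^-$, which is exactly the claim. The main thing to get right is this off-by-one bookkeeping between Lines~\ref{line:readTS} and~\ref{line:casTS}: the linearization is deliberately placed at the \emph{read} of \var{timestamp}, one step before the timestamp is actually installed, and the argument must make the validity of $h$.\var{ts} in $C$ ``pay for'' the installing step while charging the read step to the strictly earlier configuration $C^-$. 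A secondary point to handle cleanly is that $V$ genuinely receives a linearization point at all, which is ensured precisely by the hypothesis that $h$.\var{ts} is valid in $C$, so that its timestamp does eventually become valid and the first bullet of the linearization-point definition applies.
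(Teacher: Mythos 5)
Your proof is correct and follows essentially the same route as the paper's: identify the \vcas\ $V$ that installed $h$ via the successful \CAS\ at Line~\ref{line:append}, observe that validity of $h$.\var{ts} in $C$ means the \CAS\ at Line~\ref{line:casTS} executed no later than the transition into $C$, and conclude that $V$'s linearization point---the strictly earlier read at Line~\ref{line:readTS} of the same \initTS\ invocation---precedes the configuration before $C$. You spell out the uniqueness of the installing \vcas\ and the off-by-one bookkeeping more explicitly than the paper does, but the argument is the same.
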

\begin{proof}
Suppose $h$.\var{ts} is valid in $C$ but $h$ is not the \vnode\ created by the
constructor of $O$.  Then $h$ is created by some \vcas\ operation $V$
that added $h$ to the head of the version list.
Since $h$.\var{ts} is valid in $C$,
some step prior to $C$ set $h$.\var{ts} by executing Line \ref{line:casTS}.
The linearization point of $V$ is at the preceding execution of Line \ref{line:readTS}.
Thus, the linearization point precedes the configuration before $C$.
\end{proof}

\Eric{I decided to go with a different definition of value of the \vCAS\ object
to make it easier to argue about all of the \vcas\ operations that fail.}
We define the \emph{value of the \vCAS\ object in configuration $C$} to be the
value that a \vCAS\ object would store if all of the \vcas\ operations linearized
before $C$ are done sequentially in linearization order (starting from the initial value
of the \vCAS\ object).
The following  crucial lemma  describes how the value of the \vCAS\ object
is represented in our implementation.
It also says that the responses returned by all \readss{} and \vcas\ operations are consistent
with the linearization points we have chosen.

\begin{lemma}
\label{lem:cur-val}
In every configuration $C$ of $H$ after the constructor of the \vCAS\ object has completed,
\begin{enumerate}
\item
if \var{VHead} points to the \vnode\ created by the constructor of the \vCAS\ object, then
 \var{VHead->val} is the value of the \vCAS\ object,
\item
if the linearization point of the \vcas\ that created the first node in the version
list is before $C$, then \var{VHead->val} is the value of the \vCAS\ object, and
\item
otherwise, \var{VHead->nextv->val} is the value of the \vCAS\ object.
\end{enumerate}
Moreover, each \vrd\ and \vcas\ operation that is linearized at or before $C$ returns the same result in $H$ as it would return when all operations are performed sequentially
in their linearization order.
\end{lemma}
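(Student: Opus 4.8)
The plan is to prove both assertions of the lemma together by induction on the configurations of $H$ taken in execution order, writing $\mathit{val}(C)$ for the value of the \vCAS\ object at $C$ (as defined just above the lemma statement). The inductive invariant at a configuration $C$ is that whichever of the three cases applies correctly identifies $\mathit{val}(C)$, and that every \vrd\ and \vcas\ linearized at or before $C$ returns in $H$ exactly what it returns in the sequential execution determined by the linearization order. For the base case I take the configuration just after the \vCAS\ constructor returns: \var{VHead} points to the initial \vnode, whose \ts\ was validated inside the constructor, no \vcas\ has been linearized so $\mathit{val}$ is the initial value, case~1 holds, and the response claim is vacuous.

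For the inductive step from $C$ to $C'$ via a step $s$, the organizing observation is that $\mathit{val}$ changes only at the linearization point of a modifying \vcas, which by definition is the Line~\ref{line:readTS} read of the \initTS\ whose Line~\ref{line:casTS} eventually validates the newly inserted node; all other steps leave $\mathit{val}$ fixed. If $s$ is a successful \CAS\ on Line~\ref{line:append} inserting a node $n$ above the old head $h$, then $\mathit{val}$ is unchanged; the executing \vcas\ ran \initTS$(h)$ beforehand, so $h$.\ts\ is valid at $C$, and Lemma~\ref{lem:head-linearized} places $C$ in case~1 or~2 with $\mathit{val}(C)=h$.\var{val}, while at $C'$ node $n$ is a non-initial head with invalid \ts, so case~3 predicts \var{VHead->nextv->val}$=h$.\var{val}, matching. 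If instead $s$ is the Line~\ref{line:readTS} linearization point of a modifying \vcas($oldV$, $newV$) operation $V$ whose inserted node is $n$, then Lemmas~\ref{lem:lin-head} and~\ref{lem:headInvalid} keep $n$ at the head throughout; at $C$ case~3 gives $\mathit{val}(C)=n$.\var{nextv}.\var{val}$=oldV$ (the value $V$ verified on Line~\ref{line:vcas-false1}), with $oldV\neq newV$ by Line~\ref{line:vcas-true1}, so applying $V$ makes $\mathit{val}(C')=newV=n$.\var{val}, exactly what case~2 predicts at $C'$ now that $V$ is linearized, which also makes $V$'s response \true\ consistent.

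Every remaining step leaves both $\mathit{val}$ and the case-determining facts (the identity stored in \var{VHead} and whether the head's creating \vcas\ is linearized) unchanged, so the structural claim is preserved automatically; in particular a Line~\ref{line:casTS} \CAS\ merely flips a head's \ts\ from \TBD\ to valid, which the three cases do not depend on. It then remains to check the responses of \vrd\ and non-modifying \vcas\ operations linearized at $s$. Each such operation reads $h$ from \var{VHead} on Line~\ref{line:readHead} or~\ref{line:vcas-head} and is linearized either at that read (if $h$.\ts\ is already valid) or at the first later step validating $h$.\ts; in both cases Lemmas~\ref{lem:headInvalid} and~\ref{lem:lin-head} guarantee that $h$ is the head with a valid \ts\ at the linearization point, so Lemma~\ref{lem:head-linearized} gives $\mathit{val}=h$.\var{val}. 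Hence a \vrd\ returns $h$.\var{val}$=\mathit{val}$, a \vcas\ returning on Line~\ref{line:vcas-false1} correctly returns \false\ since $h$.\var{val}$\neq oldV$, and one returning on Line~\ref{line:vcas-true1} correctly returns \true\ since $h$.\var{val}$=oldV=newV$.

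The most delicate case, which I expect to be the main obstacle, is a \vcas($oldV$, $newV$) operation $V$ returning \false\ on Line~\ref{line:vcas-false2}, because its linearization point is specified only relative to another operation: $V$ is placed immediately after the first \vcas\ $V'$ that moved \var{VHead} off the $h$ that $V$ read on Line~\ref{line:vcas-head}. I would first note that $V'$ is modifying and inserts a node $n'$ above $h$ with $n'$.\var{val}$\neq h$.\var{val}$=oldV$ (because $V'$ checked its own old value against $h$.\var{val} and its new value differs from it, while $V$ observed $h$.\var{val}$=oldV$). By the analysis of the second paragraph applied to $V'$, immediately after $V'$'s linearization point $\mathit{val}=n'$.\var{val}$\neq oldV$, and this value persists until the next modifying \vcas, whose linearization point is strictly later since inserting a node above $n'$ first requires validating $n'$.\ts. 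Consequently, placing $V$ — and any other Line~\ref{line:vcas-false2} operation that failed because of $V'$, in arbitrary relative order — immediately after $V'$ is well-defined, the current value there differs from $oldV$, and $V$ correctly returns \false. Making precise that this family of relationally-placed, value-preserving operations can be woven into a single total order consistent with all of the timestamp-based linearization points is the part of the argument that needs the most care.
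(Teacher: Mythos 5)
Your proposal is correct and follows essentially the same route as the paper's proof: an induction over configurations with the invariant combining the three structural cases and response-correctness, stepping through the append \CAS\ at Line~\ref{line:append}, the Line~\ref{line:readTS} linearization point of a modifying \vcas, and the reads/validations, using the same supporting lemmas (Lemmas~\ref{lem:initTS}, \ref{lem:headInvalid}, \ref{lem:lin-head}, and \ref{lem:head-linearized}). The only cosmetic difference is that you verify the Line~\ref{line:vcas-false2} responses in a separate pass (re-deriving the ordering fact that the paper records as Lemma~\ref{lem:vsclp}), whereas the paper folds them into the modifying-\vcas\ step; the ``weaving'' you flag as delicate is already resolved by the linearization-point convention, since those operations are value-preserving and placed immediately after $V'$ in arbitrary order.
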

\begin{proof}
We prove this by induction on the length of the prefix of $H$ that leads to $C$.
In the configuration immediately after the constructor of the \vCAS\ object terminates,  \var{VHead->val} stores the initial value of the \vCAS\ object.

Since \var{nextv} and \var{val} fields of a \vnode\ do not change after the
\vnode\ is created, we must only check that the invariant is preserved by steps
that modify \var{VHead} or are linearization points
of  \vcas\ operations (which may change the value of the \vCAS\ object) or \vrd\ operations.
We consider each such step $s$ in turn and show that, assuming the claim holds for the configuration $C$  before $s$, then it also holds for the configuration $C'$ after $s$.

First, suppose $s$ is a successful \CAS\ on \var{VHead} at line \ref{line:append} of a \vcas\ operation.
It changes \var{VHead} from \var{head} to \var{newN}, where $\var{newN->next} = \var{head}$.
By Lemma \ref{lem:headInvalid}, \var{head->ts} is valid when this \CAS\ occurs, since \var{head}
becomes the second node in the version list.
By our assumption, the value of the \vCAS\ object prior to the \CAS\ is $\var{head->val}$.
Since this step is not the linearization point of any \vcas\ operation, the value after the
\CAS\ is still \var{head->val}.
By Lemma \ref{lem:initTS}(\ref{lem:head-before-initTS}) \initTS\ is only called on a pointer that has been in \VHead\ previously,
and \var{newN} has never been in \var{VHead} before this \CAS, we know that
\var{newN->ts} is \TBD.  So the invariant holds after the \CAS, since $\var{VHead->nextv->val} = \var{head->val}$.

\Eric{These next 2 parags are the bits that I think were not clear in the old proof}

Now, consider a step $s$ that is the linearization point of a modifying \vcas(\var{oldV, newV}),
which we denote $V$, possibly
followed by the linearization points of some other \vcas\ operations that return \false\ on
Line \ref{line:vcas-false2}.
Since $V$ is a modifying \vcas, it added a new \vnode\ $n_1$ to the head of the version list
in front of node $n_2$.
This happens after $V$ checks that $n_2.\var{val} = \var{oldV} \neq \var{newV}$ on Line \ref{line:vcas-false1}--\ref{line:vcas-true1} and sets $n_1.\var{nextv}$ to point to $n_2$
and sets $n_1.\var{val}$ to \var{newV}  on Line \ref{line:newnode}.
By Lemma \ref{lem:lin-head}, $n_1$ is still the head of the version list when step $s$ occurs.
So in the configuration $C$ before $s$, the value in the \vCAS\ object is $n_2.\var{val} = \var{oldV}$, by our assumption that the claim holds in $C$.
Thus, when $V$ occurs in the sequential execution,
it returns \true\ and changes the value of the
\vCAS\ object to \var{newV}.
Note that $\var{VHead->val} = \var{newV}$ in $C'$.
It remains to check that all other \vcas\ operations that return \false\ at line
\ref{line:vcas-false2} and are linearized immediately after $V$
should return \false\ in the sequential execution and therefore
do not change the value of the \vCAS\ object.
Consider any such \vcas\ $V'$ of the form \vcas(\var{oldV',newV'}).
By the definition of the linearization point of $V'$,
$V$ makes the first change to \var{VHead} after $V'$ reads it on
Line \ref{line:vcas-head}.
So, $V'$ must have read a pointer to $n_2$ on Line \ref{line:vcas-head}.
Since $V'$ returns \false\ at Line \ref{line:vcas-false2}, it must have seen
$n_2.\var{val} =\var{oldV'}$ at Line \ref{line:vcas-false1}.
Thus, $\var{oldV'} = n_2.\var{val} = \var{oldV} \neq \var{newV}$,
so when each of the \vcas\ operations $V'$ is executed sequentially in
linearization order, it should return
\false\ and leave the state of the \vCAS\ object equal to \var{newV}.
The claim for $C'$ follows.

Finally, consider a step $s$ that is the linearization point of one or more \vrd\ operations or \vcas\ operations
that return at Line \ref{line:vcas-false1} or \ref{line:vcas-true1}.
Consider any such operation $op$.
Let $h$ be the node at the head of the version list when $op$ reads \var{VHead} at Line \ref{line:readHead} or \ref{line:vcas-head}.
Then $s$ is either this \rd\ or a subsequent execution of Line \ref{line:casTS} that
makes $h$'s timestamp valid.  Either way, \var{VHead} points to $h$ in $C'$, by Lemma
\ref{lem:lin-head}.
By Lemma \ref{lem:head-linearized}, either case (1) or (2) of the claim applies
to configuration $C$.
Either way, the value of the \vCAS\ object in $C$ is $h.val$.
If $op$ is a \vrd, then it returns $h.val$ as it should.
If $op$ is a \vcas\ that returns \false\ at Line \ref{line:vcas-false1}, it would do the same in the sequential execution in linearization order because $op$ reads
the state of the \vCAS\ object in $C'$ from $h.\var{val}$ on Line \ref{line:vcas-false1}
and sees that it does not match its \var{oldV} argument.
If $op$ returns \true\ at Line \ref{line:vcas-true1}, it would also return \true\ when performed in linearization order because the state of the \vCAS\ object in $C'$ matches both
$op$'s \var{oldV} and \var{newV} values.
In all cases the value of the \vCAS\ object does not change as a result of $op$, so it is still $h.\var{val}$ in $C'$,
and the invariant is preserved.
\end{proof}

The following observation follows directly from the way modifying \vcas{} operations are linearized.

\begin{observation}
\label{obs:ts-valid}
  Consider a \vnode\ $n$ that was added to the version list by a modifying \vcas{} $V$.
  If the timestamp of  $n$ is valid, then $n$.\var{ts} stores the value of \var{S.timestamp} at the linearization point of $V$.
\end{observation}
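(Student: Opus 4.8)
The plan is to unwind the definition of the linearization point of a modifying \vcas\ and read the claim off directly from the code of \initTS; there is essentially no deep obstacle here, only a matching of definitions to the pseudocode.

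First I would pin down the unique step that validates $n$'s timestamp. The \var{ts} field of $n$ starts as \TBD\ when $n$ is created (Line~\ref{line:newnode}) and is only ever altered by the \CAS\ on Line~\ref{line:casTS}, which succeeds only when the current value is \TBD; this monotone-validity property is exactly the one underlying Lemma~\ref{lem:initTS}. Hence, since $n$'s timestamp is valid, there is a \emph{unique} successful \CAS\ on Line~\ref{line:casTS} that changed $n$.\var{ts} from \TBD\ to a valid value. Let $I$ be the invocation of \initTS\ that performs this \CAS; then $I$ is precisely ``the \initTS\ method that makes $n$'s timestamp valid'' referred to in the linearization rule for $V$.

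Next I would connect the value written to the value read within $I$. In $I$, the value supplied to the \CAS\ on Line~\ref{line:casTS} is the local variable \var{curTS}, which $I$ read from \var{S.timestamp} at Line~\ref{line:readTS}; since \var{curTS} is not modified between these two lines, the successful \CAS\ sets $n$.\var{ts} to exactly the value that \var{S.timestamp} held when $I$ executed Line~\ref{line:readTS}. Finally, the linearization rule for a modifying \vcas\ whose node eventually acquires a valid timestamp places the linearization point of $V$ at Line~\ref{line:readTS} of $I$. A read does not change \var{S.timestamp}, so the value of \var{S.timestamp} at the linearization point of $V$ is precisely the \var{curTS} read there, which equals $n$.\var{ts} by the previous step. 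This is the claim.

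The only point requiring care is ensuring that ``the \initTS\ method that makes $n$'s timestamp valid'' is well-defined, so that it coincides with the invocation $I$ containing the successful \CAS; this is immediate from the fact that a \vnode's \var{ts} transitions from \TBD\ to valid at most once. Beyond that, the observation is a near-restatement of how a modifying \vcas\ is linearized, so no further argument is needed.
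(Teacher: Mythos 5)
Your proposal is correct and matches the paper's treatment: the paper states this observation without a separate proof, noting only that it ``follows directly from the way modifying \vcas{} operations are linearized,'' and your argument is exactly the intended unwinding of that definition---the unique successful \CAS{} on the \ts{} field writes the value of \var{S.timestamp} read at the linearization point, within the same \initTS{} invocation. Your extra care about well-definedness (the \TBD-to-valid transition happens at most once) is sound and consistent with the paper's Lemma~\ref{lem:initTS}.
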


The following key lemma asserts that version lists are properly sorted.

\begin{lemma}
\label{lem:vsclp}
  The modifying \vcas\ operations are linearized in the order they insert \vnode{}s into the version list.
\end{lemma}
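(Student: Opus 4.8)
The plan is to reduce the total-order claim to a pairwise statement: for any two modifying \vcas{} operations $V_1$ and $V_2$ that are assigned linearization points, inserting \vnode s $n_1$ and $n_2$ into the version list respectively, I will show that if $V_1$'s successful \CAS\ at Line~\ref{line:append} precedes $V_2$'s (i.e.\ $n_1$ is inserted before $n_2$), then $V_1$ is linearized before $V_2$. Since all the successful \CAS es on \var{VHead} occur at distinct times and are therefore totally ordered by the insertion relation, proving this for every such pair yields the lemma. Recall that, by the definition of the linearization point, a modifying \vcas\ is assigned a linearization point exactly when its node's timestamp eventually becomes valid, at Line~\ref{line:readTS} of the \initTS\ invocation that installs that valid timestamp.

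First I would establish the structural fact that $n_1$ sits strictly below $n_2$ in the version list, permanently. When $V_2$'s \CAS\ at Line~\ref{line:append} succeeds, it sets $n_2.\var{nextv}$ to the node that was the head immediately beforehand and swings \var{VHead} to $n_2$. Because $n_1$ was inserted strictly earlier, Lemma~\ref{lem:stay-in-list} guarantees $n_1$ is already in the version list at that moment, hence reachable from the old head, hence reachable from $n_2$ by following \var{nextv} pointers. Since \var{nextv} pointers never change after a \vnode\ is created, $n_1$ remains reachable from $n_2$ forever; moreover $n_1 \neq n_2$ because $n_2$ is freshly allocated by $V_2$.

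Next I would locate the configurations at which the timestamps become valid. Let $s_2$ be the step at Line~\ref{line:readTS} of the unique \initTS\ invocation that installs $n_2$'s timestamp; by definition this step is the linearization point of $V_2$. Lemma~\ref{lem:lin-head} guarantees that $n_2$ is the head of the version list while that invocation executes Lines~\ref{line:readTS} and~\ref{line:casTS}, so $n_2$ is the head at $s_2$. Combining this with the previous paragraph, at $s_2$ the node $n_1$ is in the list but is not the head, so Lemma~\ref{lem:headInvalid} forces $n_1$'s timestamp to already be valid at $s_2$.

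Finally I would assemble the ordering. Since $n_1$'s timestamp is valid at $s_2$, the \CAS\ at Line~\ref{line:casTS} that made it valid occurred strictly before $s_2$, and the read at Line~\ref{line:readTS} of that same \initTS\ invocation---which is precisely the linearization point of $V_1$---occurred earlier still. Hence $V_1$ is linearized strictly before $s_2$, i.e.\ before $V_2$, as required. I expect the main obstacle to be the first step: one must argue carefully that \emph{insertion order} translates into a permanent descendant relationship in the list, relying on the immutability of \var{nextv} together with Lemma~\ref{lem:stay-in-list}. Once that structural fact is secured, Lemmas~\ref{lem:lin-head} and~\ref{lem:headInvalid} make the rest of the argument essentially mechanical.
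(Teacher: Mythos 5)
Your proof is correct and follows essentially the same route as the paper's: both arguments hinge on the linearization point of a modifying \vcas\ sitting at Line~\ref{line:readTS} of the validating \initTS\ call, combined with Lemma~\ref{lem:headInvalid} forcing $n_1$'s timestamp to be valid once it is no longer the head. The only difference is organizational --- the paper treats consecutive \vnode s and chains ``$V_1$ linearized before $n_1$'s timestamp is valid, which happens before $n_2$ is inserted, which precedes $V_2$'s linearization point,'' while you prove the pairwise claim directly for arbitrary pairs by applying Lemma~\ref{lem:lin-head} and Lemma~\ref{lem:headInvalid} at $V_2$'s linearization point, which dispenses with the implicit transitivity step.
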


\begin{proof}
  %Assume by contradiction that the lemma doesn't hold. Then there are two nodes, $n_1$ and $n_2$, such that $n_1$ appears before $n_2$ in the version list, but the \vcas\ that appended $n_2$ (which we'll denote by $vcas_2$) linearized before that of $n_1$ (denoted by $vcas_1$). Assume without loss of generality that $n_1$ and $n_2$ appear consecutively in the version list.
  Consider any two consecutive \vnode s $n_1$ and $n_2$ in the version list, where $n_1$ is inserted into the list before $n_2$, and let $V_1$ and $V_2$ be the \vcas\ operations that inserted $n_1$ and $n_2$ to the list, respectively.
  Recall that the linearization point of a modifying \vcas{} is at the read of the timestamp (Line~\ref{line:readTS}) of the \var{initTS} call that validates the timestamp on the \vnode\ that this \vcas{} appended to the version list.
  In particular, a modifying \vcas{} is linearized after it inserts its \vnode\ into the list (since  \var{initTS} cannot be called on a \vnode\ before it is inserted, by Lemma \ref{lem:initTS}(\ref{lem:head-before-initTS})), but before its \vnode\ is assigned a valid timestamp on Line~\ref{line:casTS} of \var{initTS}. By Lemma~\ref{lem:headInvalid}, a \vnode\ is assigned a valid timestamp before it is replaced as the head of the version list.
  %\Yihan{replaced by a new head of the version list?}.
  %\Hao{I think the current wording is fine}
  That is, $V_1$ must be linearized before $n_1$'s timestamp was valid, and $n_1$'s timestamp became valid before $n_2$ was added to the list.
  Furthermore, $V_2$ was linearized after $n_2$ was added to the list.
  Therefore, $V_1$ is linearized before $V_2$.
\end{proof}

Now, we prove our main theorem which says that our \vCAS{} and \snapshot{} algorithms are linearizable and have the desired time bounds.

\begin{proof}[Proof (Theorem \ref{thm:vcas})]
  % By the way \vcas\ operations are linearized, the timestamp assigned to them is indeed the value of the global timestamp at the point that the \vcas\ is linearized.

 We show that the return values of each operation is correct with respect to their linearization points.  For \vcas\ and \vrd\ operations, this follows from Lemma \ref{lem:cur-val}.

  We prove this for \takess{} and \readss{} simultaneously.
  Suppose a $S$.\takess{} operation $T$ returns a timestamp $t$, which is passed into a $O$.\readss{} operation $R$.
  We  show that $R$ returns the value of $O$ at the linearization point of $T$.
  Let $h$ be the value of \VHead{} on line \ref{line:readss-head} of $R$.
  The timestamp of $h$ is valid after line \ref{line:readss-initTS} of $R$, and by Lemma \ref{lem:headInvalid}, the timestamps of all the nodes in the version list starting from $h$ are valid.
  This means that on line \ref{line:read-while}, \var{node->ts} is never \TBD{}.
  Let $n$ be the value of \var{node} at the last line of $R$ and let $V$ be the modifying \vcas{} operation that appended $n$.
  We know that $n$ is the first node in the version list starting from $h$ with timestamp less than or equal to $t$.
  Since $T$ is linearized when $S$.\var{timestamp} gets incremented from $t$ to $t+1$, by Observation \ref{obs:ts-valid}, $V$ is linearized before the linearization point of $T$.
  Since $R$ returns the value written by $V$, it suffices to show that no modifying \vcas{} operation gets linearized between the linearization points of $V$ and $T$.
  By Lemma \ref{lem:vsclp}, modifying \vcas{} operations are linearized in the order they appended \vnode s to the version list.
  Therefore, for all nodes that are older than $n$ in the version list, their modifying \vcas{} operations are linearized before the linearization point of $V$.
  Next, we show that all nodes in the version list that are newer than $n$ are linearized after $T$.
  From the while loop on line \ref{line:read-while}, we can see that all nodes that lie between $h$ and $n$ (including $h$, excluding $n$) have timestamps are larger than $t$.
  All nodes in the version list that are newer than $h$ also have timestamp larger than $t$ because they are appended after line \ref{line:readss-head} of $R$ and \var{S.timestamp} is already greater than $t$ at this step.
  Therefore, by Observation \ref{obs:ts-valid}, all nodes in the version list newer than $n$ are linearized after the linearization point of $T$.
  This means $V$ is the last modifying \vcas{} operation to be linearized before the linearization point of $T$, as required.

  The bounds on the step complexity of the operations can be derived trivially by inspection of the pseudocode.
\end{proof}

% !TEX root =  main.tex

\section{Adding Linearizable Queries to Concurrent Data Structures}
\label{app:query}

In this section, we show how to use \vCAS\ objects to extend a large class of concurrent data structures that are implemented using reads and \cas\ primitives to support linearizable wait-free queries. Throughout the section we also use \NBBST{} as an example to show how our approach works. The idea of this construction is to replace \CAS\ objects with their versioned counterparts, and to use this to obtain \emph{snapshots} of the concurrent data structure. We can then run queries on the obtained snapshot, without worrying about concurrent updates to the data structure.

%\Hao{Added this sentence, let me know if it sounds ok.}
The techniques in this section are general. For many data structures, they allow translating any read-only operation on a sequential data structure into a linearizable query on the corresponding concurrent data structure. To achieve this generality, the techniques go through multiple layers of abstraction.
%, which can be confusing \Hao{Not sure if we should call our thing confusing, but I couldn't find a better word.}.
To make it more concrete, we show
%some more direct
examples of how to add specific linearizable queries to the Michael and Scott queue \cite{MS96} and \NBBST{} tree \cite{EFRB10} in Appendix~\ref{sec:examples}.

%This requires replacing read/CAS objects in the data structure with their versioned counterparts.
We present this construction in two parts. First, we define the concept of \emph{\sololy{} linearizable} queries.
%, and show how to use \vCAS\ objects to transform a \sololy{} linearizable data structure into a fully linearizable one.
%\Yihan{We haven't defined the ``\sololy{} linearizable data structure'' yet.}
A \emph{query} operation is an operation that does not modify the shared state (i.e., it is read-only).
A \emph{\sololy{} linearizable query} (or \emph{\solo{} query}) is a query operation that is only guaranteed to be correct if it is run \sololy.
Intuitively, a \solo{} query is one that can run on a ``snapshot'' of a concurrent data structure, and
it never changes the current state of the data structure.
They may be invoked while other operations are pending, but once invoked, they need to run to completion without any other process taking steps during their interval.
Section \ref{sec:making-solo-linearizable} describes how to transform a concurrent data structure that supports
\sololy\ queries (which we refer to as a \emph{\sololy{} linearizable} data structure) into a fully linearizable one using our \vCAS{} objects.
%, and show how to use \vCAS\ objects to transform a \sololy{} linearizable data structure into
%\solo{} linearizability is a weaker guarantee than linearizability, that intuitively guarantees correctness for some operations only if they are run in isolation.
% \Hao{The following sentence is a little confusing because it makes it sound like it is not possible to design \sololy{} linearizable queries for most data structures. Maybe skip the sentence?}
However, most concurrent data structures in the literature do not come with \sololy{} queries. %, since they may not be safe when  executed concurrently with other operations.
%In Section \ref{sec:add-solo}, we discuss circumstances  which make it easy to add \sololy{} query operations. % to a given linearizable data structure.
In Section \ref{sec:add-solo}, we discuss how to add \sololy{} query operations to a given linearizable data structure.

\begin{definition}
	We denote by $\mathcal{H}(D,Q)$ the set of histories of concurrent data structure $D$ in which every operation instance from some set of query operations $Q$ is run \sololy.
\end{definition}

\begin{definition}
	A concurrent data structure $D$ is \emph{linearizable with \solo{} queries $Q$} if every history $H\in\mathcal{H}(D,Q)$ is linearizable. A query $q \in Q$ is called a \emph{\sololy{} linearizable} query on $D$. With clear context of \solo{} queries $Q$, we call $D$ a \emph{\sololy{} linearizable data structure}.
%\Yihan{Should we say ``\sololy{} linearizable operation on D''?}
\end{definition}
%
%Intuitively, \sololy{} linearizable operations are ones that can run correctly and safely on any snapshot of the concurrent data structure. However,  such a concurrent snapshot may reflect partially executed updates, and a \sololy{} linearizable operation must be able to linearize with respect to those.

%A \emph{\solo\ query} is a \sololy{} linearizable operation that is read-only, meaning it never changes the current state of the data structure.
%We say that $D$ is \emph{linearizable with solo queries $Q$} if all operations in $Q$ are solo queries.
%\Naama{We need to define `run in isolation' in the model section.}
The running time of a \solo{} query may depend on the concurrent state at which it is run. We denote by $T(q, C)$ the running time (number of steps) of a \solo{} query $q$ at concurrent state $C$ of the data structure.
%We assume every $q\in Q$ only access each CAS object a constant number of times.
	%\Hao{To give time bounds for $Q$, we assume that $Q$ accesses each CAS object a constant number of times. Otherwise we would have to do caching. I don't know what the complexity of this would be in general.}

\subsection{Making \Solo{} Queries Fully Linearizable}
\label{sec:making-solo-linearizable}
%\Hao{Need to be careful about "query" vs "read-only" terminology. It's possible to implement queries in a way that is not read-only so the two are not interchangable.}
We now show how to transform a solo linearizable data structure $D$, implemented with \CAS\ objects, that has a set of \solo\ queries $Q_D$, into a fully linearizable data structure $D_\ell$. Let $L_D$ be the operations of $D$ that are not in $Q_D$. %linearizable operations of $D$.
The transformation
%transforming the original data structure to use
uses our \vCAS{} objects in place of the regular \CAS{} objects of $D$.
It
%This transformation
preserves all existing
correctness guarantees (e.g., linearizability, strong linearizability,
sequential consistency) and progress guarantees (e.g.,
wait-freedom, lock-freedom) of the operations in $L_D$. Furthermore, it preserves the running time of operations of $L_D$ up to constant factors.
The time complexity of a linearizable query $q \in Q_D$ in $D_\ell$  is bounded by $q$'s time complexity in $D$, plus a contention term.
%the number of \CAS\ operations concurrent with $q$.

\begin{construction}\label{con:sololin}
	To obtain $D_\ell$ we replace every \CAS\ object with a \vCAS\ object, initialized with the same value.  All \vCAS{} objects are associated with a single \snapshot{} object.
	%The implementation of $D_\ell$ also uses a single \snapshot\ object, that all \vCAS\ objects are associated with.
	Each \cas\ or \rd\ %instruction
	by an operation in $L_D$ on a \CAS\ object is replaced by a \vcas\ or \vrd\ (respectively) on the corresponding \vCAS\ object.
	%and the \snapshot\ object is initialized to have value 0.
	%
	To perform a \solo{} query operation $q \in Q_D$ in $D_\ell$, a process $p$ first executes \takess\ on the \snapshot{} object, to obtain a handle~$h$.
	Then, for any \CAS\ object in $D$ that $q$ would have accessed, $p$ performs \readss($h$) on the corresponding \vCAS\ object. Recall that all operations in $Q_D$ are read-only, and thus never  perform a \cas.
\end{construction}

For this construction to be legal, we must show that the precondition for \readss($h$) holds. Namely, we need to show that \readss($h$) is never called on a \vCAS\ object that was created after the handle $h$ was produced. Intuitively, this is satisfied since no \vCAS\ object that was created after $h$ can be reachable in the data structure through version nodes with timestamp $h$ or earlier, which are the only version nodes that a query reads. The following claim makes the argument more formal.

\begin{claim}
	In Construction~\ref{con:sololin}, no \vCAS\ object $O$ is ever accessed using a \readss($h$) operation where $h$ was produced before $C$ was created.
\end{claim}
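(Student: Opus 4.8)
The plan is to prove the claim, reading the stray ``$C$'' as the object $O$ so that the statement matches the sentence preceding it: every \readss($h$) call is invoked on a \vCAS\ object that already existed when $h$ was produced. The argument is fundamentally a \emph{reachability} argument: the only \vCAS\ objects a \solo\ query ever touches are those reachable, at the snapshot point, from the immutable entry points by chasing the pointer values returned by \readss($h$); and an object reachable at a given instant must have been allocated before that instant. I would first fix a \takess\ that produced $h$ and let $t^*$ denote the unique instant at which \var{timestamp} is incremented from $h$ to $h+1$, i.e.\ its linearization point. By Theorem~\ref{thm:vcas}, whenever its precondition holds a \readss($h$) call returns exactly the value its object held at $t^*$, so I can reason entirely about the configuration at $t^*$.

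The core is an induction on the number of \vCAS\ objects the query has dereferenced along the pointer chain leading to $O$. In the base case, the query obtains its first reference by reading an immutable entry point directly; since entry points and the objects they name are installed at initialization, before any \takess, that first object was created before $t^*$. For the inductive step, suppose the query is about to call \readss($h$) on $O$, having obtained the pointer to $O$ from an earlier \readss($h$) on some object $O'$. By the induction hypothesis $O'$ was created before $t^*$, so its \readss($h$) meets its precondition and, by Theorem~\ref{thm:vcas}, returns the value $O'$ held at $t^*$---namely a pointer to $O$. It therefore suffices to establish the following \textbf{allocation fact}: if a \vCAS\ object $O'$ holds a pointer to $O$ at time $t^*$, then $O$ was allocated strictly before $t^*$. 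Granting this, $O$ was created before $t^*$ and hence before $h$ was produced, which is what the claim asserts; moreover $O$'s initial timestamp (read from \var{timestamp} at construction, when \var{timestamp} was at most $h$ by monotonicity) is at most $h$, which is precisely the precondition needed for the \readss($h$) on $O$ to be well-defined.

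To prove the allocation fact I would trace how the value ``pointer to $O$'' comes to reside in $O'$ at $t^*$. That value is either the initial value stored by $O'$'s constructor or the \var{newV} of the last successful \vcas\ on $O'$ linearized at or before $t^*$. In the constructor case, $O'$ recorded $O$'s pointer at construction, so $O$ predates $O'$, and $O'$---holding this value at $t^*$---predates $t^*$. In the \vcas\ case the decisive ingredient is the timing of the linearization of a modifying \vcas: it creates its \vnode\ carrying \var{newV} at Line~\ref{line:newnode}, whereas its linearization point is the read of \var{timestamp} on Line~\ref{line:readTS} of the \initTS\ that validates that node, which necessarily executes after the node exists and has been installed at Line~\ref{line:append}. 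Since a pointer value to $O$ can only be produced by $O$'s allocation, $O$ was allocated before Line~\ref{line:newnode} ran, hence before this \vcas's linearization point, which is at most $t^*$.

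I expect the main obstacle to be the interaction between the induction and the appeal to Theorem~\ref{thm:vcas}: the theorem's guarantee for \readss\ presupposes its precondition, which is exactly what we are trying to establish, so the induction must be ordered \emph{along the query's pointer chain} so that the precondition for $O'$ is proven before \readss\ correctness is invoked to read a pointer to the next object $O$. The second delicate part is the real-time timing in the allocation fact: one must insist that the linearization point of a modifying \vcas\ lies after the creation of its value node at Line~\ref{line:newnode}, so that the mere existence of the ``pointer to $O$'' value is causally forced to fall after $O$'s allocation yet before the snapshot point $t^*$. The remaining pieces---monotonicity of \var{timestamp}, immutability of the entry points, and the observation that a \solo\ query acquires every \vCAS\ reference either from an entry point or from a prior \readss---are routine.
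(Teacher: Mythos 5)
Your proposal is correct and takes essentially the same approach as the paper: the paper proves the identical reachability fact by contradiction on the \emph{first} new object the query touches, observing that the version of the old object $O_{old}$ that points to it must carry a timestamp greater than $h$, so \readss($h$) would return an older version instead---which is precisely the contrapositive of your ``allocation fact,'' with your forward induction along the pointer chain playing the role of the paper's minimal-counterexample step. The additional details you supply (anchoring at $t^*$, placing the linearization point of a modifying \vcas\ after Line~\ref{line:newnode}, and checking that the initial timestamp of an object created before $t^*$ is at most $h$, which bootstraps the \readss\ precondition along the chain) are sound and merely make explicit what the paper's proof leaves implicit.
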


\begin{proof}
	Consider a query operation $q\in Q_D$ that uses handle $h$ to run on a data structure $D_\ell$ as prescribed by Construction~\ref{con:sololin}. We say that a \vCAS\ object is \emph{new} if it was created after $h$ was produced, and \emph{old} otherwise.
	Assume by contradiction that $q$ accesses a new \vcas{} object $O_{new}$. $O_{new}$ must be reachable from the root of $D_\ell$ for $q$ to access it. Note that by the way $D_\ell$ is initialized, the root must be an old \vCAS\ object. Without loss of generality, assume $O_{new}$ is the first new object that $q$ accesses in its execution. $O_{new}$ must be pointed to by some old \vCAS\ object $O_{old}$, through which $q$ accessed $O_{new}$. Since the only updates to \vCAS\ objects are via \vcas\ operations, $O_{old}$ must have been updated with a \vcas\ to point to $O_{new}$, thereby creating a new version of $O_{old}$. Note that since $O_{new}$ was created after $h$, this update must have also happened after $h$ was produced, and therefore the version of $O_{old}$ that points to $O_{new}$ has a timestamp larger than $h$. So, $q$ executing \readss($h$) on $O_{old}$ would not access the version pointing to $O_{new}$, but some older version instead. This contradicts the fact that $q$ reaches $O_{new}$.
\end{proof}

Using Construction~\ref{con:sololin}, we can make \solo\ queries linearizable with the bounds specified in the following theorem.

%\Yihan{I think this is similar to Hao's comment. $Q_D$ is defined as ``quiescently linearizable operations'', which means that it might not be read-only. Should it be ``quiescently linearizable queries''?}

%To perform a quiescent query operation $q \in Q_D$ in $D_\ell$, a process $p$ first reads the value of the \snapshot\ into a local variable \var{ts}. Next, it performs a \compinc(\var{ts}) on the \snapshot. \Naama{Replace this description with the appropriate one for the new definition of \snapshot.}
%Then, for any \CAS\ object $q$ would have accessed in $D$, $p$ performs \rd(\var{ts}) on the corresponding \vCAS\ object. Recall that all operations in $Q_D$ are read-only, and thus never perform \cas\ on any object.

%\begin{definition}
%\label{def:Dl-lin}
%	\Hao{rough definition, needs to be formalized}
%	In the data structure $D_\ell$ operations from $L$ are linearized just like they were in $D$ and operations from $Q$ are linearized at the \snapshot{} operation.
%	\Naama{Not sure why we need this definition. Maybe we could include this as part of the theorem statement.}
%\end{definition}

\begin{theorem}
\label{thm:sololin}
	Given a concurrent data structure $D$ with a set of linearizable operations $L_D$ and a set of \sololy{} query operations $Q_D$.
Construction~\ref{con:sololin} produces a linearizable data structure $D_\ell$ that supports operations from both $L_D$ and $Q_D$. This construction maintains the following properties:
	\begin{itemize}
		\item Operations from $L_D$ have the same progress properties in $D_\ell$ as in $D$, and their runtimes are increased by only a constant factor.
		\item Each operation $q \in Q_D$ costs $O(T(q, C_\ell) + W\cdot A)$ where $C_\ell$ is the concurrent state at which $q$ executes the \takess{} operation, $W$ is the number of \vcas\  operations concurrent with $q$ on memory locations accessed by $q$ in the execution, and $A$ is the maximum number of repeated accesses to the same object by the query.
	\end{itemize}
%\Naama{Consider moving the running time bounds to another theorem. Could make this statement much cleaner.}
\end{theorem}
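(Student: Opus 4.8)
The plan is to reduce the linearizability of $D_\ell$ to the assumed solo-linearizability of $D$, using Theorem~\ref{thm:vcas} as a bridge that turns the \vCAS\ implementation into atomic \cas\ objects. I would fix an arbitrary history $H$ of $D_\ell$. By Theorem~\ref{thm:vcas}, every \vcas, \vrd, \takess, and \readss\ instruction can be assigned a linearization point, and moreover every \readss($h$) returns the value its object held at the linearization point of the \takess\ that produced $h$. The key observation is that all \readss\ calls of a single query $q \in Q_D$ use the same handle $h$, so they jointly report the values of one global configuration, namely the configuration $C_\ell$ at the linearization point $\ell_q$ of $q$'s \takess. Hence $q$ effectively reads an atomic snapshot of the underlying objects taken at $\ell_q$, and by the preceding Claim this snapshot touches only objects that already exist at $\ell_q$, so a solo run of $q$ on $C_\ell$ is well defined.

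Next I would build a history $\hat H$ of the original structure $D$, whose \cas\ objects are atomic. At the level granted by Theorem~\ref{thm:vcas}, each $L_D$-operation's \vcas/\vrd\ primitives become ordinary \cas/\rd\ operations on atomic objects placed at their linearization points; these evolve exactly as the sequential specification of the \vCAS\ objects prescribes, so every $L_D$-operation returns the same response as in $H$. Each query $q$ is collapsed into a single atomic block executed at $\ell_q$, reading the current values of the objects it visits. Because $q$ is deterministic in the values it reads, and those values equal the configuration $C_\ell$, the collapsed $q$ follows the same control flow and returns the same response as in $H$, and it runs \emph{solo} by construction. Thus $\hat H \in \mathcal{H}(D, Q_D)$, so $\hat H$ is linearizable by hypothesis. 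Finally, $H$ and $\hat H$ have identical external interfaces---the same invocations, responses, and real-time order (each $\ell_q$ lies inside $q$'s interval since \takess\ is $q$'s first step, and $L_D$-operations keep their original endpoints)---so the linearizability of $\hat H$ immediately yields that of $H$.

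For the time and progress claims on $L_D$: each \rd\ or \cas\ of $D$ is replaced by one \vrd\ or \vcas, which Theorem~\ref{thm:vcas} guarantees to be constant time and non-blocking, so every step is blown up by at most a constant factor and all progress properties are preserved. The query cost is a counting argument over \readss\ calls: \takess\ costs $O(1)$, and the solo logic of $q$ performs $T(q, C_\ell)$ steps, each object access realized by one \readss. By Theorem~\ref{thm:vcas}, a \readss($h$) on object $O$ costs $O(1 + w_O)$, where $w_O$ is the number of successful \vcas\ operations on $O$ with timestamp larger than $h$, all of which are concurrent with $q$ on a location $q$ accesses; hence $\sum_O w_O \le W$. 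The unit costs sum to $O(T(q, C_\ell))$, while the traversal costs sum to $\sum_O (\text{accesses to } O)\cdot w_O \le A \sum_O w_O \le A\cdot W$, giving the bound $O(T(q, C_\ell) + W\cdot A)$.

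The main obstacle I anticipate is making the collapse in the second paragraph fully rigorous: I must argue that moving all of $q$'s \readss\ calls---which in the real execution are spread over an interval during which concurrent updates occur---to the single instant $\ell_q$ yields a history that is genuinely solo and genuinely equivalent to $H$. This rests on the single-handle consistency observation together with determinism of the solo query code, and it must be combined with the already-established Claim that no \readss($h$) ever accesses an object created after $h$, which is exactly what guarantees that the solo execution on $C_\ell$ visits the same locations it would in $D$.
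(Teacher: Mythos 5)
Your proposal is correct and takes essentially the same route as the paper's proof: both reduce linearizability of $D_\ell$ to the assumed solo-linearizability of $D$ by transforming each history so that every query runs solo at (the linearization point of) its \takess{} --- the paper does this by moving all of the query's \readss{} operations to immediately after the \takess{} and replacing them with atomic reads, which is your ``collapse into an atomic block'' in different clothing --- and both then linearize $L_D$-operations at their unchanged steps and each query at its \takess{}. Your cost and progress analysis (constant overhead per \vrd{}/\vcas{}, \takess{} in $O(1)$, and $O(1+w_O)$ per \readss{} with $\sum_O w_O \le W$ and repeated accesses bounded by $A$, yielding $O(T(q,C_\ell) + W \cdot A)$) is exactly the paper's counting argument, stated slightly more explicitly.
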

%\Yihan{We are using $L$ and $Q$ in these two definitions, are they $L_D$ and $Q_D$ as defined above?}

The proof is in Appendix~\ref{sec:soloproof}. We note that in most cases, the number of accesses that a query executes to the same object is 1 (or a small constant). If not, this bound can be improved by caching the values read from the data structure locally to avoid the extra overhead of reading it repeatedly from the concurrent data structure.

\subsection{Adding \Solo{} Queries to Linearizable Data Structures}
\label{sec:add-solo}

Concurrent data structures in the literature are usually designed to support a set of operations that are all linearizable. Thus, the question of whether \sololy{} linearizable operations can be easily incorporated is generally not considered when designing these data structures.
%Recall that a \sololy{} linearizable query is intuitively one that can be run on any snapshot of the data structure, without changing the state.
Is it always possible to run queries in a linearizable manner on a snapshot of any given data structure? How efficient can such queries be?
In this section, we address these questions.

While designing queries to run \sololy\ is certainly much simpler than designing them to be linearizable in the concurrent setting, it is still not as easy as designing queries for a sequential data structure.
This is because, in some cases, linearization points cannot be uniquely determined from the state of shared memory; instead, the linearization points may only be determined at the end of the execution, since they can depend on future events.
If this is the case, a query that is run \sololy\ cannot determine whether a pending update operation  has linearized or not, and, since the query may not change the state, it cannot enforce a placement of the linearization point.
Herlihy and Wing~\cite{herlihy1990linearizability} describe a queue implementation in which the linearization order of the enqueue operations depends on future dequeue operations. For that algorithm, no \solo\ query is possible. Herlihy and Wing~\cite{herlihy1990linearizability} point out that the difficulty in this scenario is the absence of an \emph{abstraction function} from states of the implementation to states of the abstract data type being implemented. % \Naama{Give example of queue from HerlihyWing.}
We therefore define the notion of \emph{direct linearizability}, which intuitively means that there is always such a mapping from every concurrent state in an execution of the concurrent data structure to the abstract state of the abstract data type being implemented.

\begin{definition}
\label{def:abs}
%	A data structure $D$ implementing abstract data type $A$ is said to be \emph{directly linearizable} if there is a
	%An \emph{abstraction function} of a data structure $D$ that is linearizable with solo queries $Q$ implementing abstract data type $A$,
An \emph{abstraction function} of a \solo{} linearizable data structure $D$ with solo queries $Q$ that implements an abstract data type $A$,
is a function $F: C_D \rightarrow C_A$ from concurrent states of $D$ to abstract states of $A$ such that for every history $H\in\mathcal{H}(D, Q)$, there exists a linearization of $E$ such that:
	\begin{enumerate}
		\item $F$ maps the initial state of $D$ to the initial state of $A$ (i.e., $F(C_D^{init}) = C_A^{init}$).
		\item If a concurrent state of $D$, $C_{D, 2}$ can be obtained from another concurrent state $C_{D,1}$ in $H$ without the linearization of any operation between $C_{D,1}$ and $C_{D,2}$, then they map to the same abstract state (i.e., $F(C_{D,1}) = F(C_{D,2})$).
		\item If a concurrent state of $D$, $C_{D, 2}$ can be obtained from another concurrent state $C_{D,1}$ in $H$ where operations, $op_1, \ldots, op_k$, linearized between $C_{D,1}$ and $C_{D,2}$ in this order, then $F(C_{D,2})$ is the state \Eric{a state? [to allow for non-deterministic types]} of $A$ that is obtained from applying $op_1 \ldots op_k$ in this order to $F(C_{D,1})$.
	\end{enumerate}
%\Hao{In the model, we have to say that operations an be linearized at a step, or between a step and a configuration. This ways the set of operations linearized between two configurations is well defined.}
%\Naama{Defined very similarly to Eric's definition, but a little less formally. Maybe the formal version is better?}
\end{definition}

Intuitively, the abstraction function respects the linearization points in the execution of $D$.
At first glance, it seems like the abstraction function's behavior is determined solely by the update operations from $D$.
However, query operations do have an indirect impact because they can affect the linearization points of the update operations, which affects the behavior of the abstraction function.
When the definition is applied to fully linearizable data structures, $Q = \emptyset$, so $\mathcal{H}(D,Q)$ is the set of all histories of $D$.
%In the following paragraphs, we discuss how abstraction functions on fully linearizable data structures relate to existing definitions in the literature.

%\begin{definition}
%	A linearizable data structure $D$ implementing abstract data type $A$ is said to be \emph{directly linearizable} if it has an abstraction function.
%\end{definition}
\begin{definition}\label{def:direct}
	A linearizable data structure is said to be \emph{directly linearizable} if it has an abstraction function.
\end{definition}
%\Hao{Note direct linearizability does not apply to data structure with solo queries. This is not a bad thing, we just need to be care not to use this term for anything with solo queries.}

Direct linearizability is reminiscent of \emph{strong linearizability}~\cite{GHW11}. Strong
linearizability requires that the linearizations can be chosen for histories in a
\emph{prefix-preserving} way: for a prefix $H_p$ of a history $H$, the linearization of $H_p$
must be a prefix of the linearization of $H$. Thus, future events cannot determine
whether a given step in the execution was a linearization point or not. Intuitively, direct
linearizability requires that update operations be strongly linearizable, but does not require
the same behavior from query operations (that do not change the high-level state).
Furthermore, while strong linearizability only requires this ``prefix preserving'' behavior for
parts of the state that can be observed by operations of the data structure, direct
linearizability imposes this behavior on the entire shared state, regardless of the interface
through which operations of the data structure can access it. Appendix~\ref{sec:strong} shows
that strong and direct linearizability are incomparable. However, all strongly linearizable
data structures %in the literature
that we are aware of are also directly linearizable.
%\Naama{Is this true?} \Hao{Yeah, I think you can prove that any strongly linearizable data structure is directly linearizable if you define a two abstract states to be equivalent whenever they are indistinguishable.}
%  \Naama{Discussion of strong linearizability vs direct linearizability. In direct lin, queries don't have to be strong. In strong lin, only parts of the state that are `reflected' by the operations have to be direct...}
%In Appendix~\ref{sec:snap-query}, we give the definition of direct linearizability more formally, in terms of I/O automata.
%\Naama{Moved Sec~\ref{sec:snap-query} to the appendix, although some of it should probably be in a model section instead.}

%\Naama{Show example of abstraction function for the tree: traverse the tree, and return set containing all values that you saw. Maybe have a proof that this satisfies the abstraction function definition. Also remind reader of how the tree works.}

Consider the \NBBST{} binary search tree~\cite{EFRB10} outlined in Section~\ref{sec:query}. Recall that \NBBST{} implements the \emph{ordered set} abstract data type, with keys as elements.  To avoid special cases, the tree includes two leaves containing dummy keys.
%where each element has a key and a value.
%We can construct an abstraction function for the \NBBST{} as follows: for any configuration $C$ of the concurrent implementation, return the set of $(key,value)$ pairs obtained by abstracting each leaf node reachable from the root of the tree as just its key and value.

\begin{proposition}
\label{prop:bst-abstract}
	Consider the function $F: C_E \rightarrow C_A$ that maps concurrent states of the \NBBST{} BST to states of the ordered set abstract data type as follows. Given a concurrent state $C_E$ of \NBBST{}, $F(C_E)$ is the set of keys in leaf nodes reachable from the root in $C_E$ except for the two dummy keys. $F$ is an abstraction function of \NBBST.
\end{proposition}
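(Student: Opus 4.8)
The plan is to verify directly the three conditions in Definition~\ref{def:abs}, using the standard linearization of \NBBST~\cite{EFRB10}, in which each successful \insertop\ or \deleteop\ is linearized at the single child-pointer \cas\ that splices its new internal node (together with a new leaf) into the tree, or splices an internal node (together with the deleted leaf) out of it. The whole argument rests on one structural observation that I would establish first: in every reachable concurrent state, the nodes reachable from the root form a binary tree whose internal nodes carry routing keys and whose leaves carry keys, and the value of $F$ depends only on the child pointers of the reachable internal nodes. Every other mutable field of a node (in particular the \texttt{Info}/\texttt{state} coordination field used by the lock-free locking and helping mechanism) is either immutable or does not affect which leaves are reachable from the root. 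Thus it suffices to track how the child pointers of reachable internal nodes evolve.

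For condition (1), the initial \NBBST\ consists of the root together with its two dummy leaves, so the set of keys in reachable leaves minus the two dummy keys is empty, which is exactly the initial state $C_A^{init}$ of the ordered-set ADT; hence $F(C_E^{init}) = C_A^{init}$.

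For conditions (2) and (3) I would show that $F$ changes only at the steps we have chosen as linearization points, and that it changes in exactly the right way. First, the set of reachable leaves can change only when a successful \cas\ alters a child pointer of a currently reachable internal node; reads, node allocations, failed \cas{}es, and the \cas{}es on coordination fields (flagging, marking, and unflagging) do not change any reachable child pointer, and a marked node remains reachable until the splicing \cas\ on its grandparent is performed. In \NBBST\ every such reachability-changing \cas\ is precisely the splicing \cas\ at the linearization point of a successful \insertop\ or \deleteop, and, because the flag/mark protocol guarantees that the splicing \cas\ associated with a given update succeeds at most once, each linearization point is counted exactly once. This gives condition (2): between two consecutive linearization points no reachable child pointer changes, so $F$ is constant. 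For condition (3), I would appeal to the correctness of \NBBST: the splice of a successful \insertop$(x)$ replaces the reachable leaf where $x$ belongs by a small subtree containing that old leaf and a new leaf holding $x$, adding exactly $x$ to the reachable-leaf set (with $x$ previously absent, matching the ordered-set semantics), while the splice of a successful \deleteop$(x)$ removes the leaf holding $x$ together with its parent internal node, deleting exactly $x$. Applying $op_1,\dots,op_k$ in linearization order therefore transforms $F(C_{D,1})$ into $F(C_{D,2})$ as required.

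The main obstacle is the case analysis underlying condition~(2): rigorously confirming that none of the auxiliary steps of the lock-free protocol --- the \cas{}es on \texttt{Info} fields, the marking of a node slated for deletion, and the duplicated work performed by helpers --- ever changes the set of leaves reachable from the root, and that the single splicing \cas\ per successful update is exactly what the standard linearization designates. This is where the detailed structure and the published correctness argument of \NBBST~\cite{EFRB10} are needed; once that invariant is in hand, matching the per-operation change in $F$ to the ordered-set semantics is routine.
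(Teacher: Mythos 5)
Your approach is the same as the paper's: check the conditions of Definition~\ref{def:abs} against the standard linearization of \NBBST{}, using the structural fact (Lemmas~29 and~30 of the technical report accompanying \cite{EFRB10}, which the paper cites for exactly this purpose) that the reachable-leaf set changes only at the single successful child \CAS{} of each tree-modifying \insertop{} or \deleteop{}, and that this change adds or removes exactly the operation's key. Your treatment of conditions (1) and (2), including the observation that flag, mark, unflag and \texttt{Info}-field \CAS{}es never alter the set of leaves reachable from the root, matches the paper's sketch.

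However, there is a genuine gap in your verification of condition (3): you only analyze \emph{successful} updates, but the sequence $op_1,\ldots,op_k$ of operations linearized between two concurrent states also includes \Insertop{} and \Deleteop{} operations that return \false{}, as well as \Findop{} operations. These do not change the concrete state, so $F$ does not change across their linearization points --- yet condition (3) requires that \emph{applying} them to $F(C_{D,1})$ also produce no change. For the ordered-set ADT, applying \Insertop$(k)$ to a set $S$ yields $S \cup \{k\}$; so if a failed \Insertop$(k)$ were linearized at a step where $k$ is \emph{not} in any reachable leaf, condition (3) would force $F$ to gain $k$ while the tree is unchanged, a contradiction. One must therefore verify (as the paper's proof explicitly does) that each \Insertop$(k)$ returning \false{} is linearized at a point where $k$ appears in a reachable leaf, each \Deleteop$(k)$ returning \false{} at a point where no leaf contains $k$, and each \Findop$(k)$ at a point where its return value matches the presence of $k$ --- facts that follow from the search-path linearization arguments of \cite{EFRB10} but do not follow from anything in your proof. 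Without this case analysis your argument establishes condition (3) only for histories in which every update modifies the tree, which is insufficient.
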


\begin{proof}
\Eric{The proof as previously written wasn't strictly correct, since it didn't handle the case of inserts and deletes that return false, so I rewrote it.  Is it too brief now?}
This theorem is proved as Lemmas 29 and 30 in the technical report \cite{EFRB10},
so we just sketch it here.
Initially, the tree has only the two leaves containing the dummy keys, which $F$
maps to the empty set, as desired.
Each \Insertop($k$) that modifies the tree is linearized at the child \CAS\ that adds a leaf
containing $k$ to the tree (and it is shown that $k$ was not present in the tree before
this change).
Similarly, each \Deleteop($k$) that modifies the tree is linearized at the child \CAS\ that removes a leaf
containing $k$ from the tree.
Each \Insertop($k$) that returns \false\ is linearized when $k$ is in a leaf of the tree, and each \Deleteop($k$) that returns \false\ is linearized  when there is no leaf
containing $k$, so these operations have no effect on the tree or on the abstract state of the set.
Each terminating \Findop($k$) returns \true\ if and only if $k$ appears in some leaf
at the linearization point of the \Findop.
It follows that each operation is linearized so that its effect on the set of keys stored in leaves
exactly matches its effect on the abstract state of the ordered set that the tree implements.
\end{proof}

Abstraction functions can help us both design \solo{} queries and prove their correctness.
It is often helpful to reason about \solo{} queries based on how they behave on each concurrent state.
For this purpose, we present the following definition.

%\Hao{Define reachable configuration in prelim.}

\begin{definition}
	Let $op$ be an operation from a concurrent data structure $D$ and let $C$ be a reachable concurrent state, we define $op(C)$ to be the response value of $op$ when run solo on concurrent state $C$.
\end{definition}

Now we present a proof technique for showing that a read-only operation $q$ is a \solo\ query.
Consider a concurrent data structure $D$ that implements an abstract data type $A$ and is linearizable with \solo\ queries $Q$.
Suppose $D$ has an abstraction function $F$.
We add a query operation $q_A$ to $A$ to get the ADT $A'$ and we add $q$ to $D$ to get $D'$.
Our goal is to show that $D'$ is an implementation of $A'$ that is linearizable with \solo\ queries $Q \cup \{q_A\}$.
The following observation says that it suffices to show $q(C) = q_A(F(C))$ for all reachable concurrent states $C$.

%\Hao{The abstraction function should not make use of program counters, only the state of shared variables.}

%\Hao{I implicitly assume that the set of reachable configurations for $D$ does not change when we add read-only operations to $D$.}

\begin{observation}
\label{obs:proof-tech}
	If $q(C) = q_A(F(C))$ for any reachable concurrent state $C$, then $D'$ is an implementation of $A'$ that is linearizable with \solo\ queries $Q \cup \{q_A\}$. Furthermore, in this case, $F$ is still an abstraction function for $D'$.
\end{observation}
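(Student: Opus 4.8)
The plan is to reduce the linearizability of $D'$ to that of $D$ by exploiting the fact that both the new query $q$ and the abstract query $q_A$ are read-only. Fix a history $H \in \mathcal{H}(D', Q \cup \{q_A\})$ and let $H'$ be the history obtained by deleting every instance of $q$ from $H$. Since each such instance is run \sololy\ and performs no writes, during its interval no other process takes a step and the shared configuration is unchanged; hence deleting these intervals leaves the remaining operations interleaved and behaving exactly as before, so $H' \in \mathcal{H}(D, Q)$. Because $D$ is linearizable with \solo\ queries $Q$ and admits the abstraction function $F$, the history $H'$ has a linearization $L'$ for which $F$ satisfies the three conditions of Definition~\ref{def:abs}; in particular, for every reachable configuration $C$ occurring in $H'$, the value $F(C)$ is the abstract state obtained by applying, in linearization order, exactly the operations of $L'$ that precede $C$.

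Next I would reinsert the deleted $q$ instances to build a linearization $L$ of $H$ over the augmented type $A'$. Consider one instance of $q$ and let $C$ be the concurrent state at the start of its (solo) interval. As argued above, the shared configuration stays equal to $C$ for the whole interval, so by definition the instance returns $q(C)$, and $C$ is exactly the configuration filling the corresponding gap of $H'$. I would place the linearization point of this $q$ instance at that gap, which lies inside the instance's real-time interval. The abstract state visible to a query linearized there is $F(C)$ by the abstraction-function property, and since $q_A$ is read-only it returns $q_A(F(C))$ without altering the abstract state. The hypothesis $q(C) = q_A(F(C))$ now guarantees that the response recorded for this instance in $H$ agrees with the value $q_A$ produces on $A'$ at its linearization point.

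Carrying this out for every $q$ instance yields $L$. Its real-time order is respected because $L'$ already respects it and each inserted point lies in the interval of its own operation; its correctness with respect to $A'$ follows because the update operations and the queries in $Q$ behave as in $L'$ (unaffected by the read-only insertions), while each new $q$ instance matches $q_A$ as just shown. This establishes that $D'$ is a linearizable implementation of $A'$ with \solo\ queries $Q \cup \{q_A\}$. To see that $F$ remains an abstraction function for $D'$, I would verify the three conditions of Definition~\ref{def:abs} against the linearization $L$ just built: condition~1 is inherited unchanged, and conditions~2 and~3 concern only state transitions caused by linearized operations, which $q$ and $q_A$ never induce since they leave both the concurrent and abstract states fixed; hence the transitions are identical to those in $L'$, where $F$ already satisfies them.

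I expect the main obstacle to be the careful alignment of $q$'s real-time interval with the correct abstract state. Two points need attention: justifying that excising the solo, read-only queries yields a legitimate history $H'$ of $D$ in which the surviving operations behave identically (the ``solo plus read-only implies removable'' step), and arguing that the configuration $C$ seen throughout $q$'s interval is precisely the one whose image $F(C)$ the abstraction function assigns at the chosen linearization point. The subtlety flagged after Definition~\ref{def:abs}---that queries can indirectly shift the linearization points of updates---does not bite here, because a \solo\ query admits no concurrent steps and so cannot move any update's linearization point; this is exactly what lets us keep the update linearizations of $L'$ verbatim.
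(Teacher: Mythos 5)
Your proof is correct: the paper states this result as an Observation with no accompanying proof, and your argument --- excising the solo, read-only $q$ instances to obtain a history of $D$, taking the linearization $L'$ guaranteed together with the abstraction function $F$, and reinserting each $q$ instance with its linearization point placed in the (collapsed) solo gap, where the hypothesis $q(C) = q_A(F(C))$ matches the concrete response to the abstract one while read-onliness of $q_A$ keeps every other response and every abstract-state transition in $L'$ intact --- is precisely the justification the paper leaves implicit. The only loose ends are standard and minor: instances of $q$ left incomplete at the end of a history can simply be discarded from the linearization (they are read-only and have no response to justify), and your implicit use of the fact that the configuration $C$ at the start of a solo interval is a reachable state of $D$ itself rests on the observation, which the paper makes explicitly in the sentence following the statement, that adding a read-only operation does not change the set of reachable concurrent states.
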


Note that the set of reachable concurrent states for $D$ does not change when we add a read-only operation $q_A$ to $D$.
The fact that $F$ is still an abstraction function for $D'$ is important because it allows us add \solo\ queries one at a time. This is summarized by the following observation.
\Eric{It's okay to leave this discussion, but I'm still not completely convinced that adding
queries one at a time is that big a deal.  If there is an abstraction function
wrt the update ops, why not just add a batch of queries all at once?}

\begin{observation}
\label{obs:combine-ops}
%\Yihan{Maybe put this observation in the appendix? It looks like a corollary of the previous one and is not explicitly used in the following parts.}
	Suppose two data structures have the exact same linearizable operations $L$, but different solo queries $Q_1$ and $Q_2$.
	If the same abstraction function works for both queries, then adding $Q_2$ to the first data structure results in a new data structure that is linearizable with solo queries $Q_1 \cup Q_2$.
\end{observation}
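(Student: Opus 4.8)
The plan is to reduce Observation~\ref{obs:combine-ops} to an iterated application of the proof technique of Observation~\ref{obs:proof-tech}, installing the queries of $Q_2$ into the first data structure one at a time. Throughout, let $D_1$ denote the data structure with operations $L \cup Q_1$, let $D_2$ denote the one with operations $L \cup Q_2$, let $F$ be the common abstraction function, and write $A$ for the base abstract data type that $F$ maps into, so that each \solo{} query $q$ (from $Q_1$ or $Q_2$) corresponds to an abstract query $q_A$ on $A$. The hypothesis is read as: $F$ is an abstraction function certifying the \solo{} queries $Q_1$ on $D_1$ and also the \solo{} queries $Q_2$ on $D_2$.

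First I would record two facts. The data structures $D_1$, $D_2$, and the target $D_{12}$ with operations $L \cup Q_1 \cup Q_2$ all share the same set of reachable concurrent states, because every query is read-only and hence never alters the shared state; this is the remark preceding the statement. Consequently the pointwise condition ``$q(C) = q_A(F(C))$ for every reachable concurrent state $C$'' refers to the same collection of states $C$ regardless of which queries are installed. From the hypothesis that $F$ certifies $Q_2$ on $D_2$, I would then extract exactly this identity for each $q \in Q_2$: since $q$ is read-only and runs solo, no other process takes a step during its interval, so the concurrent state, and hence the abstract state $F(C)$ by condition~(2) of Definition~\ref{def:abs}, is frozen throughout $q$'s execution; any valid linearization must therefore place $q$ at a moment when the abstract state equals $F(C)$, forcing its response to be $q_A(F(C))$.

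With these in hand, I would induct on the queries of $Q_2$ not already in $Q_1$, say $q_1, \ldots, q_k$. The base case is $D_1$ itself, which has $F$ as an abstraction function by hypothesis. For the inductive step, suppose the data structure with operations $L \cup Q_1 \cup \{q_1, \ldots, q_{i-1}\}$ is linearizable with those \solo{} queries and still has $F$ as an abstraction function. Because $q_i$ satisfies $q_i(C) = (q_i)_A(F(C))$ on every reachable $C$ (the identity extracted above, valid over the common state set), Observation~\ref{obs:proof-tech} applies: adding $q_i$ yields a data structure that is linearizable with \solo{} queries $Q_1 \cup \{q_1, \ldots, q_i\}$, and crucially its ``furthermore'' clause guarantees that $F$ remains an abstraction function. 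After $k$ steps we obtain $D_{12}$, linearizable with \solo{} queries $Q_1 \cup Q_2$, as required.

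The main obstacle---and the reason the argument must proceed one query at a time rather than all at once---is maintaining the invariant that $F$ continues to be an abstraction function after each addition; this is precisely what the ``furthermore'' part of Observation~\ref{obs:proof-tech} supplies and what lets the induction close. The only other point requiring care is the extraction of the pointwise identity from solo linearizability of $Q_2$ on $D_2$, which hinges on queries being read-only so that the abstract state cannot move during any solo execution; once that is in place, each invocation of Observation~\ref{obs:proof-tech} is routine.
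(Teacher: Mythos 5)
Your proposal is correct and follows exactly the route the paper intends: the text preceding Observation~\ref{obs:combine-ops} explicitly frames it as a consequence of the ``furthermore'' clause of Observation~\ref{obs:proof-tech}, which licenses adding \solo{} queries one at a time while preserving the abstraction function, and the paper offers no proof beyond this. Your extraction of the pointwise identity $q(C) = q_A(F(C))$ for $q \in Q_2$ from the hypothesis fills in a step the paper leaves implicit, and your justification is sound (strictly, since the shared state is frozen during a solo read-only run, $F(C)$ is fixed trivially, and it is condition~(3) of Definition~\ref{def:abs} that forces any operation linearized inside $q$'s interval to be an abstract no-op, so $q$'s response must be $q_A(F(C))$ in any legal linearization).
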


Next, we show how to use the abstraction function as a guide for designing solo queries.
If the abstraction function is computable and there is some way of viewing/traversing the state of shared memory, then an easy, but not necessarily efficient, method would be to first traverse the state of $D$, then use the abstraction function to arrive at an abstract state, and finally compute the query on the abstract state.
This query literally computes $q_A(F(C))$, so we can apply Observation \ref{obs:proof-tech}.
This is inefficient, since traversing the entire concurrent state often takes much longer than executing the query.
We show examples of how to compute queries designed for a sequential version of the data structure on a concurrent state.

\subsubsection{Solo queries for \NBBST{}}
\label{sec:solo-bst}

\Eric{Changed external to leaf-oriented for consistency with other sections}
Consider the \NBBST{}.
The concurrent state of the \NBBST\ includes a lot of information used to coordinate concurrent updates.
By removing everything except the root pointer, the \var{key}, \var{left}, \var{right} fields of each \var{Internal} node, and the \var{key} fields of each \var{Leaf} node, we end up with a standard leaf-oriented BST (with child pointers, but no parent pointers).
This means that sequential read-only queries that work on a leaf-oriented BST, such as predecessor or range queries, can be run on the \NBBST\ as is, because they only access fields that we keep.
In the following theorem, we show that these read-only queries can be added to the \NBBST\ as \solo\ queries without any modification.

%\Hao{Concurrent state is more natural to use in this proof than configuration. They might not be entirely the same because concurrent state doesn't include program counters.}

\begin{theorem}
\label{thm:bst-main}
	Let $Q$ be a set of read-only, sequential operations on a leaf-oriented BST implementing a set of abstract queries $Q_A$.
	Let $A'$ be an ADT that supports ordered set operations as well as queries from $Q_A$.	
	Adding the operations in $Q$, without modification, to the \NBBST\ yields a concurrent implementation of $A'$ that is linearizable with \solo{} queries $Q$.
\end{theorem}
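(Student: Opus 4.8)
The plan is to reduce the claim to the proof technique of Observation~\ref{obs:proof-tech}, using the abstraction function $F$ supplied by Proposition~\ref{prop:bst-abstract}. Since $F$ is an abstraction function for the bare \NBBST\ and, by Observation~\ref{obs:proof-tech}, remains one after any read-only query is added, I can introduce the queries of $Q$ one at a time and merge them via Observation~\ref{obs:combine-ops}; thus it suffices to fix a single $q\in Q$ with abstract counterpart $q_A\in Q_A$ and prove the identity $q(C)=q_A(F(C))$ for every reachable concurrent state $C$ of the \NBBST. Once this identity holds for each $q$, Observation~\ref{obs:proof-tech} directly yields that the augmented structure $D'$ is a linearizable implementation of $A'$ with \solo{} queries $Q$, and that $F$ is still an abstraction function for $D'$.

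First I would isolate the structural fact that makes the queries meaningful: in every reachable concurrent state $C$, the substructure obtained from the root pointer together with the \var{key}, \var{left}, \var{right} fields of internal nodes and the \var{key} fields of leaves is a genuine leaf-oriented BST, i.e., a finite binary tree whose internal guide keys correctly partition the keys stored in its leaves. This is precisely the search-tree ordering invariant maintained by the \NBBST, established in the technical report of Ellen et al.~\cite{EFRB10}; updates preserve it because each modifying child-pointer \cas\ atomically splices a subtree in or out while respecting the ordering. The leaves reachable from the root in state $C$ hold exactly $F(C)$ together with the two sentinel (dummy) keys, which by the initialization of the \NBBST\ are chosen larger than every real key and hence fall outside the domain of the queries in $Q$.

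Given this invariant, the identity $q(C)=q_A(F(C))$ follows almost immediately. A query $q\in Q$ is a sequential read-only algorithm for leaf-oriented BSTs, so it inspects only the retained \var{key}, \var{left}, and \var{right} fields and never touches the descriptor machinery used to coordinate concurrent updates. Because a \solo{} query runs to completion with no interleaved steps, executing $q$ \solo{} on $C$ therefore behaves identically to running $q$ on the abstract leaf-oriented BST that $C$ encodes. Since $q$ correctly implements $q_A$ on leaf-oriented BSTs, and that BST stores exactly the keys $F(C)$ once the out-of-range sentinels are discarded, we obtain $q(C)=q_A(F(C))$, as required.

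The main obstacle is the structural argument of the second step rather than the bookkeeping in the first and third. One must be certain that the search-tree ordering invariant holds in \emph{every} intermediate configuration, including configurations reached in the middle of a concurrent insertion or deletion, and not merely in quiescent states, since it is exactly this invariant that guarantees a \solo{} traversal guided by the internal keys reaches the correct leaf. I would discharge this by appealing to the invariants already proved for the \NBBST\ in~\cite{EFRB10}, taking care to confirm that deleting the coordination fields leaves unchanged both the set of leaves reachable from the root and the guide key encountered along every root-to-leaf search path, so that the stripped structure $q$ actually traverses is faithful to the tree whose abstract state $F(C)$ describes.
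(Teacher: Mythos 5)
Your proposal is correct and follows essentially the same route as the paper's proof: fix $q\in Q$, establish $q(C)=q_A(F(C))$ via the decomposition of $F$ from Proposition~\ref{prop:bst-abstract} into a field-stripping map onto a leaf-oriented BST composed with the textbook leaf-to-set mapping (the paper's $F_A\circ F_I$), and conclude by Observations~\ref{obs:proof-tech} and~\ref{obs:combine-ops}. You are in fact somewhat more careful than the paper, which asserts ``it is easy to verify that $F=F_A\circ F_I$'' without explicitly checking, as you do, that the stripped structure satisfies the BST ordering invariant in every intermediate configuration and that the dummy sentinel keys are properly excluded.
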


\begin{proof}

  Let $F$ be the abstraction function from Proposition \ref{prop:bst-abstract} for \NBBST.
  Pick any $q \in Q$ and let $q_A \in Q_A$ be the abstract operation that it implements.
  Our goal is to show that $q(C) = q_A(F(C))$ for all reachable concurrent states $C$. Then we can apply Observations \ref{obs:proof-tech} and \ref{obs:combine-ops} to complete the proof.

  We begin by defining a mapping $F_A$ from states of leaf-oriented BSTs to states of $A$ and a mapping $F_I$ from concurrent states to states of a leaf-oriented BST.
  (The $I$ in $F_I$ stands for intermediate state because it is in between the abstract state and the concurrent state.)
  For $F_A$ we use the textbook mapping which maps an leaf-oriented BST to the set of keys that appear in its leaves.
  Given an leaf-oriented BST state $S$, return value of $q$ on state $S$ (denoted $q(S)$) equals $q_A(F_A(S))$.
  To compute the mapping $F_I$, we start with a concurrent state and remove everything except the root pointer, the \var{key}, \var{left}, \var{right} fields of each \var{Internal} node, and the \var{key} fields of each \var{Leaf} node.
  Since $q$ only accesses the fields that we keep, it cannot tell the difference between running on a concurrent state $C$ and running on $F_I(C)$.
  Therefore $q(C) = q(F_I(C))$.
  It is easy to verify that $F = F_A \circ F_I$ and this completes the proof because $q(C) = q(F_I(C)) = q_A(F_A(F_I(C))) = q_A(F(C))$.
\end{proof}

We apply Construction~\ref{con:sololin} on top of Theorem \ref{thm:bst-main} to get a data structure $D_\ell$ that supports \var{insert}, \var{delete}, and \var{find}, as well as linearizable implementations of any query for which there is a read-only sequential algorithm.
By Theorem \ref{thm:sololin}, we maintain the efficiency of \var{insert}, \var{delete}, and \var{find} up to constant factors and for each new query operations in $D_\ell$, it is wait-free and its runtime is proportional to the sequential cost of the query plus $W\cdot A$, where $W$ is the number of \vcas{} operations that occur during the query and that operate on objects accessed by the query, and $A$ is the maximum number of repeated accesses to the same object by the query.
Most read-only, sequential operations on a leaf-oriented BST, such as \var{predecessor} and \var{range\_query}, can be written so that each query accesses a memory location no more than a constant number of times.
For such operations, the added cost is just $O(W)$.
For example, consider a \var{range\_query} operation that computes the list of keys within a certain range.
If we start with a sequential implementation that takes $O(h + k)$ time, where $h$ is the height of the BST and $k$ is the number of keys within the specified range, then the corresponding concurrent query in $D_\ell$ would take $O(h' + k + W)$ time, where $h'$ is the height of the concurrent tree at the linearization point of the operation.

The \NBBST\ is an easy example because the function $F_I$ from the proof of Theorem \ref{thm:bst-main} is essentially an identity function.
We show a more complicated example with Harris's linked list in Appendix \ref{sec:solo-harris-ll}.
With an appropriately defined $F_I$, the proof structure we used for Thereom \ref{thm:bst-main} works for Michael and Scott's queue \cite{MS96}, Harris's linked list \cite{Harris01}, Natarajan and Mittal's BST \cite{NM14}, and chromatic BSTs \cite{BER14}.
For these algorithms, we need to slightly modify the sequential, read-only operations to make them solo queries.
%See Appendix \ref{sec:solo-harris-ll} for an example of this.
The mapping $F_A$ is always defined to be the standard mapping from sequential to abstract state and the key property to prove is that $F = F_A \circ F_I$.

\section{Proof of Theorem~\ref{thm:sololin}}\label{sec:soloproof}

\Eric{Move this to where the theorem is now}

\begin{proof}
	We construct $D_\ell$ from $D$ as described by Construction~\ref{con:sololin}.
	We want to show that $D_\ell$ is a linearizable implementation of the data structure $D$ with all of its operations in $Q_D$ and $L_D$. We do so by mapping each history $H_\ell$ of $D_\ell$ to a history $H$ of $D$ in which all \sololy{} linearizable queries are run in isolation, and in which all read and \CAS\ operations return the same values as the \readss{}, \vrd{} and \vcas{} operations in $H_\ell$.
	Furthermore, $H_\ell$ and $H$ will have the exact same high-level history.
	
	%	Given a history $H_\ell$ of $D_\ell$, we map it to a history $H$ of $D$ as follows.
	%	(1) Recall that every operation $q\in Q_D$ is implemented in $D_\ell$ with an increment of the \snapshot, followed by \rd{(ts)} operations. To map $H_\ell$ to $H$, we move all \rd{(ts)} operations executed by $q$ to appear immediately after the  increment of the \snapshot, in the same order. We then remove the increment of the \snapshot, and replace all \rd{(ts)} operations with reads of the corresponding \CAS\ objects of $D$.
	%	(2) For every other \rd{} and \vcas{} operation that appears in $H_\ell$, we simply map it to a read or \cas\ (respectively) on the corresponding \CAS\ object in $D$, without moving it in the history.
	
	Given a history $H_\ell$ of $D_\ell$, we map it to a history $H$ of $D$ as follows.
	(1) For every query operation $q \in Q_D$ we move all \readss($h$) operations executed by $q$ to appear immediately after the \takess\ that returned $h$, in the same order. We then remove the \takess{}, and replace all \readss($h$) operations with reads of the corresponding \CAS\ objects of $D$.
	(2) For every \vrd{} or \vcas{} operation that appears in $H_\ell$, we simply map it to a \rd\ or \cas\ (respectively) on the corresponding \CAS\ object in $D$, without moving it in the history.
	
	Note that by the definition of the \vCAS\ object, \vrd{} and \vcas{} behave the same as \rd\ and \cas\ in \CAS\ objects. Furthermore, note that the only operations that are moved in $H_\ell$ to form $H$ are \readss\ operations, which do not affect the state of the \vCAS\ object they operate on. Thus, all \rd\ and \cas\ operations in $H$ return the same values that their corresponding \vrd{} and \vcas\ operations returned in $H_\ell$.
	Recall from the definition of the \vCAS\ object that \readss($h$) always returns the value of the \vCAS\ object it operates on at the time that handle $h$ was produced. Thus, a \rd{} of a \CAS\ object in $H$ executed at the concurrent state at which the handle $h$ was produced returns the same value as the \readss($h$) operation anywhere in the history $H_\ell$. %\Naama{Updated description. A little informal but I think it gets the point across. Ok? }
	
	Since $D$ is a \sololy{} linearizable data structure and $H$ is a legal history of $D$ in which all \solo{} operations run in isolation, $H$ is a linearizable history. Since all operations in $H$ return the same values as they return in $H_\ell$, $H_\ell$ is also linearizable. Furthermore, we can linearize any operation $\ell \in L_D$ in $H_\ell$ at the same step as it linearizes in $H$ (where we map steps of $H_\ell$ to steps of $H$ in the same way as above), and any operation $q \in Q_D$ at its \takess{} operation.
	
	To show the required running time bounds, note that operations in $D$ and operations in $D_\ell$ access the same number of base objects. The difference in running time for operations in $L_D$ is strictly due to the time it takes to access the \vCAS\ object for \vcas\ and \vrd{} operations. By Theorem~\ref{thm:vcas}, this amounts to constant overhead. Operations $q\in Q_D$ execute one \takess{} operation, which takes constant time, and then replace every \rd\ they would do in the implementation of $D$ with a \readss($h$) where $h$ is the handle returned by the \takess.  By Theorem~\ref{thm:vcas}, the running time of each \readss($h$) is proportional to the number of successful \vcas\ operations on that object since $h$ was produced. Note that all such \vcas\ operations on all \vCAS\ objects that the query accesses are concurrent with the query itself. Thus, we get our desired time bounds.
	%  \Naama{Flesh this out. There is also a single access to the \snapshot\ object, but that's easy.}
\end{proof} 
% !TEX root =  main.tex

\section{Relationship of Strong Linearizability to Direct Linearizability}
\label{sec:strong}

Strong linearizability was introduced by Golab, Higham and Woelfel \cite{GHW11}
to provide a stronger guarantee that permits reasoning about concurrent
executions that involve randomness.
At first glance, it seems that this condition might be what is required
for our \approach\ to be applicable.
We show here that strong linearizability is \emph{not} comparable to direct linearizability
(defined in Definition \ref{def:direct}),
which is the property required for our \approach{}.

Intuitively, an implementation is strongly linearizable if linearization points for
each operation can be chosen as the execution proceeds, without needing 
to know what happens later in the execution.
(See \cite{GHW11} for the formal definition; the informal definition will suffice
for this discussion.)

We first show that strong linearizability does not imply direct linearizability.
Consider a non-deterministic ADT $A$ stores a single bit and provides the following two operations. 
\op{Write-random-bit}, which sets the bit to either 0 or 1, non-deterministically, 
and returns {\it ack}).
\op{Read} simply returns the current value of the bit.
%Note that $T$ has no indistinguishable states.
Let $D$ be an implementation that delays the choice of the random bit written
by a \op{write-random-bit} until the first subsequent \op{read} operation.
More precisely, $D$ uses a writable \CAS\ object $X$ with three possible states, $\bot, 0,1$.
A \op{write-random-bit} operation simply writes $\bot$ into $X$.
A \op{read} does a \cas($X,\bot,random(0,1)$) and returns the new value if the \cas\ is successful, or the old value if the \cas\ is unsuccessful.
Since each operation performs only one shared-memory access in $D$,
that access must serve as the linearization point of the operation.
It is easy to see that this linearization is correct.
Thus, linearization points can be determined without having to know what happens
later in the execution.  In other words, $D$ is strongly linearizable.
However, $D$ is not directly linearizable:
when $X$ is in state $\bot$, there is no abstract state that can
be used as the value of the abstraction function $F$.
If $F(\bot)=0$, then an execution in which the subsequent read returns 1
would violate Definition \ref{def:abs}.  A similar problem arises if $F(\bot)=1$.

Indeed, our \approach\ would fail if we tried to apply it to $D$:  
if a \takess\ is performed between 
a \op{write-random-bit} and the first subsequent \op{read},
reading the snapshot would yield $\bot$, and it would be impossible to conclude
what state of $A$ this corresponds to.
Thus, strong linearizability is not a sufficient condition
for our \approach\ to be applicable.

Next, we show that direct linearizability does not imply strong linearizability.
The snapshot ADT \cite{AADGMS93} stores a vector of values and allows processes to update components
of the vector or perform a \op{scan} that reads the whole vector atomically.  The classic implementation of \cite{AADGMS93}
implements a snapshot using an array of values (with associated timestamps to avoid ABA problems).
Updates are performed by writing the new value to the appropriate location in the array
and changing its timestamp.  A \op{scan} reads the array repeatedly until 
getting identical results twice.  This implementation is not strongly 
linearizable \cite{GHW11}, but it is directly linearizable:  the abstraction function simply
strips the timestamps from the elements stored in the array to get the state of the
ADT.

% !TEX root =  main.tex

\section{Examples}\label{sec:examples}

Our first example focuses on the concurrent queue implementation by Michael and Scott presented in~\cite{MS96}.
We will call this implementation \MSQ. Thus, we have $D = \MSQ$, and $A$ is the abstract data type of a FIFO queue
that stores integers and supports the operations \enqueue\ and \dequeue.

We start by describing how \MSQ\ works.
\MSQ~\cite{MS96} implements the queue using a simply-linked list of Node objects, each
storing a key and a \var{next} pointer pointing to the next Node.
Two pointers, called \var{Head} and \var{Tail}, point to the
first and the last element of the list that implements the queue, respectively.
The first Node of the list is always a dummy Node.
%As nodes are extracted
%from the queue, different nodes play the role of the dummy node.
Thus, the elements of the queue are the keys of the Nodes starting from the
second Node of the list up until its last Node.
Initially, the list contains just the dummy Node, whose key can be arbitrary
and its \var{next} pointer is equal to NULL. At each point in time, the list
contains those elements that have been inserted in the queue
and have not yet been deleted, in the order of insertion.
It also contains the last element that has been dequeued
as the first element of the list (i.e., as the dummy Node).

To insert a key $k$ in the queue, a process $q$ has to call \Enqueue($k$).
\Enqueue\ first allocates a new Node $nd$ with key $k$ and its \var{next} field equal to NULL.
It then reads \var{Tail} and checks whether the \var{next} field of the Node it
points to is equal to NULL. If this is so, \var{Tail} points to the last
element of the queue, and \Enqueue\ attempts to insert $nd$
after this Node using a \cas. If this \cas\ is successful,
then $q$ performs one more \cas\ trying to update \var{Tail} to point to $nd$.
Otherwise, some other process managed to
insert its own Node as the next to the last one, so $q$ has to retry.
If the Node pointed to by \var{Tail} does not have its \var{next} field equal to NULL,
then some process has managed to insert its own Node as the next Node
to the one pointed to by \var{Tail} but it has not yet updated \var{Tail}
to point to this Node (i.e., \var{Tail} is falling behind). To ensure lock-freedom, whenever $q$
discovers that \var{Tail} is falling behind, it helps
by updating \var{Tail} to point to the last Node of the list, before it restarts its own operation.

A process $q$ executing \Dequeue, reads both \var{Head} and \var{Tail}. If they both point to the same Node and the \var{next} field
of this Node is NULL, then the queue is empty (it contains just the dummy Node)
so \false\ is returned. If they point to the same Node, but the \var{next} field of this Node
is not NULL, then \var{Tail} is falling behind, so $q$ has to help by performing a \cas\ to update
\var{Tail} to point to the last Node of the queue before it retries its own operation.
If \var{Head} and \var{Tail} do not point to the same Node, \Dequeue\ reads the key of the
second Node of the list and performs a \cas\ in an effort to update \var{Head} to point
to this Node. If the \cas\ is successful,
\Dequeue\ completes by returning the key that it read (and the Node from where it read this key becomes the dummy Node).
Otherwise, $q$ restarts the execution of \Dequeue.

In \MSQ,
the \var{Head} pointer always points to the first element of the list,
whereas the \var{Tail} pointer always points either to the last
or to the second last pointer of the list. This implies that whenever
the \var{next} pointer of the last element of the list changes to
point to a newly inserted Node, \var{Tail} points to the last
Node of the list.
Moreover, whenever \var{Head} is updated, \var{Tail} does not point to the
first element of the list.
%We now explain how linearization points are assigned in the queue operations.
%\remove{
These properties and the way helping is performed make it possible
to assign linearization points to the queue operations in two different ways.
A \Dequeue\ is linearized when \var{Head} is updated to point to the
list Node whose element the \Dequeue\ returns. An \Enqueue\ can be linearized
either at the point the \var{next} field of the last Node changes to
point to the newly inserted Node, or it can be linearized when the
\var{Tail} pointer changes to point to the newly inserted Node (notice
that the latter change might not be performed by the same process that initiated the \Enqueue).
Note that whenever \Dequeue\ interferes with \Enqueue,
i.e., whenever there is just one element in the queue, \Dequeue\ first
updates \var{Tail} to point to the last Node (if needed) and then performs
the deletion. In this way, \var{Head} is never ahead of \var{Tail} and
therefore the linearization point of an \Enqueue\ always precedes
the linearization point of the \Dequeue\ that deletes the element
that the \Enqueue\ inserted in the list. (Note that this is true for both ways of assigning linearization points.)
It is also not hard to prove that the list is always connected,  and the Nodes are appended at the end of the list,
and that they are extracted from the beginning of it,
in the order defined by the sequence of the linearization points assigned to \Enqueue\ operations.

Note that the way we choose to assign linearization points allows us to
determine the annotations for the executions of \MSQ\ (in a straightforward way).
Note that both linearization schemes, assign the linearization point
of an operation at the point in time that a concrete \cas\ is executed,
i.e. each linearization point is assigned at the point that an internal actions of the \MSQ\ I/O automaton
occurs. This allows us to
come up with an abstraction {\bf function} in each case. %, as well as the abstraction relation.
%}

\remove{
A \Dequeue\ is linearized when \var{Head} is updated to point to the
list Node whose element the \Dequeue\ returns. Similarly, an \Enqueue\ is linearized
when the \var{Tail} pointer changes to point to the newly inserted Node (notice
that this might not happen by the same process that initiated the \Enqueue).
This assignement is correct since whenever \Dequeue\ interferes with \Enqueue,
i.e., whenever there is just one element in the queue, \Dequeue\ first
updates \var{Tail} to point to the last Node (if needed) and then performs
the deletion. In this way, \var{Head} is never ahead of \var{Tail} and
therefore the linearization point of an \Enqueue\ always precedes
the linearization point of the \Dequeue\ that deletes the element
that the \Enqueue\ inserted in the list. It is also not hard to prove
that the list is always connected,  and the Nodes are appended at the end of the list,
as well as they are extracted from the beginning of it,
in the order defined by the sequence of the linearization points assigned to \Enqueue\ operations.
}

\Y{
In the rest of this section, we choose the second way to assign linearization points (call this
linearization scheme $L_2$).
If $C_0$ is the initial configuration of \MSQ,
then $F(\hat{C}_0)$ is an initial state where the queue does not contain any element.
For every execution $\alpha$ and every configuration $C$ of $\alpha$,
$F(\hat{C})$ is the state of the queue that results when all \enqueue\ and \dequeue\
operations that have been linearized by $C$ occur starting from an empty queue
in the order determined by their linearization points.
}

Figures~\ref{fig:msqueue-query}  and~\ref{fig:msqueue-enqdeq}
show how to implement two kinds of read-only queries on top of \MSQ\ using \vCAS\ objects.
The first, called \peekHeadandTail, returns the values of the first and the last element
in the queue. The second implements \scan, i.e., it returns a set containing the keys of all queue Nodes.

To implement these queries, we have to perform the simple changes to \MSQ\ described in Section~\ref{sec:query}.
We call the resulting algorithm \VMSQ.
\Eric{I think the following sentence is confusing because we said \vCAS\ objects store a value.  This sentence is confusing the value that is stored with how the \vCAS\ is implemented.}
In \VMSQ, \var{Head} and \var{Tail} are \vCAS\ objects storing references to VNode objects whose \var{val} field
points to the first and the last Node of the list, respectively.
Similarly, the \var{next} field of each Node is a \vCAS\ object storing a reference to a VNode object
that  contains the pointer to the next Node in the queue.
The code for \Enqueue\ and \Dequeue\ remains unchanged but every read to
\var{Head}, \var{Tail} or to the \var{next} field of a Node has to be replaced with an invocation of \vrd\ (on the same object).
Similarly, every \cas\ on each of these objects, has to be replaced with a \vcas\ (on the same object
with the same old and new values). We remark that all \vCAS\ objects
are associated with
a single \snapshot\ object $S$.  A \takess\ is invoked on this \snapshot\ object at the
beginning of every query.
% same counter pointed to by $TS$ on which a \compinc\ is invoked
%at the begining of the execution of every query.

\PeekHeadandTail\ starts by %reading the current value of the counter pointed to by $TS$
performing a \takess\
and storing the resulting handle into a local variable \var{ts}.
Finally, it executes \readss(\var{ts})\ to read both \var{Head} and \var{Tail}
and returns the values it read.

A \scan\ first executes \takess\ to get a handle \var{ts}, and also reads \var{Head} and \var{Tail} using \readss(\var{ts}). Then, it executes a while loop to traverse the list
starting from the node pointed to by the value read in \var{Head} until the node pointed to by the value
read in \var{Tail}. It uses a set to collect the pointers of the nodes it traverses (other than the first one) and returns this set
at the end. On each node, it calls \readss(\var{ts})\ to move to the next node. This ensures that updates
that occured in the list after the point that the global timestamp was increased will not be included in the set.

\Y{
In \VMSQ, linearization points to \enqueue\ and \dequeue\ operations are assigned as in the $L_2$ scheme.
A query is linearized at the point that the \takess{} it invokes is linearized.
Note that the queries would be implemented differently in case the abstraction function was
determined based on $L_1$. Specifically, then both \peekHeadandTail\ as well as \scan, would have to
read the \var{next} field of the node pointed to by \var{Tail} and if its not NULL, adjust their responses
based on which is the actual last node in the list.
}

\remove{Figure~\ref{submitted figure by email}(a) shows the shared state maintained by \VMSQ,
after \enqueue(3) and \enqueue(10) have been applied onto an empty queue
(containing only the dummy node which has key value equal to $\bot$).
We have assumed that the initial value of the counter pointed to by $TS$
is 0 and that no query is performed until these two \enqueue\ operations
have been completed. Figure~\ref{submitted figure by email}(b) assumes that
then a \scan\ query starts by some process $q$, which reads 0 in the counter and sets the value of the counter to 1.
Then, \enqueue(10) and a \dequeue\ take place. We have assumed that the timestamp of all \vcas\
executed by these two operations is 1.
Note that the \scan\ uses the value 0 as its timestamp and therefore it will return $\{3,10\}$.
By the way we assign linearization points, we conclude that this response is correct.}

\remove{
\vspace*{.1cm}
\noindent
{\bf Optimizations.} We remark that \VMSQ\ would be correct, even if we do not implement
the \var{next} field of a Node as a \vCAS\ object, i.e., even if the \var{next} field
of each node is a simple pointer to the next Node and standard \cas\ is used to update it.
\Youla{Let me know, if you would like me to ellaborate more on this part.}
}
%In this case though, we have to linearize \enqueue\ at the point where the \var{Tail}
%pointer is updated to point to the node inserted by the \enqueue. Linearizing
%an \enqueue\ at an earlier point would not be correct, as query operations
%have no way to know of a node that has been appended in the list
%before the \var{Tail} pointer is updated to point to it, given that
%changes to the \var{next} field of Nodes are unversioned (whereas
%changes to \var{Head} and \var{Tail} are versioned).

%\Youla{In my opinion, only material related to the initialization staff is now missing. I think that this material (if needed)
%should be added later, after previous Sections refer and address the initialization issue.}

\remove{
\here{Youla: The way linearization points are assigned to the queue operations of the \MSQ\ algorithm
in some textbooks, e.g. in the book by Herlihy and Shavit, but also (and possibly more significantly)
in the original paper by Michael and Scott
is the following: "An enqueue takes effect when the allocated node is linked to the last node in the linkedlist.
A dequeue takes effect when Head swings to the next node." This assignment does not work in our scheme
due to the way we implement queries. If for a scan we start from Head and traverse the list
until we see a NULL pointer, I think it would work. However, by having scan reading Tail,
and given that Tail might be falling behind, we may miss the last node which could have been appended
with timestamp 0 without performing the update of Tail before the scan reads Tail's value.
However, if we implement peekEndPoints by traversing the entire list to discover which is the
last node, then it seems that some of the claimed bounds would not be correct. Finally, a query could
read Tail, check whether it is falling behind, and if it does either help or manually include the single
node that it may miss (in case this node is relevant in terms of its timestamp). However, that would require the programmer
to know all the details of the algorithm. }
%I think we need to discuss this more and see whether this influences our claims in previous sections. }
}

\begin{figure*}[t]
\begin{minipage}[t]{.5\textwidth}
\begin{lstlisting}[linewidth=.99\columnwidth, numbers=left, frame=none]
// View both ends of the queue
<Value, Value> @\hl{peekEndPoints}@() {
  Node* HNode, TNode;
  int ts = TS.takeSnapshot();
  HNode = Head.readSnapshot(ts);
  TNode = Tail.readSnapshot(ts);
  if (HNode != TNode) {
	  HNode = HNode->next.readSnapshot(ts);
    return <HNode->val, TNode->val>; }
  return <@$\bot$@, @$\bot$@>;
}

List<Node*> @\hl{SCAN}@() {
  List<Node*> Result; 	\\ initially empty
  int ts = TS.takeSnapshot();
  Node* q = Head.readSnapshot(ts);
  Node* last = Tail.readSnapshot(ts);
  while(q != last) {
    q = q->next.readSnapshot(ts);
    Result.append(q); }
  return Result;
}
\end{lstlisting}
\caption{Two Example Query Operations for \VMSQ.}
\label{fig:msqueue-query}
\end{minipage}\hfill%\hspace{.3in}
\begin{minipage}[t]{.42\textwidth}
\StartLineAt{23}%xleftmargin=5.0ex,
\begin{lstlisting}[linewidth=1.1\textwidth, numbers=left, frame=none]
class Node {
  Value val;
  @\hl{VersionedCAS}@<Node*> next;
  Node(Value v, Node* n) :
   {val = v; next = n;}
};

class Queue {
  @\hl{VersionedCAS}@<Node*> Head, Tail;
  @\hl{Camera TS;}@

  /*
  @\hl{Same code except replacing each}@
  @\hl{read of Head, Tail or the next}@
  @\hl{field of a Node with readSnapshot}@
  @\hl{and replacing each CAS on these}@
  @\hl{variables with a vCAS}@
  */
  ...

};
\end{lstlisting}
\caption{Node representation and Code updates for \Enqueue\ and \Dequeue\ in \MSQ.}
\label{fig:msqueue-enqdeq}
\end{minipage}
\end{figure*}

%\Hao{Note: There's an optimization where we use a regular LL/SC object instead of a \vllsc{} object for \var{next}. This works even in the case of \var{getContents} because in any snapshot, all nodes between head and tail have non-null next pointers.}

%\Hao{Note: The query operations weren't constructed by first describing a mapping function and then translating a sequential algorithm into a concurrent one.}

\subsection{A Versioned Concurrent BST Implementation based on the NBBST}
\label{app:EFRBexample}

\Eric{It should be much briefer.}
We start with a brief, informal description of the concurrent
binary search tree (BST) implementation provided in~\cite{EFRB10}, which we
will call \NBBST.
\NBBST\ implements a leaf-oriented tree, i.e., a tree that represents a set
whose elements are the keys stored only in the leaf nodes of the tree.
%An example of a leaf-oriented tree is shown in Figure~\ref{fig:bst_original}.
The tree is full, i.e., every internal node has exactly two children.
Moreover, the tree satisfies the following sorting property:
for every internal node $v$ with key $K$, the key of every node in the left subtree of $v$
is smaller than $K$, whereas every node in the right subtree of $v$
has key larger than or equal to $K$.

\NBBST\ supports three operations, \Insertop($k$), \Deleteop($k$), and \Findop($k$),
where $k$ is a key.
All three operations start by calling \Search($k$), a routine that searches for $k$
by following the standard BST searching algorithm. \Search\
returns a pointer $l$ to the leaf that it arrives, a pointer $p$ to its parent node,
and a pointer $gp$ to the grandparent of this leaf.
\Findop\ simply checks whether the leaf node returned by \Search\ contains key $k$.
If it does, it returns \true, otherwise \false\ is returned.

In its sequential version, \Insertop\ replaces the leaf that the \Search\ arrives at
with a BST of three nodes, two leaves
containing the key of the node pointed to by $l$ and the newly-inserted key,
and an internal node containing the larger key among the keys of the two leaves.
The replacement
is performed by switching the appropriate pointer of $p$ from $l$ to the
root of this BST. \Deleteop\ essentially performs  the inverse action:
it switches the appropriate child pointer of $gp$ from $p$ to the sibling
of the node pointed to by $l$, thus replacing a part of the tree that is comprised of three
nodes (one of which is the leaf to be deleted) with one node, namely the sibling of the node
to be deleted.

To avoid synchronization problems, \NBBST\ uses \cas\ to apply
a change to a child pointer of a node. Such a \cas\ is called a
{\em child} \cas. Morover, it flags a node when its child pointer
is to be changed, and unflags it after the change of the
child pointer has been performed. These two types of \cas\
are called {\em flag} \cas, and {\em unflag} \cas, respectively.
Thus, \NBBST\ uses flagging to ``lock'' a node (in a non-blocking manner)
whose child pointer is to be changed.
\NBBST\ marks an internal node
when the node is to be deleted. It does so by executing a {\em mark} \cas.
A marked node remains marked forever.
To implement flagging and marking, each node has a two-bit status field, which can have
one of the following four values: \Clean, \flag\ for insertion, \flag\ for deletion,
or \Mark. A  flag or mark \cas\ can succeed only if it is applied on a node whose status is \Clean.
If a flag \cas\ fails, the process that performed the flag \cas\ retries the execution of its operation
by starting it from scratch. If the operation is a \Deleteop\
and the flag \cas\ succeeds but the mark \cas\ fails, then the process
first unflags $gp$, using a {\em backoff} \cas, and then retries the execution of its operation.

%Flagging and marking are simple techniques to guarantee the correctness of \NBBST, but they
%are not enough to guarantee lock-freedom.
To ensure lock-freedom, each process executing an operation
records in an {\em Info} object all the information needed by other processes to complete the operation.
A pointer to such an object is stored together with the status field of a node (and they are manipulated
atomically). A process  $q$ that fails to flag or mark a node helps the operation
that has already flagged or marked the node to complete (by reading the necessary information in the Info object
pointed to by the status field of the node). Then, $q$ restarts its own operation. This ensures that a single operation
cannot repeatedly block another operation from making progress. %, thus guaranteeing lock-freedom.
Thus, lock-freedom is ensured.

For \NBBST, it is proved that each node a \Search($k$) visits was in the tree, on the search path for $k$,
at some time during the \Search. \Search($k$) is linearized at the point when the leaf it returns was on the search path for $k$.
The \insertop\ and \deleteop\ operations that return \false\ are linearized at the same point
as the \Search\ they perform. Every \insertop\ or \deleteop\ operation that returns \true\ has a unique successful child \cas\
and the operation is linearized at that child \cas.
\Eric{What about those that modify tree but die before returning?}

Figures~\ref{fig:vbst-insdel}  and~\ref{fig:vbst-queries}
show how  we can modify \NBBST\ to support queries using \vCAS\ objects.
We call the resulting algorithm \VBST.
They also provide pseudocode for
%two kinds of queries on top of it.
%The first is a \Find\ operation: it should now be treated as a query
%and be implemented accordingly. The second is a
\RSum($a,b$), a query that
returns the set of those keys  in the implemented set
that are larger than or equal to $a$ and smaller than or equal to $b$.

%To support queries, we have to perform the following changes to \NBBST;
%, as described in Section~\ref{sec:query}.
\Eric{Same problem here as for queue above}
In \VBST, the \var{child[LEFT]} or \var{child[RIGHT]} field of each internal Node $v$ is a \vCAS\ object storing a reference to a VNode object
that  contains the pointer to the left or right child, respectively, of $v$ in the tree.
Moreover, every read of the \var{child[LEFT]} or \var{child[RIGHT]} field of a Node is replaced
with an invocation of \vrd\ (on the same object).
Similarly, every \cas\ on each of these fields, is replaced with a \vcas\ (on the same object
with the same old and new values). We remark that the \var{status} field of a Node does not have
to be a \vCAS\ object, as queries simply ignore the flag and mark signs on the Nodes.
Notice that all \vCAS\ objects
are paired with the same \snapshot\  on which a \takess\ is invoked
at the beginning of the execution of every query.

%\Find($k$)\ starts by reading the current value of the counter pointed to by $TS$
%and storing it into a local variable \var{ts}. Then, it invokes \compinc.
%Finally, it performs the stadard search algorithm on binary search trees, but it
%performs every read to the \var{child[LEFT]} or the \var{child[RIGHT]} field of a node by invoking  \vrd(\var{ts}).
%Once it reaches a leaf, it checks to see whether $k$ is stored in this leaf. In this case,
%it returns \true, otherwise \false.
\RSum($a,b$)\ first performs a \takess.
%executes the same code as \Find\ to read the current value of the counter pointed to by \var{TS},
%and to attempt to increment it.
Then, it calls the recursive function \RSTraverse\ which traverses
%the necessary leaf nodes
part of the tree
to perform the required calculation. Pseudocode for %\Find\ and
\RSum\ is provided in Figure~\ref{fig:vbst-queries}.

In \VBST, linearization points can be assigned to \insertop\ and \deleteop\ operations  the same way
as in \NBBST. A query is linearized at the linearization point of the \takess\  it invokes on Line~\ref{line:CAI}.

\begin{figure*}[t]%xleftmargin=5.0ex, 
\begin{lstlisting}[linewidth=.99\textwidth, numbers=left, frame=none]
class InternalNode {		// subclass of Node
   Key key;
   HelpRec hr;
   array VersionedCas<Node *> child[LEFT,RIGHT]
   Node(Key k, HelpRec up, Node* lc, Node* rc) :
   {key = k; hr = up; child[LEFT] = lc; child[RIGHT] = rc;}
};

class BST {
	Node* Root;
  @\hl{Camera TS; }@

  // @\hl{Same code except replacing read of child[LEFT] or child[RIGHT] field}@
  // @\hl{of a Node with readSnapshot and CAS on them with vCAS}@
  // @\hl{appropriate initialization is necessary}@
  ...

};
\end{lstlisting}
\caption{Node representation and Code updates for \insertop\ and \deleteop\ in \NBBST.}
\label{fig:vbst-insdel}
\end{figure*}

\begin{figure*}[t]
\begin{lstlisting}[linewidth=.99\columnwidth, numbers=left, frame=none]
Integer @\hl{RangeSum}@(Key a, Key b) {
  int ts = TS.takeSnapshot();  @\label{line:CAI}@
  return RSTraverse(Root, ts, a, b); }

Integer RSTraverse(Node *nd, int ts, Key a, Key b) {
  if (nd points to a leaf node and nd->key is in [a,b]) return nd->key;
  else {
     if (a <= nd->key) {
	     return RSTraverse(nd->child[RIGHT].readSnapshot(ts), ts, a,b);  }
     else if (b < nd->key) {
	     return RSTraverse(nd->child[LEFT].readSnapshot(ts), ts, a, b); }
     else {
		  return RSTraverse(nd->child[LEFT].readSnapshot(ts), ts, a, b)  +
		      RSTraverse(nd->child[RIGHT].readSnapshot(ts), ts, a, b); }
  }
}
\end{lstlisting}
\caption{An Example Query Operation for \VBST.}
\label{fig:vbst-queries}
\end{figure*}

\remove{
Boolean @\hl{Find}@(Counter *TS, Key k) {
  Node *q;
  int ts = TS.read();
  TS.CompareAndIncrement();

  q = R;
  while (q points to an internal node) {
	  if (q->key > k)  q = q->child[LEFT].VRead(ts);
	  else  q = q->child[RIGHT].VRead(ts);
  }
  if (q->key == k)  return TRUE;
  return FALSE;
}
}

%\Hao{I added the \update{} operation because I believe previous work by Trevor Brown (range query from EBR) and Erez Petrank (concurrent iterators) are able to handle range queries when updates are not allowed. I'll need to double check if they get the same bounds as we do, but I think it ends up being close, so I wanted to show we can handle something that they can't. In the sequential setting, an \update{} operation is equivalent to doing a \del{} followed by a \ins{}, however this not the case in the concurrent setting.}

\section{Solo queries for Harris's Linked List}
\label{sec:solo-harris-ll}

Each node in the Harris linked list contains a \var{key} field and a \var{next} field.
The next field stores a pointer that is potentially marked, indicating that the node containing this next field has been logically deleted.
A state of the Harris linked list consists of a set of nodes and a pointer to the first node in the linked list.

% \Hao{There is a lot of repetition between this section and the previous section. Ideally, the proof template described below should be moved above and both section should use the same proof template.}

The technique we use from constructing \solo{} is similar to the technique we applied to the \NBBST{} in Section \ref{sec:solo-bst}.
Just like the \NBBST{}, Harris's linked list implements an ordered set ADT.
First, we define $F_A$ to be the standard mapping from sequential linked lists to ordered sets which basically maps a linked list to the set of keys that appear in its nodes.
If $q_I$ is a read-only, sequential linked list operation implementing the abstract query $q_A$, then by the correctness of $q_I$, we know that $q_I(S) = q_A(F_A(S))$ for all sequential states $S$.
Just like in Section \ref{sec:solo-bst}, we use $q_I(S)$ to denote the return value of $q_I$ when run on state $S$.
Next, we define a mapping $F_I$ from concurrent states to sequential linked list states such that $F = F_A \circ F_I$ is an abstraction function.
For each sequential, read-only linked list operation $q_I$ implementing the abstract query $q_A$, we show how to modify $q_I$ into a read-only operation $q$ for the Harris linked list such that for all reachable concurrent states $C$, $q(C) = q_I(F_I(C))$.
%This step was trivial for the \NBBST{} because $q$ was the same as $q_I$. For the Harris linked list, we have to make some modifications to $q_I$.
Once we have these facts, it is fairly straight-forward to complete the proof.
From the equalities we've proven $q(C) = q_I(F_I(C)) = q_A(F_A(F_I(C))) = q_A(F(C))$ for all reachable concurrent states $C$, so by Observation \ref{obs:proof-tech}, $q$ can be added to Harris's linked list as a \solo{} query.
% Let $Q_I$ be some set of sequential linked list operations.
% After completing this argument for each $q_I \in Q_I$, we can use Observation

In Harris's original paper \cite{Harris01}, he linearizes successful \var{insert} operations when the node being inserted gets connected to the data structure and successful \var{delete} operations when the node being deleted gets marked for deletion (i.e. when the node gets logically deleted, not physically deleted).
We need to define $F_I$ so that the abstraction function $F_A \circ F_I$ is consistent when these linearization points.
To compute $F_I(C)$, we physically delete (i.e. unlink) all the logically deleted nodes from $C$ and return the resulting linked list as the sequential state.
Using an  argument similar to the proof of Proposition \ref{prop:bst-abstract}, we can show that $F = F_A \circ F_I$ is an abstraction function for Harris's linked list.

Now we show how to transform $q_I$ into $q$.
Let $q_I$ be a read-only, sequential linked list operation implementing the abstract query $q_A$.
To construct $q$, whenever $q_I$ reads the next pointer of a node, change it to call the \var{getNext} function implemented in Figure \ref{fig:getNext}.
This function basically skips over any marked nodes and returns the next unmarked node.
This effectively ignores logically deleted nodes.
Therefore, running $q$ on a reachable concurrent state $C$ has the same effect as running $q_I$ on $F_I(C)$.
Plugging all this into the proof framework we specified earlier shows that $q$ can be added to Harris's linked list as a \solo{} query.

\begin{figure*}[t]
\begin{lstlisting}[linewidth=.99\columnwidth, numbers=left, frame=none]
struct Node { Key key; Node* next; }

Node* getNext(Node* node) {
  Node* n = node->next;
  while(n != NULL && is_marked(n->next))
    n = unmark(n->next);
  return n; }
\end{lstlisting}
\caption{Implementation of the \var{getNext} operation for Harris Linked List.}
\label{fig:getNext}
\end{figure*}

% \Hao{Add something about the time bounds of a scan operation on Harris's linked list.} 
\begin{figure*}
	\begin{minipage}[t]{.46\textwidth}
		\begin{lstlisting}[linewidth=.99\columnwidth, numbers=left, frame=none]
using @\hl{Value = Node*}@; @\codelineskip@
class Camera {
	int timestamp;
	Camera() { timestamp = 0; }   @\label{line:ss-con-opt}@@\codelineskip@
	int takeSnapshot() {
	// same as in VerCAS}@\codelineskip@
@\hl{Value invalidNextv = new Node();}@
int TBD = -1; @\codelineskip@
class Node {
	/* other fields of
	Node struct of D */
	...
	@\hl{Value nextv;}@ // initially invalidNextv
	@\hl{int ts;}@      // initially TBD }; @\codelineskip@
class OptVersionedCAS {
	@\hl{Value Head;}@
	Camera* S; @\codelineskip@
	OptVersionedCAS(Value v, Camera* s) {
		S = s;
		@\hl{Head = v;}@
		@\hl{if(Head != NULL)}@ {
			@\hl{initNextv(Head);}@
			initTS(Head);  } }@\codelineskip@
	void @\hl{initNextv}@(Value n) {
		if(n->nextv == invalidNextv)
			CAS(&n->nextv, invalidNextv, NULL);}@\codelineskip@
	void initTS(@\hl{Value n}@) {
		if(n->ts == TBD) {
			int curTS = S->timestamp;
			CAS(&n->ts, TBD, curTS); } }
		\end{lstlisting}
	\end{minipage}\hspace{.3in}
	\begin{minipage}[t]{.46\textwidth}
		\StartLineAt{31}%xleftmargin=5.0ex,
		\begin{lstlisting}[linewidth=.99\textwidth, numbers=left,frame=none]
	Value OptreadSnapshot(int ts) {
		@\hl{Value node = Head;}@
		@\hl{if(node != NULL)}@ initTS(node);
		while(@\hl{node != NULL}@ && node->ts > ts)
			node = node->nextv;
		return @\hl{node}@; }@\codelineskip@
	Value OptvRead() {
		@\hl{Value head = Head;}@
		@\hl{if(node != NULL)}@ initTS(head);
		return @\hl{head}@; }@\codelineskip@
	bool OptvCAS(Value oldV, Value newV) {
		@\hl{Value head = Head;}@
		@\hl{if(node != NULL)}@ initTS(head);
		if (head != oldV) return false;
		if (newV == oldV) return true;
		@\hl{CAS(\&n->nextv, invalidNextv, NULL);}@
		if(CAS(&Head, head, newV)) {
			initTS(newV);
			return true; }
		else {
			// Head can't be NULL
			initTS(Head);
			return false; } } };
		\end{lstlisting}
	\end{minipage}
	\caption{Linearizable implementation of a \vCAS{} object without indirection. Major differences between this and Figure \ref{fig:vcas-alg} are highlighted.
		%\Yihan{in vCAS why don't we compare oldV and head?}
		%\Hao{Good point, should be fixed now}
	}
	\label{fig:vcas-direct-alg}
\end{figure*} 

\section{Direct \vCAS\ Algorithm}\label{sec:opt-appendix}

Here we consider the correctness and give pseudocode for the
optimization that avoid indirection.
Consider two \vCAS\ objects $O_1$ and $O_2$.
Roughly speaking, 
%We refer to $\alpha$ in what follows.
%Our first lemma proves that Assumption \ref{asm:opt} holds
since all nodes are recorded-once, every \vcas\ operation on any \vCAS\ 
object stores a distinct value. This means that every time a \vcas\
is executed (on any \vCAS\ object), it writes a pointer to a newly allocated node. Thus, the only
way for a node $nd$ other than the last in the version list of  $O_1$,
to appear in the version list of $O_2$,
is if a pointer to $nd$ were used as the initial value of $O_2$. Hence $nd$ is the last
node in $O_2$'s version list.
We argue that no invocation of \ovrsnap\ on a \vCAS\ object $O$
traverses the \var{nextv} pointer of the last node in the version list of $O$.
These imply that the version lists
of \vCAS\ objects behave as if they are disjoint.
In particular, we never have to store \var{nextv} pointers for two different version lists in the same node.
Based on the above arguments, we can prove that $D_{opt}$ is linearizable. We present pseudo-code for optimized \vCAS{} objects in Figure \ref{fig:vcas-direct-alg}.

\end{document}